\documentclass[twocolumn]{IEEEtran}
\usepackage{amssymb,amsfonts,amsmath,amsthm}
\usepackage[dvips]{graphicx}
\usepackage{overpic}
\usepackage{subfigure}
\usepackage{color}
\usepackage{cases}
\usepackage{enumerate}
\usepackage{framed}
\definecolor{shadecolor}{rgb}{0.92,0.92,0.92}
\usepackage{listings}
\usepackage{graphicx}
\usepackage{float}
\usepackage{mathrsfs}
\usepackage[top=1.0in, bottom=1.08in, left=0.75in, right=0.75in]{geometry}
\usepackage{pgf}
\usepackage{tikz}
\usetikzlibrary{arrows, automata, positioning, calc, shapes}
\usepackage{comment}
\usepackage[numbers,sort&compress]{natbib}
\usepackage{caption}
\usepackage{hyperref}
\usepackage{bm}

\usepackage[ruled,linesnumbered]{algorithm2e}

\newtheorem{theorem}{Theorem}

\newtheorem{example}{Example}
\newtheorem{definition}{Definition}
\newtheorem{lemma}{Lemma}
\newtheorem{remark}{Remark}

\newtheorem{problem}{Problem}

\newcommand{\x}{\boldsymbol{x}}

\newcommand{\y}{\boldsymbol{y}}
\newcommand{\z}{\boldsymbol{z}}

\newcommand{\C}{\mathcal{C}}

\usepackage{comment}

\begin{document}

\title{Sequence Reconstruction for Substitution Channel: New Sufficient Conditions and Algorithms}

\author{\textbf{Chen~Wang}, \textbf{Eitan~Yaakobi}, \IEEEmembership{Senior member,~IEEE}, and \textbf{Yiwei~Zhang}, \IEEEmembership{Member,~IEEE}
\thanks{Chen Wang and Yiwei Zhang were supported in part by National Key Research and Development Program of China under Grant Nos. 2022YFA1004900 and 2021YFA1001000, in part by National Natural Science Foundation of China under Grant No. 12231014, and in part by Taishan Scholars Program. The research of Chen Wang was also supported in part at the Technion by a fellowship from the Lady Davis Foundation. The work of Eitan Yaakobi was Funded by the European Union (ERC, DNAStorage, 101045114 and EIC, DiDAX 101115134). Views and opinions expressed are however those of the authors only and do not necessarily reflect those of the European Union or the European Research Council Executive Agency. Neither the European Union nor the granting authority can be held responsible for them. Part of this paper \cite{wang2024how} has been presented in ITW2024. Corresponding author: Yiwei Zhang.}
\thanks{Chen Wang and Yiwei Zhang are with State Key Laboratory of Cryptography and Digital Economy Security, Key Laboratory of Cryptologic Technology and Information Security of Ministry of Education, School of Cyber Science and Technology, Shandong University, Qingdao, Shandong, 266237, China (e-mail: \href{mailto:cwang2021@mail.sdu.edu.cn}{cwang2021@mail.sdu.edu.cn}, \href{mailto:ywzhang@sdu.edu.cn}{ywzhang@sdu.edu.cn}).}
\thanks{Eitan~Yaakobi is with the Department of Computer Science, Technion --- Israel Institute of Technology, Haifa 3200003, Israel (e-mail: \href{mailto:yaakobi@cs.technion.ac.il}{yaakobi@cs.technion.ac.il})}
}

\maketitle

\begin{abstract}
In the \emph{sequence reconstruction problem}, a codeword $\x$ is transmitted through several identical channels where each channel produces a noisy read of $\x$, and the problem is to analyze how to uniquely reconstruct $\x$ based on these noisy reads. Levenshtein has studied the minimum number of reads which guarantees unique reconstruction of $\x$, which is one sufficient condition for unique reconstruction. In this paper, we move on to a different perspective and propose a new framework for unique reconstruction. Our new sufficient condition for unique reconstruction takes both the number of reads and the distances among the reads into consideration. We offer both theoretical analysis and corresponding efficient reconstruction algorithms for our reconstruction framework.
\end{abstract}

\IEEEpeerreviewmaketitle

\allowdisplaybreaks

\section{Introduction}
Assume a codeword $\x$ from a given code $\mathcal{C}$ is transmitted via several identical channels, where each channel produces a noisy read of $\x$. The \emph{sequence reconstruction problem}, first proposed in 2001 by Levenshtein~\cite{levenshtein2001efficient}, asks for how to uniquely reconstruct $\x$ in the worst case based on these multiple noisy reads. In particular, when the number of reads is only one, then unique reconstruction can be guaranteed if and only if $\mathcal{C}$ is chosen as an classical error-correcting code for this channel.
Recently, the sequence reconstruction problem gained renewed interest, mostly due to the development of DNA storage~\cite{church2012next,goldman2013towards,kosuri2014large,yazdi2015dna,yazdi2017portable,organick2018random,anavy2019data,choi2019high}. In current DNA storage techniques, each DNA string is amplified into a large number of copies via PCR, and then multiple noisy copies of the same string, after sequencing and clustering, could be read. Using these reads to reconstruct the original DNA string is exactly the sequence reconstruction problem.

In his seminal work~\cite{levenshtein2001efficient}, Levenshtein provided a proof of the fact that the minimum number of distinct reads which guarantees unique reconstruction is $N+1$, where $N$ is the maximum size of the intersection of any two error balls centered at two distinct codewords from $\mathcal{C}$. Here, the error ball centered at a codeword $\x$ refers to the set of all possible outputs of the channel when transmitting $\x$. The value $N$ is usually referred to as the {\it unique reconstruction threshold}. Levenshtein calculated this value for various channels with a single type of error, such as substitutions, transpositions, asymmetric errors, deletions, insertions, etc.~\cite{levenshtein2001efficient}.
Following Levenshtein, most works on sequence reconstruction are devoted to the combinatorial problem of calculating $N$, for given channels and codes $\mathcal{C}$, and then design corresponding efficient reconstruction algorithms when the number of reads exceeds the unique reconstruction threshold \cite{sala2017exact, gabrys2018sequence,horovitz2018reconstruction,konstantinova2007reconstruction,konstantinova2008reconstruction,abu2021levenshtein,abu2021list,chrisnata2020optimal,chrisnata2021correcting,cai2021coding,goyal2022sequence}.

While studying the unique reconstruction threshold is of its own combinatorial interest, the aim of this paper is to seek new sufficient (and even necessary, if possible) conditions other than Levenshtein's threshold, for unique reconstruction. There are at least three motivations. First, the threshold value $N$ is usually at least a polynomial of the codeword length $n$, and thus becomes impractically large as $n$ grows.  Second,  Levenshtein's threshold is indeed a worst-case analysis for the reconstruction problem (in the sense that we must collect enough reads for reconstruction), and it is natural to consider the following problem: When the number of reads is less than the threshold, is there any way to answer efficiently whether unique reconstruction is possible? And if so, what are the additional desired properties which could guarantee unique reconstruction? Finally, the number of reads for a given string is not a mechanism we can completely control in current DNA sequencing technology (It is possible to control how many reads in total are going to be sampled, but not how many reads we will collect for each strand~\cite{sabary2024survey}), and thus we should analyze whether unique reconstruction is possible when we read less than the threshold for some strands.

In this paper, we focus on the binary substitution channel and propose a new sufficient condition for unique reconstruction, which takes both the number of reads and the distances among the reads into consideration. Informally, our framework is based on the following observation: If a set of reads lie in the intersection of two error balls centered at two distinct codewords, then the distribution of these reads should be relatively concentrated. Our main contributions and some highlights are summarized as follows:
\begin{itemize}
    \item For two codewords $\x$ and $\y$ of length $n$ with minimum distance $d$, and the substitution channel with at most $t\geq \lceil \frac{d}{2} \rceil$ errors, we propose the concept of $D(n,m,t,d)$ to be the maximum value of the sum of the pairwise distances among $m$ distinct sequences in $B_t(\x)\cap B_t(\y)$. Detailed theoretical analysis on $D(n,m,t,d)$ is given.
    \item We propose a new sufficient condition for unique reconstruction: The summation of pairwise distances among a given set of $m$ reads exceeds the value $D(n,m,t,d)$. In other words, such a set of reads cannot lie in the intersection of two error balls centered at two distinct codewords, and thus unique reconstruction is theoretically guaranteed. Note that compared with Levenshtein's reconstruction threshold, our condition is a function of the actual reads and could work even if the number of reads is way less than Levenshtein's threshold. 
    \item We also provide reconstruction algorithms corresponding to our new condition. In the algorithms we discuss how to reconstruct the original sequence based on a set of reads triggering the condition (referred to as a triggering set), as well as how to find such a triggering set among a large set of reads. For the latter there are two interesting cases: either we may find a triggering subset among a large non-triggering set of reads, or we may transform a non-triggering set into a triggering multi-set.  
\end{itemize}


It is worth mentioning some related works. One related parallel work~\cite{papadopoulou2024on} assumes that the noisy reads are received sequentially and the decoder applies the majority decoding algorithm, and then the expected number of reads until successful reconstruction is studied. Several papers have considered the list reconstruction problem \cite{yaakobi2019uncertainty,Junnila2021On,abu2021list,Junnila2024The}, in which the decoder only needs to generate a list with a predetermined size containing the correct transmitted codeword. In particular, in ~\cite{Junnila2024The, Junnila2021On} the authors also considered the distance between the reads, and analyzed how to use two distinct reads with large distance to obtain a (short) list of candidate codewords. Meanwhile, the sequence reconstruction problem has also been considered in slightly different settings. The trace reconstruction problem is a variant which considers probabilistic channels (where each bit is erroneous with a certain probability) instead of combinatorial channels (where there is an upper bound on the number of errors)~\cite{cheng2025k,rubinstein2022average,grigorescu2022limitations,sima2021trace,batu2004reconstructing,cheraghchi2020coded}. Another related line of work studies reconstruction from multiple noisy views, in which the same underlying sequence is observed through several independent stochastic channels, and the goal is to characterize how these views jointly reduce uncertainty and improve recoverability (see \cite{hellman1970probability,kanaya1995asymptotics,mitzenmacher2006theory,land2006information,rameshwar2024information}).

The rest of the paper is organized as follows. Section \ref{sec:pre} introduces the relevant notations and the general framework of our new sufficient condition for unique reconstruction. In Section \ref{sec:main} we present the theoretical analysis for the parameters in our sufficient condition, and the corresponding reconstruction algorithm is presented in Section \ref{sec:alg}. Section \ref{Sec:Dis} contains more discussions about the difficulty to characterize equivalent conditions for unique reconstruction. Finally Section \ref{sec:concl} concludes the paper with a list of open problems.

\section{Preliminaries and Basic Results}\label{sec:pre}
For integers $m < n$, let $[m, n] = \{m, m+1, \ldots, n\}$ and denote $[1, n]$ as $[n]$ for short.
For any set $A$, let $|A|$ be the size of $A$, and for any $a\in [0, |A|]$, let $\binom{A}{a}$ be the family of all subsets of $A$ of size $a$. We use $\{\{\cdot\}\}$ to denote a multiset.
Let $\Sigma_2 = \{0, 1\}$ be the binary alphabet and $\Sigma_2^n$ be the set of sequences of length $n$ over $\Sigma_2$. For two sequences $\x$ and $\y$, let $\x\y$ be the concatenation of $\x$ and $\y$, and let $\x^n$ be the concatenation of $n$ copies of $\x$ itself. Specifically, $\x^0$ is the empty string $\boldsymbol{\epsilon}$.
For two sequences $\x = (x_1, x_2, \ldots, x_n)$ and $\y = (y_1, y_2, \ldots, y_n)$, let $d_H(\x, \y)$ be the Hamming distance between $\x$ and $\y$, i.e., $d_H(\x, \y) = |\{i\in [n]:x_i\neq y_i\}|$. Let $B_t(\x)$ be the Hamming ball of radius $t$ centered at $\x$, i.e., $B_t(\x) = \{\y\in\Sigma_2^n: d_H(\x, \y)\leq t\}$.
A code $\C$ is a subset of $\Sigma_2^n$, and the minimum Hamming distance of a code $\C$ is the minimum Hamming distance between any two distinct codewords in $\C$. For $\x = (x_1, x_2, \ldots, x_n)\in\Sigma_2^n$ and $S\subseteq[n]$, let $\x_S$ be the projection of $\x$ onto the coordinates $S$. 

The sequence reconstruction problem for substitution errors was first proposed by Levenshtein~\cite{levenshtein2001efficient}. Let $\C$ be a code in $\Sigma_2^n$ with minimum Hamming distance $d$. Assume that a codeword $\x\in\C$ is transmitted through several substitution channels, where each channel can cause at most $t$ substitution errors, i.e., the output set of the noisy reads is a subset of $B_t(\x)$. When $t<\left\lceil \frac{d}{2} \right\rceil$, by the property of error-correcting codes one can decode $\x$ from any single output in $B_t(\x)$. Thus, in the reconstruction problem, we always focus on the case $t\geq \left\lceil \frac{d}{2} \right\rceil$. Furthermore, in Levenshtein's reconstruction problem, it is assumed that the channels produce distinct outputs. In this paper, we adhere to these two assumptions unless otherwise stated. The goal is to reconstruct $\x$ by multiple distinct noisy copies. In \cite{levenshtein2001efficient}, it was shown that if the number of outputs is at least $N(n, t, d) + 1$, where
$$N\left(n, t, d\right) \triangleq \max_{\x\neq\y\in\Sigma_2^n, d_H\left(\x, \y\right) = d}\left\{|B_t\left(\x\right)\cap B_t\left(\y\right)|\right\},$$
then unique reconstruction of the original sequence is guaranteed. It has been proved by Levenshtein in \cite{levenshtein2001efficient} that
$$N\left(n, t, d\right)=\sum_{i=0}^{t-\left\lceil\frac{d}{2}\right\rceil}\binom{n-d}{i} \sum_{h=d-t+i}^{t-i}\binom{d}{h},$$
and we refer to this value as the {\it unique reconstruction threshold}. 

A common reconstruction algorithm in the sequence reconstruction problem for the substitution channel is the majority decoding algorithm, defined as follows.

\begin{definition}[The majority decoding algorithm]
Given a set of reads $\{\z_1,\dots,\z_m\}\subseteq B_t(\x)$, for each coordinate $i\in[n]$, let $n_{i,1}$ be the number of reads whose $i$-th coordinate equals 1, and let $n_{i,0}$ be the number of reads whose $i$-th coordinate equals 0. If $n_{i,1}>n_{i,0}$, then we decode $x_i$ as 1. If $n_{i,1}<n_{i,0}$, then we decode $x_i$ as 0. If $n_{i,1}=n_{i,0}$, then we decode $x_i$ as 1 or 0 arbitrarily.
\end{definition}

It can be easily checked that when $\mathcal{C}$ is the whole space $\Sigma_2^n$ (then $d = 1$), given a set of reads $\{\z_1,\dots,\z_m\}\subseteq B_t(\x)$ with $m=N(n, t, 1) + 1$, the reconstruction of $\x$ can be simply done by the majority decoding algorithm, since for each coordinate the majority decoding algorithm produces the correct value. The majority decoding algorithm and its variations have been widely used in the reconstruction of substitution errors~\cite{Junnila2024The,yaakobi2019uncertainty,abu2021levenshtein}.

Reading $N(n, t, d) + 1$ distinct noisy copies is one sufficient condition for unique reconstruction in the worst case. However, the value $N(n, t, d)$ becomes impractically large as $n$ grows. When the number of reads is less than the unique reconstruction threshold, is it possible to characterize some other sufficient (and even necessary, if possible) conditions for unique reconstruction? Say we transmit a codeword $\x \in \mathcal{C}$ and have a set of reads $\{\z_1,\dots,\z_m\}\subseteq B_t(\x)$. A straightforward way is to check whether $\bigcap_{i = 1}^{m}B_t(\z_i)\cap \C$  contains exactly one codeword. However, this method is highly inefficient as the size of each Hamming ball is a polynomial of $n$. It is desirable to have simpler conditions for unique reconstruction. To the best of our knowledge, this type of problem is rarely considered and seems rather non-trivial. In this paper, we initiate the research on this problem, by proposing a sufficient condition which takes both the number of reads and the distances among the reads into consideration. The key idea originates from the observation of the following extreme case, when three reads are already enough for unique reconstruction.

\begin{lemma}\label{lem:3reads}
    For positive integers $t$ and $n$, let $\x$ be a sequence in $\Sigma_2^n$ and $\{\z_1, \z_2, \z_3\}\subseteq B_t(\x)$ be such that $d_H(\z_1, \z_2) = d_H(\z_2, \z_3) = d_H(\z_3, \z_1) = 2t$. Then $\x$ can be uniquely determined by the majority decoding algorithm.
\end{lemma}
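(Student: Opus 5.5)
The plan is to show that the three reads force a very rigid structure around $\x$, and that this structure makes the majority vote correct in every coordinate. Write $S_i=\{j\in[n]: (\z_i)_j\neq x_j\}$ for the error set of $\z_i$, so that $|S_i|\le t$ and $d_H(\z_i,\z_k)=|S_i\triangle S_k|$ for $i\neq k$. This holds because, in the binary alphabet, $\z_i$ and $\z_k$ differ in coordinate $j$ exactly when precisely one of them differs from $x_j$.

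First, use the triangle inequality. Since
$$2t=d_H(\z_i,\z_k)\le d_H(\z_i,\x)+d_H(\x,\z_k)\le 2t,$$
every inequality is tight. Hence $|S_i|=|S_k|=t$ and $|S_i\triangle S_k|=|S_i|+|S_k|$, which means $S_i\cap S_k=\emptyset$. Applying this to all three pairs shows that $S_1,S_2,S_3$ are pairwise disjoint sets, each of size exactly $t$.

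Next, check the majority vote coordinate by coordinate. By disjointness, each coordinate $j$ lies in at most one $S_i$. So at least two of the three reads agree with $x_j$, and at most one disagrees. With three reads a tie is impossible, so majority decoding outputs exactly $x_j$ for every $j$, and hence reconstructs $\x$.

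For uniqueness, the same argument applies verbatim to any $\y$ with $\{\z_1,\z_2,\z_3\}\subseteq B_t(\y)$: the majority output is $\y$. Since the majority output is a function of the reads alone, $\y=\x$. Thus $\x$ is the unique center, and it is recovered by majority decoding even with $\C=\Sigma_2^n$.

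The main (mild) obstacle is the tightness step: one must note that equality in $|S_i\triangle S_k|\le|S_i|+|S_k|$ forces disjointness, and this is what reduces the argument to a simple count per coordinate.
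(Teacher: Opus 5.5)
Your proposal is correct and follows essentially the same approach as the paper: the error sets $S_1,S_2,S_3$ must each have size exactly $t$ and be pairwise disjoint, so at least two reads agree with $\x$ in every coordinate and the majority vote returns $\x$. Your explicit triangle-inequality justification of that rigidity, and your remark that any other center $\y$ would also be the majority output (hence $\y=\x$), spell out details the paper leaves implicit.
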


\begin{IEEEproof}
    Let $S_{i}$ be the coordinates where $\z_i$ and $\x$ differ, for $i\in[3]$. Since $\{\z_1, \z_2, \z_3\}\subseteq B_t(\x)$ and the Hamming distance between any two of them is $2t$, then it holds that $|S_i| = t$ for $i\in[3]$ and the three sets must be pairwise disjoint. Therefore, on each coordinate, at least two out of $\{\z_1, \z_2, \z_3\}$ agree with $\x$. Thus, using the majority decoding algorithm on $\{\z_1, \z_2, \z_3\}$ will result in $\x$.
\end{IEEEproof}

Although this special case may seem trivial, it can be easily shown that when $t$ is a fixed constant, $n$ is large enough, and the three reads are uniformly distributed in the ball $B_t(\x)$, the condition of Lemma \ref{lem:3reads} holds with high probability.

\begin{lemma}\label{lem:prob}
    Let $\x\in\Sigma_2^n$ and let $\{\z_1, \z_2, \z_3\}$ be a random set uniformly chosen from $\binom{B_t(\x)}{3}$, where $t$ is a given constant. When $n$ is large enough, we have that
    \begin{small}
    \begin{equation*}
        \mathrm{Pr}\left[d_H\left(\z_1, \z_2\right)\!=\! d_H\left(\z_2, \z_3\right)\!=\!d_H\left(\z_3, \z_1\right)\!=\!2t\right] 
        = 1-\Theta(n^{-1}).
    \end{equation*}
    \end{small}
\end{lemma}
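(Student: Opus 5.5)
We argue by complementary counting on error supports. Write each read as $\z_i = \x + \boldsymbol{e}_i$ and let $S_i \subseteq [n]$ be the support of $\boldsymbol{e}_i$, so $|S_i| \le t$. Since $d_H(\z_i,\z_j) = |S_i \triangle S_j| \le |S_i|+|S_j| \le 2t$, the event $d_H(\z_i,\z_j)=2t$ for all pairs holds exactly when $|S_1|=|S_2|=|S_3|=t$ and the three supports are pairwise disjoint. Choosing a uniform $3$-subset of $B_t(\x)$ is therefore the same as choosing a uniform $3$-subset of $\{S\subseteq[n]: |S|\le t\}$. Put
\begin{equation*}
V \triangleq |B_t(\x)| = \sum_{i=0}^{t}\binom{n}{i} = \binom{n}{t}\left(1+\Theta(n^{-1})\right),
\end{equation*}
where the error term comes from $\binom{n}{t-1}/\binom{n}{t} = t/(n-t+1) = \Theta(n^{-1})$ for constant $t$.

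\textbf{Favorable count.} The number of favorable unordered triples is the number of unordered triples of pairwise disjoint $t$-subsets of $[n]$:
\begin{equation*}
F = \frac{1}{6}\binom{n}{t}\binom{n-t}{t}\binom{n-2t}{t}.
\end{equation*}
We have $\binom{n-t}{t}/\binom{n}{t} = \prod_{j=0}^{t-1}\frac{n-t-j}{n-j} = 1-\frac{t^2}{n}+O(n^{-2})$, and similarly for $\binom{n-2t}{t}/\binom{n}{t}$. Hence $F = \frac{1}{6}\binom{n}{t}^3\left(1-\Theta(n^{-1})\right)$. The total count is
\begin{equation*}
\binom{V}{3} = \frac{V^3}{6}\left(1-O(V^{-1})\right) = \frac{1}{6}\binom{n}{t}^3\left(1+\Theta(n^{-1})\right).
\end{equation*}
Here $V^{-1} = O(n^{-t})$, which is negligible.

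\textbf{Ratio.} Dividing gives $F/\binom{V}{3} = 1 - \Theta(n^{-1})$.

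The main obstacle is the lower-order $\Theta$ claim. We need the probability of failure to be genuinely of order $n^{-1}$, not smaller. This requires the two $1/n$ corrections to add rather than cancel. The loss from intersecting supports, about $3t^2/n$ relative to $\binom{n}{t}^3$ (three pairs, each $t^2/n$), pushes $F$ down. The surplus in $V$ from smaller balls, about $3t/n$ (since $V^3 \approx \binom{n}{t}^3(1+3t/n)$), pushes the denominator up. Both reduce the ratio, so the failure probability is about $(3t^2+3t)/n = \Theta(n^{-1})$ for fixed $t \ge 1$.

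For a fully rigorous version, we would bound the failure event directly from both sides:
\begin{itemize}
\item Failure occurs if some read has weight less than $t$. This happens with probability at most $3\binom{n}{t-1}/V = O(n^{-1})$.
\item Failure occurs if two weight-$t$ supports intersect. This happens with probability at most $3\left(1-\binom{n-t}{t}/\binom{n}{t}\right) = O(n^{-1})$.
\item For the lower bound, the event that $S_1 \cap S_2 \neq \emptyset$ alone already has probability $\Omega(n^{-1})$. This matches the upper bound.
\end{itemize}
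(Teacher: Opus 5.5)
Your approach is the paper's: the same reduction to three pairwise disjoint $t$-subsets, and the same ratio $\frac{1}{6}\binom{n}{t}\binom{n-t}{t}\binom{n-2t}{t}\big/\binom{|B_t(\x)|}{3}$. The paper expands numerator and denominator to second order in $n$ and reads off $1-\frac{6t(t+1)+o(1)}{2n-3t(t-3)+o(1)}$. You instead normalize by $\binom{n}{t}^3$ and add a union-bound sandwich. For $t\geq 2$ your argument is correct, and your failure rate $3t(t+1)/n$ matches the paper's.

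The problem is $t=1$, which the statement allows.

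\begin{itemize}
\item Your lower-bound bullet fails there. Two distinct reads of weight at most $1$ can never have intersecting supports, so $\mathrm{Pr}[S_1\cap S_2\neq\emptyset]=0$.
\item Your claim that $V^{-1}$ is negligible also fails. Here $V=n+1$, so the $-3/V$ term in $V(V-1)(V-2)$ cancels the $+3t/n$ surplus to first order, giving $\binom{V}{3}=\frac{1}{6}n^3(1-n^{-2})$. This is not $\frac{1}{6}\binom{n}{t}^3(1+\Theta(n^{-1}))$.
\item Consequently the true probability is $\frac{n-2}{n+1}=1-\frac{3}{n+1}$, not the $1-\frac{6}{n}$ your heuristic predicts.
\end{itemize}

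The conclusion $1-\Theta(n^{-1})$ still holds. The repair that works uniformly is to lower-bound failure by weight deficiency rather than by intersecting supports. The random triple meets $B_{t-1}(\x)$ with probability at least $|B_{t-1}(\x)|/|B_t(\x)|=\Theta(n^{-1})$ for every $t\geq 1$. Separately, your upper-bound bullet should use $|B_{t-1}(\x)|$ rather than $\binom{n}{t-1}$; this is harmless.

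The paper's expansion has the same blind spot. It replaces $3!\binom{V}{3}=V(V-1)(V-2)$ by $V^3$ up to $\Theta(n^{3t-2})$. The discarded $3V^2$ is of order $n^{3t-1}$ when $t=1$, so its explicit constant is likewise valid only for $t\geq 2$; for $t=1$ it gives $6/(n+3)$ instead of $3/(n+1)$.
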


The proof of this lemma, as well as some other probabilistic arguments of the paper, are left in the appendices. Inspired by Lemma \ref{lem:3reads}, we propose the following problem.

\begin{problem}\label{problem:equivalent}
Let $\C\subseteq\Sigma_2^n$ be a code with minimum Hamming distance $d$ and $\x\in\C$ be a codeword. For any fixed $m$ and $\{ \z_1, \ldots, \z_m \} \subseteq B_t(\x)$, where $m\leq N(n,t,d)$ and $t\geq \left\lceil \frac{d}{2} \right\rceil$, find a sufficient (and even necessary, if possible) condition for the unique reconstruction of $\x$ by $\{ \z_1, \ldots, \z_m \}$ with an efficient reconstruction algorithm.
\end{problem}

The trivial condition that solves Problem~\ref{problem:equivalent} is that the set of words $\{ \z_1, \ldots, \z_m \}$ belongs to the radius-$t$ ball of exactly one codeword. However, complexity wise this is not a feasible solution. Hence, we seek for a sufficient (and even necessary, if possible) condition that will be complexity-wise efficient, and the first thought that comes into mind is a characterization of the pairwise distance among the reads. Here we propose a sufficient condition, which is based on the summation of the pairwise distance among the reads. The following definition plays a key role in our framework.

\begin{definition}
Given integers $n,m,t,d$ where $m\geq 2$, let
$$D\left(n, m, t, d\right) \triangleq\hspace{-1ex} \max_{\substack{\x, \y\in\Sigma_2^n, d_H\left(\x,\y\right)=d,\\  \{\z_i\}_{i = 1}^m\subseteq B_t\left(\x\right)\cap B_t\left(\y\right)}} \left\{\sum_{1\leq i < j\leq m}\!\!d_H\left(\z_i, \z_j\right)\right\}$$
be the maximum value of the sum of the pairwise distance among $m$ distinct sequences in $B_t(\x)\cap B_t(\y)$, where $\x$ and $\y$ are any two sequences of distance $d$.
\end{definition}


With the help of the notation $D(n, m, t, d)$, our reconstruction condition is of the following form.

\begin{theorem}\label{thm:sufficient}
    Let $\C\subseteq\Sigma_2^n$ be a code with minimum Hamming distance $d$ and $\x\in\C$ be a codeword. For any fixed $m\geq 2$, and $m$ distinct sequences $\{\z_1, \ldots, \z_m\}\subseteq B_t(\x)$, if
    \begin{equation} \label{eqn:summation}
       \sum_{1\leq i < j\leq m}d_H\left(\z_i, \z_j\right)\geq D\left(n, m, t, d'\right) + 1
    \end{equation} for every $d' = d(\bm{c}_1, \bm{c}_2)$, for some $\bm{c}_1, \bm{c}_2\in\mathcal{C}$, then $\x$ can be uniquely reconstructed by $\{\z_1, \ldots, \z_m\}$.
\end{theorem}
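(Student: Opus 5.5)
The plan is a proof by contradiction that uses only the definition of $D(n,m,t,d)$. Unique reconstruction of $\x$ from $\{\z_1,\ldots,\z_m\}$ means that $\x$ is the only codeword $\bm{c}\in\C$ with $\{\z_1,\ldots,\z_m\}\subseteq B_t(\bm{c})$. We already know that $\x$ is one such codeword, since the reads lie in $B_t(\x)$ by hypothesis. So suppose, towards a contradiction, that some codeword $\bm{c}\neq\x$ also satisfies $\{\z_1,\ldots,\z_m\}\subseteq B_t(\bm{c})$.

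Set $d'=d_H(\x,\bm{c})$. Because $\x$ and $\bm{c}$ are distinct codewords of $\C$, $d'$ is one of the distances realized between pairs of codewords, so hypothesis \eqref{eqn:summation} applies to this particular $d'$. The reads are then $m$ distinct sequences lying in $B_t(\x)\cap B_t(\bm{c})$, with $d_H(\x,\bm{c})=d'$. This configuration is one of the candidates in the maximum defining $D(n,m,t,d')$: take the pair in the definition to be $\x$ and $\bm{c}$, and the sequences to be $\z_1,\ldots,\z_m$. Hence
\[
\sum_{1\leq i<j\leq m} d_H(\z_i,\z_j)\leq D(n,m,t,d'),
\]
which contradicts \eqref{eqn:summation}. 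Therefore no such $\bm{c}$ exists, and $\x$ is the unique codeword whose radius-$t$ ball contains all the reads. In this information-theoretic sense, $\x$ is uniquely reconstructible.

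The only points needing care are bookkeeping. First, the definition of $D$ fixes the distance to be exactly $d'$, not at least $d$. This is why the theorem quantifies over every realized codeword distance $d'$, and the argument must pick $d'=d_H(\x,\bm{c})$ itself. Second, we need the distinctness of the $\z_i$ and $m\geq 2$ so that the configuration is admissible in the definition; both are given. I expect no real obstacle here. The substantive work, computing or bounding $D(n,m,t,d')$ and giving an efficient decoder, is deferred to later sections. The theorem itself only asserts uniqueness, which follows directly from the maximality in the definition.
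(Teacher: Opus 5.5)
Your proof is correct and follows essentially the same argument as the paper. Assume a second codeword $\bm{c}\neq\x$ whose radius-$t$ ball contains all the reads, and set $d'=d_H(\x,\bm{c})$. The reads then form an admissible configuration in the maximum defining $D(n,m,t,d')$, so their pairwise distance sum is at most $D(n,m,t,d')$, contradicting \eqref{eqn:summation}.
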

\begin{IEEEproof}
    Prove by contradiction. Suppose unique reconstruction is not possible, then there is another sequence $\y\neq\x$ in $\C$, with $d_{H}(\x, \y)\geq d$, such that $\{\z_1, \ldots, \z_m\}\subseteq B_t(\x) \cap B_t(\y)$. By the definition of $D\left(n, m, t, d_{H}(\x, \y)\right)$, we have $$\sum_{1\leq i < j\leq m}d_H\left(\z_i, \z_j\right)\leq D\left(n, m, t, d_{H}\left(\x, \y\right)\right),$$ which contradicts the premise that $$\sum_{1\leq i < j\leq m}d_H\left(\z_i, \z_j\right)\geq D\left(n, m, t, d_{H}\left(\x, \y\right)\right)+1.\IEEEQEDhereeqn$$
\end{IEEEproof}

Theorem~\ref{thm:sufficient} provides a sufficient condition to Problem~\ref{problem:equivalent} and indicates that once we have $m$ distinct reads and the sum of their pairwise distance is strictly larger than $D(n,m,t,d')$ for every $d'\geq d$, then these reads cannot lie in the intersection of two error balls centered at two distinct codewords from a code with minimum Hamming distance $d$. Therefore, unique reconstruction is guaranteed. Note that for now we need to check Inequality (\ref{eqn:summation}) for every $d'\geq d$, since we are still unaware of the monotonicity of the values $\{D(n,m,t,d')\}_{d'\geq d}$ yet. After the theoretical analysis of these values in the next section, for a certain range of parameters we can modify Theorem \ref{thm:sufficient} and only need to check the inequality once, for either $d$ or $d+1$.

\section{Theoretical Analysis of the Value $D(n, m, t, d)$}\label{sec:main}

In this section, we analyze the value $D(n, m, t, d)$. Throughout the rest of the paper, we set $\x=(x_1,\dots,x_n)$ to be the transmitted codeword, $\y=(y_1,\dots,y_n)$ to be the other codeword with $d_H(\x,\y)=d$, and $\{\z_1,\ldots,\z_m\}\subseteq B_t(x)$ to be the reads where $\z_i=(z_{i,1},\ldots,z_{i,n})$ for $i\in[m]$. Let $S_i$ be the coordinates where $\z_i$ and $\x$ differ, i.e. $S_i = \{k\in[n]: z_{i, k}\neq x_k\}$. For each coordinate $k\in[n]$, let $c_k$ be the number of sequences among $\{\z_1,\dots,\z_m\}$ which are different from $\x$ at the coordinate $k$. The following Lemma \ref{lem:calculate distance}, which essentially follows a double counting argument, plays a crucial role in the subsequent analysis.

\begin{lemma}\label{lem:calculate distance}
    Following the previous notations, it holds that
$$\sum_{i<j}d_H\left(\z_i, \z_j\right) = m\sum_{k\in \left[n\right]}c_k - \sum_{k\in \left[n\right]}c_k^2.$$
\end{lemma}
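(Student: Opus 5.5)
The plan is a double count of the pairwise distances coordinate by coordinate. First, write each distance as a sum over coordinates:
$$d_H(\z_i,\z_j)=\sum_{k\in[n]}\mathbf{1}[z_{i,k}\neq z_{j,k}].$$
Summing over all pairs $i<j$ and exchanging the order of summation gives
$$\sum_{i<j}d_H(\z_i,\z_j)=\sum_{k\in[n]}\bigl|\{\{i,j\}: i<j,\ z_{i,k}\neq z_{j,k}\}\bigr|.$$
It therefore suffices to count, for each fixed coordinate $k$, the number of unordered pairs of reads that disagree at position $k$.

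Next, fix $k$. Since the alphabet is binary, every $z_{i,k}$ equals either $x_k$ or its complement. By the definition of $c_k$, exactly $c_k$ reads have $z_{i,k}\neq x_k$, and the other $m-c_k$ reads agree with $x_k$. Two reads disagree at coordinate $k$ exactly when one lies in each group. This is where binarity is used: with a larger alphabet, two reads that both differ from $x_k$ could still differ from each other. Hence the number of disagreeing pairs at coordinate $k$ is $c_k(m-c_k)$.

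Finally, summing over $k$ gives
$$\sum_{i<j}d_H(\z_i,\z_j)=\sum_{k\in[n]}c_k(m-c_k)=m\sum_{k\in[n]}c_k-\sum_{k\in[n]}c_k^2,$$
which is the claimed identity. There is no real obstacle. The only step needing care is the per-coordinate count, where one must note that for binary symbols "both differ from $x_k$" implies "equal to each other". Distinctness of the reads and the ball constraint are not needed.
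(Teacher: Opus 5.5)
Your proposal is correct and follows essentially the same route as the paper: count each coordinate's contribution $c_k(m-c_k)$ to the pairwise-distance sum, then add these contributions over all $k$. Your explicit note that binarity is what makes two reads differing from $x_k$ agree with each other is a detail the paper uses without stating.
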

\begin{IEEEproof}
    To calculate the summation $\sum_{i<j}d_H(\z_i, \z_j)$, we analyze the contribution of each coordinate to this summation. For any coordinate $k\in [n]$, since $c_k$ sequences out of $\{\z_1,\dots,\z_m\}$ differ from $\x$ and the rest $m-c_k$ sequences agree with $\x$, then this coordinate contributes $c_k(m-c_k)$ to the summation. Therefore, $$\sum_{i<j}d_H\left(\z_i, \z_j\right)=\sum_{k\in\left[n\right]} c_k\left(m-c_k\right)=m\sum_{k\in \left[n\right]}c_k - \sum_{k\in \left[n\right]}c_k^2. \IEEEQEDhereeqn$$
\end{IEEEproof}

Lemma \ref{lem:maxF} is useful when analyzing the value $D(n, m, t, d)$.

\begin{lemma}\label{lem:maxF}
    Let $F(c_1, c_2, \ldots, c_n) \triangleq m\sum_{k\in [n]}c_k - \sum_{k\in [n]}c_k^2$.
    If the summation $\sum_{k\in [n]} c_k = \lambda$ is fixed, then the maximal value of $F(c_1, \ldots, c_n)$ is obtained when $c_1, \ldots, c_n$ are either $\left\lceil\frac{\lambda}{n}\right\rceil$ or $\left\lfloor\frac{\lambda}{n}\right\rfloor$. Furthermore, consider the maximum value as a function of $\lambda$, then this function is increasing in $\lambda$ when $\lambda \leq \frac{mn}{2}$.
\end{lemma}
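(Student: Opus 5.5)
Since $\sum_k c_k=\lambda$ is fixed, maximizing $F$ is the same as minimizing $\sum_{k\in[n]}c_k^2$ over nonnegative integers $c_k$ with that sum. The range is finite, so a maximizer exists. I will handle the two parts of the lemma separately: first the shape of a maximizer, then how the maximum depends on $\lambda$.

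\textbf{Step 1 (balancing exchange).} Suppose some feasible vector has $c_a\geq c_b+2$. Replace $c_a$ by $c_a-1$ and $c_b$ by $c_b+1$. The sum is unchanged, and
\[
(c_a-1)^2+(c_b+1)^2-c_a^2-c_b^2=-2(c_a-c_b-1)\leq -2<0,
\]
so $F$ strictly increases. Hence at any maximizer all $c_k$ differ pairwise by at most $1$. Write $\lambda=qn+r$ with $0\leq r<n$. The only integer vectors whose entries differ by at most $1$ and sum to $\lambda$ have $r$ entries equal to $q+1$ and $n-r$ entries equal to $q$, up to permutation. Here $q=\lfloor\lambda/n\rfloor$ and $q+1=\lceil\lambda/n\rceil$ when $r>0$. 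This is the first claim, and it also shows the maximum is attained at this balanced vector.

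\textbf{Step 2 (monotonicity).} Define $G(\lambda)=m\lambda-\bigl(r(q+1)^2+(n-r)q^2\bigr)$. Going from $\lambda$ to $\lambda+1$ turns one entry equal to $q$ into $q+1$. If $r=n-1$, this produces the all-$(q+1)$ vector, which is consistent with the new $q$. Therefore
\[
G(\lambda+1)-G(\lambda)=m-\bigl((q+1)^2-q^2\bigr)=m-2q-1,\qquad q=\lfloor\lambda/n\rfloor .
\]

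\textbf{Step 3 (sign of the increment).} Take $\lambda+1\leq mn/2$. Then $q\leq \lambda/n<m/2$, so $2q<m$, and since these are integers $2q+1\leq m$. Thus $G(\lambda+1)-G(\lambda)\geq 0$, so $G$ is nondecreasing on the range $\lambda\leq mn/2$. If $m$ is odd, $2q+1\leq m$ can be equality, so there the increase is only weak.

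The main obstacle is the boundary bookkeeping at the top of the range. I would phrase ``increasing'' as non-decreasing, which is all that later uses (comparing maxima for different $\lambda$) need. I would also check separately the single step landing exactly at $\lambda=mn/2$. There $q$ can equal $(m-1)/2$ when $m$ is odd, giving a zero increment, which is still fine.
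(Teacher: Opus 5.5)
Your proof is correct and follows the same route as the paper. The identical unit-transfer exchange argument gives the balanced shape, and your increment $G(\lambda+1)-G(\lambda)=m-2\lfloor \lambda/n\rfloor-1$ is a careful discrete version of the paper's one-line remark that $mc_k-c_k^2$ increases for $c_k\leq \frac{m}{2}$. Your observation that for odd $m$ the increase is only weak is also right (e.g.\ $m=3$, $n=2$ gives the same maximum $4$ at $\lambda=2$ and at $\lambda=3=\frac{mn}{2}$), so ``increasing'' in the lemma must be read as non-decreasing.
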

\begin{IEEEproof}
    The lemma follows from a fundamental inequality and we present the proof for completeness. For some $c_1, \ldots, c_n$ with fixed summation $\sum_{k\in [n]} c_k = \lambda$, if there exist ${k_1}$ and ${k_2}$ such that $c_{k_1} \geq c_{k_2} + 2$ (without loss of generality assume that ${k_1} < {k_2}$). Let $\Delta=F(c_1, \ldots, c_{k_1}, \ldots, c_{k_2}, \ldots, c_n)-F(c_1, \ldots, c_{k_1} - 1, \ldots, c_{k_2} + 1, \ldots, c_n)$, then it follows that
    \begin{align*}
        \Delta = & mc_{k_1} + mc_{k_2} - c_{k_1}^2 - c_{k_2}^2 \\
         &-m\left(c_{k_1} - 1\right) - m\left(c_{k_2} + 1\right) + \left(c_{k_1} - 1\right)^2 + \left(c_{k_2} + 1\right)^2\\
        = & 2c_{k_2} - 2c_{k_1} + 2 < 0.
    \end{align*}
    This implies that the maximal value of $F(c_1, \ldots, c_n)$ is obtained when $c_1, \ldots, c_n$ are almost equal, i.e., each entry is either $\left\lceil\frac{\lambda}{n}\right\rceil$ or $\left\lfloor\frac{\lambda}{n}\right\rfloor$. The last claim simply follows from the fact that $mc_k-c_k^2$ is increasing for any $c_k\leq \frac{m}{2}$.
\end{IEEEproof}

\subsection{Exact value of $D(n, m, t, d)$ when $n\geq m\left(t-\left\lceil\frac{d}{2}\right\rceil\right)+d$}

With Lemmas \ref{lem:calculate distance} and \ref{lem:maxF}, the value of $D(n, m, t, d)$ can be exactly determined when $n\geq m\left(t-\left\lceil\frac{d}{2}\right\rceil\right)+d$ as follows.

\begin{theorem}\label{thm:D(n, m, t, d)}
    For any positive integers $n, m, t, d$ with $t\geq \left\lceil \frac{d}{2} \right\rceil$ and $n\geq m\left(t-\left\lceil\frac{d}{2}\right\rceil\right)+d$, we have
    $$D\left(n, m, t, d\right) = d\left\lfloor\frac{m}{2}\right\rfloor\left\lceil\frac{m}{2}\right\rceil + m\left(m-1\right)\left(t-\left\lceil\frac{d}{2}\right\rceil\right).$$
\end{theorem}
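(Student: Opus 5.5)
We prove matching upper and lower bounds. Throughout, fix $\x,\y$ with $d_H(\x,\y)=d$. Let $T=\{k:x_k\neq y_k\}$, so $|T|=d$, and put $s=t-\lceil d/2\rceil$.

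\textbf{Upper bound.} Take $m$ distinct reads in $B_t(\x)\cap B_t(\y)$ and use the notation of Lemma~\ref{lem:calculate distance}, which expresses the sum of pairwise distances as $\sum_k c_k(m-c_k)$. We split this sum into the coordinates in $T$ and those outside $T$.

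For $k\in T$, the crude bound $c_k(m-c_k)\le\lfloor m/2\rfloor\lceil m/2\rceil$ holds, so these coordinates contribute at most $d\lfloor m/2\rfloor\lceil m/2\rceil$.

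For $k\notin T$, we use $c_k(m-c_k)\le c_k(m-1)$, valid since $c_k\ge1$ whenever the term is nonzero. It then suffices to show $\sum_{k\notin T}c_k\le ms$. Each read $\z_i$ lies in both balls. Write $e_i=|S_i\setminus T|$ and $a_i=|S_i\cap T|$. Then $d_H(\z_i,\x)=a_i+e_i\le t$ and $d_H(\z_i,\y)=(d-a_i)+e_i\le t$. Adding the two inequalities gives $2e_i+d\le 2t$, hence $e_i\le t-\lceil d/2\rceil=s$ because $e_i$ is an integer. Summing over $i$ gives $\sum_{k\notin T}c_k=\sum_i e_i\le ms$, so the coordinates outside $T$ contribute at most $m(m-1)s$.

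The main subtlety is that the two bounds are optimized separately, so the construction must attain them simultaneously.

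\textbf{Lower bound (construction).} Use $ms$ coordinates outside $T$, which exist because $n-d\ge ms$. Partition them into $m$ disjoint blocks $E_1,\dots,E_m$ of size $s$. Each block coordinate then has $c_k=1$ and contributes $m-1$, for a total of $m(m-1)s$.

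On $T$ we need each $\z_i$ to satisfy $a_i+s\le t$ and $d-a_i+s\le t$, i.e.\ $d-\lceil d/2\rceil\le a_i\le\lceil d/2\rceil$. We also want every coordinate of $T$ to have $c_k\in\{\lfloor m/2\rfloor,\lceil m/2\rceil\}$.

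\emph{Case $d$ even.} Fix a set $A\subseteq T$ with $|A|=d/2$. Let $\lceil m/2\rceil$ reads flip exactly $A$ and the remaining $\lfloor m/2\rfloor$ reads flip exactly $T\setminus A$. Then every coordinate of $T$ has $c_k\in\{\lfloor m/2\rfloor,\lceil m/2\rceil\}$ and attains the bound.

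\emph{Case $d$ odd.} Allowed weights on $T$ are $a_i\in\{(d-1)/2,(d+1)/2\}$. Fix a set $A\subseteq T$ with $|A|=(d+1)/2$. Let half the reads flip $A$ and the other half flip $T\setminus A$, with the extra read (if $m$ is odd) flipping either one. Every coordinate of $T$ then gets $c_k\in\{\lfloor m/2\rfloor,\lceil m/2\rceil\}$, so each contributes $\lfloor m/2\rfloor\lceil m/2\rceil$.

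\textbf{Feasibility.} Each read has weight at most $\lceil d/2\rceil+s=t$ relative to $\x$, and at most $\lceil d/2\rceil+s=t$ relative to $\y$ (its weight on $T$ relative to $\y$ is $d-a_i\le\lceil d/2\rceil$). Reads are pairwise distinct when $s\ge1$, because their blocks $E_i$ are disjoint and nonempty.

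\textbf{Degenerate case $s=0$.} Reads restricted to $T$ may now coincide, and $m$ may exceed the number of available patterns, so distinctness must be checked separately. In this case the formula only concerns $T$, and one would use distinct patterns of the allowed weights; I expect to need the hypothesis to handle small $m$ here, and would treat it separately.

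The sum of the two contributions equals the claimed value, which proves the theorem.
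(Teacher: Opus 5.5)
For $s\ge 1$ your argument is the paper's proof. The paper also splits each support into $S_i'=S_i\cap[d]$ and $S_i''$ and drops the $-2|S_i''\cap S_j''|$ term, which is your $c_k(m-c_k)\le c_k(m-1)$. It uses $|S_i''|\le t-\max\{|S_i'|,d-|S_i'|\}\le s$ and builds the same extremal family: two complementary halves of $T$ plus disjoint size-$s$ blocks outside $T$.

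What remains is the case $s=0$. You explicitly leave it open, so as written your proof is incomplete there. You were right to flag it: the paper's proof has the same hole without noticing it. For $t=\lceil d/2\rceil$ its construction yields only the two sequences $1^{\lceil d/2\rceil}0^{n-\lceil d/2\rceil}$ and $0^{\lceil d/2\rceil}1^{\lfloor d/2\rfloor}0^{n-d}$, so it does not give $m$ distinct reads once $m\ge 3$.

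The missing idea is complementary pairing.
\begin{itemize}
\item When $s=0$, your own bound $e_i\le s$ forces every element of $B_t(\x)\cap B_t(\y)$ to agree with $\x$ off $T$. Its flip set $A\subseteq T$ must satisfy $\lfloor d/2\rfloor\le |A|\le\lceil d/2\rceil$.
\item This family is closed under $A\mapsto T\setminus A$ and has no fixed points. Hence the intersection is a disjoint union of complementary pairs, and $N_0:=|B_t(\x)\cap B_t(\y)|$ is even.
\item For $2\le m\le N_0$, take $\lfloor m/2\rfloor$ distinct pairs. If $m$ is odd, add one member of an unused pair; one exists because $N_0$ is even.
\item Each pair adds exactly $1$ to every $c_k$ with $k\in T$. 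So $c_k\in\{\lfloor m/2\rfloor,\lceil m/2\rceil\}$, and the distance sum is $d\lfloor m/2\rfloor\lceil m/2\rceil$, which is the formula with $s=0$.
\end{itemize}
Merely choosing distinct patterns of admissible weight, as you suggest, does not suffice, because nothing then balances the $c_k$ on $T$.

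Also, do not expect the hypothesis to control $m$ here. With $s=0$ it reads $n\ge d$, which places no restriction on $m$. For $m>N_0$ there are no $m$ distinct reads in the intersection, so $D(n,m,t,d)$ is a maximum over the empty set. The statement therefore implicitly needs $m\le |B_t(\x)\cap B_t(\y)|$ in this case. For $s\ge 1$ this is automatic, since your construction exhibits $m$ distinct reads in the intersection.
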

\begin{IEEEproof}
    Without loss of generality, consider $\x = 0^n$, $\y = 1^d0^{n-d}$, and $\{\z_1,\ldots,\z_m\}\subseteq B_t(\x)\cap B_t(\y)$. Then each $S_i$ is indeed the support set of $\z_i$.
    Let $S'_i=S_i\cap[d]$ and $S''_i=S_i\cap[d+1,n]$, for  $i\in[m]$. For $i<j$, it holds that
    \begin{align}
    &d_H(\z_i,\z_j) \nonumber\\
    = &|\{k\in[n]:z_{i, k}\neq z_{j, k}\}|\\
    =&|\{k\in[d]:z_{i, k}\neq z_{j, k}\}|+|\{k\in[d+1,n]:z_{i, k}\neq z_{j, k}\}|\nonumber\\
    =&|S'_i| + |S'_j| - 2|S'_i\cap S'_j|+|S''_i| + |S''_j| - 2|S''_i\cap S''_j| \\
    \leq &|S'_i| + |S'_j| - 2|S'_i\cap S'_j|+|S''_i| + |S''_j| \nonumber \\
    \leq &|S'_i| + |S'_j| - 2|S'_i\cap S'_j|+t\nonumber\\
    &-\max\{|S'_i|, d - |S'_i|\}+t-\max\{|S'_j|, d - |S'_j|\}\nonumber\\
    \leq &|S'_i| + |S'_j| - 2|S'_i\cap S'_j| + 2t-2\left\lceil\frac{d}{2}\right\rceil.\nonumber
    \end{align}
    Here Equality (2) follows from standard inclusion-exclusion principle. Now we explain Inequality (3). Since $|S'_i|+|S''_i|=d(\z_i,\x)\leq t$ and $d-|S'_i|+|S''_i|=d(\z_i,\y)\leq t$, then it holds that $|S''_i|\leq t-\max\{|S'_i|, d - |S'_i|\}$. Similarly $|S''_j|\leq t-\max\{|S'_j|, d - |S'_j|\}$ and then Inequality (3) follows.

    Based on Lemma \ref{lem:calculate distance}, it holds that
    \begin{small}
    \begin{align*}
        \sum_{i<j}d_H\left(\z_i,\z_j\right)
        &\leq \sum_{i<j}\left(|S'_i| + |S'_j| - 2|S'_i\cap S'_j| + 2t-2\left\lceil\frac{d}{2}\right\rceil\right)\\
        &= m\sum_{k\in \left[d\right]}c_k - \sum_{k\in \left[d\right]}c_k^2 + \binom{m}{2}\left(2t-2\left\lceil\frac{d}{2}\right\rceil\right).
    \end{align*}
    \end{small}
    
    As $c_k$ is an integer in $[0, m]$ for each $k$, it holds that  $\left(\left\lfloor\frac{m}{2}\right\rfloor - c_k\right)\left(c_k-\left\lceil\frac{m}{2}\right\rceil\right)\leq 0$. With the fact that
    \begin{small}
    \begin{equation*}
        m\!\sum_{k\in \left[d\right]}c_k-\!\sum_{k\in \left[d\right]}c_k^2\!=\!\sum_{k\in \left[d\right]}\left(\left\lfloor\frac{m}{2}\right\rfloor\!-\!c_k\right)\!\left(c_k\!-\!\left\lceil\frac{m}{2}\right\rceil\right)\!+ d\left\lfloor\frac{m}{2}\right\rfloor\!\left\lceil\frac{m}{2}\right\rceil,
    \end{equation*}
    \end{small}
    we have
    \begin{small}
    \begin{equation}\label{eqn:D}
        \sum_{i<j}d_H\left(\z_i,\z_j\right) \leq d\left\lfloor\frac{m}{2}\right\rfloor\left\lceil\frac{m}{2}\right\rceil + m\left(m-1\right)\left(t-\left\lceil\frac{d}{2}\right\rceil\right).
    \end{equation}
    \end{small}

    Next, we show that the upper bound can be obtained. For $1\leq i\leq m$, let
    \begin{small}
    \begin{equation*}
        \z_i=
        \begin{cases}
        \hspace{-0.1ex}1^{\left\lceil\frac{d}{2}\right\rceil}\hspace{-0.1ex}0^{\left\lfloor\frac{d}{2}\right\rfloor}\hspace{-0.1ex}0^{\left(i-1\right)\left(t-\left\lceil\frac{d}{2}\right\rceil\right)}\hspace{-0.1ex}1^{t-\left\lceil\frac{d}{2}\right\rceil}\hspace{-0.1ex}0^{n-d-i\left(t-\left\lceil\frac{d}{2}\right\rceil\right)},\hspace{-1.7ex}&i\leq \left\lfloor\frac{m}{2}\right\rfloor,\\
        \hspace{-0.1ex}0^{\left\lceil\frac{d}{2}\right\rceil}\hspace{-0.1ex}1^{\left\lfloor\frac{d}{2}\right\rfloor}\hspace{-0.1ex}0^{\left(i-1\right)\left(t-\left\lceil\frac{d}{2}\right\rceil\right)}\hspace{-0.1ex}1^{t-\left\lceil\frac{d}{2}\right\rceil}\hspace{-0.1ex}0^{n-d-i\left(t-\left\lceil\frac{d}{2}\right\rceil\right)},\hspace{-1.7ex}&\text{otherwise}.
    \end{cases}
    \end{equation*}
    \end{small}

    It is straightforward to check that $d(\x,\z_i)=t$ and $d(\y,\z_i)=t-\left\lceil\frac{d}{2}\right\rceil+\left\lfloor\frac{d}{2}\right\rfloor\leq t$ when $1\leq i\leq \left\lfloor\frac{m}{2}\right\rfloor$, and similarly $d(\y,\z_i)=t$ and $d(\x,\z_i)=t-\left\lceil\frac{d}{2}\right\rceil+\left\lfloor\frac{d}{2}\right\rfloor\leq t$ when $\left\lfloor\frac{m}{2}\right\rfloor < i \leq m$. Therefore, $\{\z_1,\dots,\z_m\}\subseteq B_t(\x)\cap B_t(\y)$.
    Essentially, the key of this construction is to set $c_k\in \left\{\left\lfloor \frac{m}{2} \right\rfloor, \left\lceil \frac{m}{2} \right\rceil\right\}$ for each $k\in[d]$, and to keep $\{S''_i:i\in[m]\}$ to be pairwise disjoint. Note that the condition $n\geq m\left(t-\left\lceil\frac{d}{2}\right\rceil\right)+d$ guarantees the disjointness of $\{S''_i:i\in[m]\}$.

    Now, to calculate the summation of their pairwise distance, it is more convenient to calculate by analyzing how much each coordinate contributes to the summation. For each $k\in[d]$, almost half of the sequences have `1' and the others have `0', and thus this coordinate contributes $\left\lfloor\frac{m}{2}\right\rfloor\left\lceil\frac{m}{2}\right\rceil$ to the summation. Moreover, each coordinate $k\in \bigcup_{1\leq i\leq m} S''_i$ contributes $m-1$ to the summation, where the size of $\bigcup_{1\leq i\leq m} S''_i$ is $m\left(t - \left\lceil\frac{d}{2}\right\rceil\right)$. Thus we have $$\sum_{i<j}d_H\left(\z_i,\z_j\right)=d\left\lfloor\frac{m}{2}\right\rfloor\left\lceil\frac{m}{2}\right\rceil+m\left(m-1\right)\left(t - \left\lceil\frac{d}{2}\right\rceil\right),$$
    achieving the upper bound in Inequality (\ref{eqn:D}). To sum up, $D(n,m,t,d)=d\left\lfloor\frac{m}{2}\right\rfloor\left\lceil\frac{m}{2}\right\rceil+m(m-1)\left(t - \left\lceil\frac{d}{2}\right\rceil\right)$.
\end{IEEEproof}

\begin{example}
    Let $m = 4$, $t = 3$, and $d = 2$. We consider the sequences $\x = (0, 0,
 \ldots, 0)\in\Sigma_2^{10}$ and $\y = (1, 1, 0, \ldots, 0)\in\Sigma_2^{10}$. Consider $4$ reads $$\z_1 \hspace{-0.1ex}=\hspace{-0.1ex} \left(1, 0, 1, 1, 0, 0, 0, 0, 0, 0\right)\hspace{-0.5ex},\z_2 \hspace{-0.1ex}=\hspace{-0.1ex} \left(1, 0, 0, 0, 1, 1, 0, 0, 0, 0\right)\hspace{-0.5ex},$$ $$\z_3 \hspace{-0.1ex}=\hspace{-0.1ex} \left(0, 1, 0, 0, 0, 0, 1, 1, 0, 0\right)\hspace{-0.5ex},\z_4 \hspace{-0.1ex}=\hspace{-0.1ex} \left(0, 1, 0, 0, 0, 0, 0, 0, 1, 1\right)\hspace{-0.5ex}.$$ It holds that $\sum_{1\leq i< j\leq 4}d_H(\z_i, \z_j) = 32 = D(10, 4, 3, 2)$.
\end{example}

With Theorem \ref{thm:D(n, m, t, d)}, we are  ready to determine the value $\max_{d'\geq d}\{D(n, m, t, d')\}$ for the case $t\geq \left\lceil \frac{d}{2} \right\rceil$ and $n\geq m\left(t-\left\lceil\frac{d}{2}\right\rceil\right)+d$.

\begin{lemma}\label{lem:monotone}
    For any positive integers $n, m, t, d$ with $t\geq \left\lceil \frac{d}{2} \right\rceil$ and $n\geq m\left(t-\left\lceil\frac{d}{2}\right\rceil\right)+d$, it holds that
    \begin{itemize}
        \item $D(n, m, t, d)\geq D(n, m, t, d+2)$ where equality holds only when $m=2$;
        \item $D(n, m, t, d)> D(n, m, t, d+1)$ when $d$ is even;
        \item $D(n, m, t, d)< D(n, m, t, d+1)$ when $d$ is odd.
    \end{itemize}
\end{lemma}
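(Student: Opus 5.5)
The plan is to reduce all three comparisons to the closed form of Theorem~\ref{thm:D(n, m, t, d)} and then compare explicit expressions. Write $c=\left\lceil\frac{d}{2}\right\rceil$, $a=\left\lfloor\frac{m}{2}\right\rfloor$, $b=\left\lceil\frac{m}{2}\right\rceil$. I will first check when Theorem~\ref{thm:D(n, m, t, d)} applies to $d'=d+1$ and $d'=d+2$.

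For $d+2$ we have $\left\lceil\frac{d+2}{2}\right\rceil=c+1$. Hence $m(t-c-1)+d+2\le m(t-c)+d$ because $m\ge 2$, so the length hypothesis carries over. The same holds for $d+1$ with $d$ even, since then $\left\lceil\frac{d+1}{2}\right\rceil=c+1$. If $t<\left\lceil\frac{d'}{2}\right\rceil$, then $B_t(\x)\cap B_t(\y)=\emptyset$ and $D(n,m,t,d')$ is vacuous, i.e.\ smaller than anything. These cases can only help the first two bullets and can be dismissed separately.

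In the admissible cases I will subtract the formulas directly:
\[
D(n,m,t,d)-D(n,m,t,d+2)=m(m-1)-2ab.
\]
Since $m(m-1)-2ab=a^2+b^2-a-b$ (using $m=a+b$), this equals $0$ for $m=2$ and is positive for $m\ge 3$. Similarly, for $d$ even,
\[
D(n,m,t,d)-D(n,m,t,d+1)=m(m-1)-ab,
\]
which is positive for all $m\ge 2$.

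For $d$ odd we have $\left\lceil\frac{d+1}{2}\right\rceil=c$. The formula then predicts
\[
D(n,m,t,d+1)-D(n,m,t,d)=ab>0,
\]
since the first term grows by $ab$ while the second term is unchanged. The third bullet needs only a lower bound on $D(n,m,t,d+1)$. So I will exhibit a configuration in $B_t(\x)\cap B_t(\y)$ with $d_H(\x,\y)=d+1$ whose pairwise distance sum exceeds $D(n,m,t,d)$, mimicking the construction in the proof of Theorem~\ref{thm:D(n, m, t, d)}. When $n\ge m(t-c)+d+1$ this is exactly that construction, and it yields the strict inequality.

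The main obstacle is the boundary case $n=m(t-c)+d$ with $d$ odd. Here the $m$ disjoint tails $S''_i$ of size $t-c$ no longer fit into the $n-d-1$ coordinates outside the first $d+1$. I would first try shortening one tail by a single coordinate. This loses $m-1$ from the sum while gaining $ab$ on the first $d+1$ coordinates, and it suffices when $ab>m-1$, i.e.\ for $m\ge 4$.

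For $m=2,3$ I would rebalance the head coordinates instead. For the read with the shortened tail, move the deficit into its $S'_i$: its distances to $\x$ and $\y$ leave slack, since $d+1$ is even and $|S'_i|=c$ gives $d_H(\z_i,\y)=t-1$. I would use this slack to choose $S'_i$ so that the columns in $[d+1]$ still have counts in $\{a,b\}$ while one extra coordinate is flipped. I would verify these small cases by direct computation against the bound \eqref{eqn:D} of Theorem~\ref{thm:D(n, m, t, d)}. If this fails for $m=2,3$ at the exact boundary, the third bullet would need the mildly stronger length assumption $n\ge m(t-c)+d+1$.
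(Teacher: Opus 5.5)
Your handling of the first two bullets is the paper's proof: subtract the closed forms from Theorem~\ref{thm:D(n, m, t, d)}. Your extra checks are correct and more careful than the paper. These are that the length hypothesis transfers to $d+2$ and to $d+1$ for even $d$, and that the empty-intersection case $t=\lceil d/2\rceil$ can be set aside. For the third bullet you also caught something the paper's proof skips: it substitutes $d+1$ into the closed form without checking $n\ge m(t-\lceil d/2\rceil)+d+1$. Take $a=\lfloor m/2\rfloor$, $b=\lceil m/2\rceil$, $c=\lceil d/2\rceil$ as in your proposal. Your shortened-tail construction does repair the boundary case $n=m(t-c)+d$ when $m\ge 4$. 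The net change is $ab-(m-1)>0$, and the reads stay distinct because the tails are disjoint and at most one is empty. This needs $t>c$: if $t=c$ the boundary is $n=d$, and no pair at distance $d+1$ exists at all.

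The step that cannot work is the rebalancing plan for $m\in\{2,3\}$, and no other plan can work either. By Lemma~\ref{lem:calculate distance}, any $m$ words of length $n$ have distance sum $\sum_k c_k(m-c_k)\le n\,ab$. Moreover $ab=m-1$ exactly when $m\in\{2,3\}$. So at $n=m(t-c)+d$, the closed form (which does apply to $d$ at this $n$) gives $D(n,m,t,d)=d\,ab+m(m-1)(t-c)=n\,ab$, which is already the absolute ceiling. Hence $D(n,m,t,d+1)\le D(n,m,t,d)$, and the third bullet is false there.

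A concrete counterexample: $n=3$, $m=2$, $t=2$, $d=1$ meets the hypotheses.
\begin{itemize}
\item $D(3,2,2,1)=3$, via reads $110,001$ for the centers $000,100$.
\item $D(3,2,2,2)=3$ as well, via reads $100,011$ for the centers $000,110$.
\end{itemize}
The formula the paper substitutes would instead give $D(3,2,2,2)=4>n$. Likewise $D(4,3,2,1)=D(4,3,2,2)=8$, where $8$ is the ceiling $2n$.

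So drop the rebalancing attempt and adopt your fallback. The third bullet, and the paper's proof of it, requires $n\ge m(t-\lceil d/2\rceil)+d+1$, or, at the boundary, $m\ge 4$ together with $t>\lceil d/2\rceil$.
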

\begin{IEEEproof}
    By Theorem \ref{thm:D(n, m, t, d)}, when $t\geq \left\lceil \frac{d}{2} \right\rceil$ and $n\geq m\left(t-\left\lceil\frac{d}{2}\right\rceil\right)+d$, it holds that
    {\small\begin{align*}
        D\left(n, m, t, d\right)= &d\left\lfloor\frac{m}{2}\right\rfloor\left\lceil\frac{m}{2}\right\rceil + m\left(m-1\right)\left(t-\left\lceil\frac{d}{2}\right\rceil\right),\\
        D\left(n,\! m,\! t,\! d\!+\!2\right)= &\left(d\!+\!2\right)\!\left\lfloor\frac{m}{2}\right\rfloor\!\left\lceil\frac{m}{2}\right\rceil\!+\!m\left(m\!-\!1\right)\left(t\!-\!\left\lceil\frac{d \!+ \!2}{2}\right\rceil\right).
    \end{align*}}

    Then it holds that
    \begin{align*}
        &D\left(n, m, t, d\right) - D\left(n, m, t, d+2\right)\\
        =&m^2-m - 2 \left\lfloor\frac{m}{2}\right\rfloor\left\lceil\frac{m}{2}\right\rceil \\
        =&\left(\left\lfloor\frac{m}{2}\right\rfloor + \left\lceil\frac{m}{2}\right\rceil\right)^2-\left(\left\lfloor\frac{m}{2}\right\rfloor + \left\lceil\frac{m}{2}\right\rceil\right)- 2 \left\lfloor\frac{m}{2}\right\rfloor\left\lceil\frac{m}{2}\right\rceil\\
        =&\left\lfloor\frac{m}{2}\right\rfloor^2 - \left\lfloor\frac{m}{2}\right\rfloor + \left\lceil\frac{m}{2}\right\rceil^2  - \left\lceil\frac{m}{2}\right\rceil.
    \end{align*}
    This difference is non-negative since $\left\lceil\frac{m}{2}\right\rceil$ and $\left\lfloor\frac{m}{2}\right\rfloor$ are integers, and equality holds only when $m=2$.

    Similarly, we have
    \begin{align*}
        &D\left(n, m, t, d\right) - D\left(n, m, t, d+1\right)\\
        = &m\left(m-1\right)\left(\left\lceil\frac{d + 1}{2}\right\rceil - \left\lceil\frac{d}{2}\right\rceil\right) - \left\lfloor\frac{m}{2}\right\rfloor\left\lceil\frac{m}{2}\right\rceil.
    \end{align*}
    Note that $\left\lceil\frac{d + 1}{2}\right\rceil - \left\lceil\frac{d}{2}\right\rceil = 1$ when $d$ is even and $0$ otherwise. As $m(m-1) > \left\lfloor\frac{m}{2}\right\rfloor\left\lceil\frac{m}{2}\right\rceil$, we have that $D(n, m, t, d)> D(n, m, t, d+1)$ when $d$ is even, and $D(n, m, t, d)< D(n, m, t, d+1)$ when $d$ is odd.
\end{IEEEproof}

Now we are ready to revisit the general framework of our reconstruction condition, Theorem \ref{thm:sufficient}, into the following simpler form, which means that it suffices to check only one inequality to decide whether unique reconstruction is possible.

\begin{theorem}\label{thm:main}
Let $\C\subseteq\Sigma_2^n$ be a code with minimum Hamming distance $d$ and $\x\in\C$ be a codeword. Let $t\geq \left\lceil \frac{d}{2} \right\rceil$ and $n\geq m\left(t-\left\lceil\frac{d}{2}\right\rceil\right)+d$. For any fixed $m\geq 2$, and $m$ distinct sequences $\{\z_1, \ldots, \z_m\}\subseteq B_t(\x)$, if
    $$\sum_{i<j}d_H\left(\z_i,\z_j\right) \geq D\left(n, m, t, 2\left\lceil \frac{d}{2} \right\rceil\right) + 1,$$
    then $\x$ can be uniquely reconstructed by $\{\z_1, \ldots, \z_m\}$.
\end{theorem}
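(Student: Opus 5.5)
The plan is to reduce the statement to Theorem~\ref{thm:sufficient}. Write $c=\lceil d/2\rceil$ and $f(d')\triangleq d'\lfloor\frac m2\rfloor\lceil\frac m2\rceil+m(m-1)(t-\lceil\frac{d'}{2}\rceil)$. By Theorem~\ref{thm:sufficient}, it suffices to show
$$D(n,m,t,d')\le D(n,m,t,2c)\quad\text{for every } d'\ge d \text{ with } \lceil d'/2\rceil\le t.$$
If $\lceil d'/2\rceil>t$, the intersection $B_t(\x)\cap B_t(\y)$ is empty, so such $d'$ impose no condition.

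First observation: the upper-bound half of the proof of Theorem~\ref{thm:D(n, m, t, d)}, namely Inequality~(\ref{eqn:D}), never used the hypothesis on $n$. Hence $D(n,m,t,d')\le f(d')$ holds for every $n$ and every $d'$. Second, the algebra in the proof of Lemma~\ref{lem:monotone} is purely about $f$. It gives $f(e)\ge f(e+2)$ and, for odd $e$, $f(e)<f(e+1)$. Consequently, over $d'\ge d$ the maximum of $f$ is attained at $d'=2c$: for $d$ even this is $d'=d$, and for $d$ odd it is $d'=d+1$. So $D(n,m,t,d')\le f(d')\le f(2c)$ for all $d'\ge d$, and it remains to show $D(n,m,t,2c)\ge f(2c)$, or at least $D(n,m,t,2c)\ge D(n,m,t,d')$ for all relevant $d'$.

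If $d$ is even, then $2c=d$ and the hypothesis $n\ge m(t-c)+d$ is exactly the hypothesis of Theorem~\ref{thm:D(n, m, t, d)}. So $D(n,m,t,d)=f(d)$ and we are done.

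If $d$ is odd, then $2c=d+1$, and the hypothesis of Theorem~\ref{thm:D(n, m, t, d)} for $d+1$ reads $n\ge m(t-c)+d+1$. This fails only in the boundary case $n=m(t-c)+d$. Apart from that case, $D(n,m,t,d+1)=f(d+1)$ and we conclude as before. I expect the boundary case to be the main obstacle. There I only need $D(n,m,t,d+1)\ge \max_{d'\ge d}D(n,m,t,d')$, and since $D(n,m,t,d')\le f(d')\le f(d)$ for every $d'\ge d+2$ and $D(n,m,t,d)=f(d)$, it suffices to show $D(n,m,t,d+1)\ge f(d)$.

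For this I would adapt the construction from the proof of Theorem~\ref{thm:D(n, m, t, d)} with $\y=1^{d+1}0^{n-d-1}$:
\begin{itemize}
\item $\lfloor m/2\rfloor$ reads carry $1^c0^c$ on the first $d+1=2c$ coordinates, and the remaining reads carry $0^c1^c$.
\item The tails $S''_i$ are pairwise disjoint in the last $n-d-1=m(t-c)-1$ coordinates; all have size $t-c$ except one of size $t-c-1$.
\end{itemize}
Each read has distance at most $c+(t-c)=t$ from both $\x$ and $\y$, so all reads lie in $B_t(\x)\cap B_t(\y)$. For the distance sum, each of the $d+1$ coordinates contributes $\lfloor m/2\rfloor\lceil m/2\rceil$ and each of the $m(t-c)-1$ tail coordinates contributes $m-1$. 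The total is
$$f(d+1)-(m-1)=f(d)+\lfloor m/2\rfloor\lceil m/2\rceil-(m-1)\ge f(d),$$
using $f(d+1)=f(d)+\lfloor m/2\rfloor\lceil m/2\rceil$ for odd $d$ and the elementary inequality $\lfloor m/2\rfloor\lceil m/2\rceil\ge m-1$ for $m\ge2$.

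The degenerate case $t=c$ needs a separate check: there the tails are empty, the boundary case cannot occur, and the argument reduces directly to Theorem~\ref{thm:D(n, m, t, d)}. With $D(n,m,t,2c)\ge D(n,m,t,d')$ established for all $d'\ge d$ in every case, Theorem~\ref{thm:sufficient} yields the claim.
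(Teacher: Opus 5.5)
Your strategy is the paper's: reduce to Theorem~\ref{thm:sufficient} by showing that $\max_{d'\ge d}D(n,m,t,d')=D(n,m,t,2c)$. The paper does this in one line by citing Lemma~\ref{lem:monotone}. You are right to be more careful than that. The lemma evaluates $D(n,m,t,d+1)$ with the exact formula, which for odd $d$ needs $n\ge m(t-c)+d+1$, so the paper's one-liner does not cover the boundary case $n=m(t-c)+d$. Two of your observations are correct: Inequality~(\ref{eqn:D}) holds for every $n$, and in the boundary case your shortened-tail construction gives $D(n,m,t,d+1)\ge f(d+1)-(m-1)=f(d)+(\lfloor m/2\rfloor-1)(\lceil m/2\rceil-1)\ge f(d)$.

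The step that fails is the claim that $f(d')\le f(d)$ for every $d'\ge d+2$.
\begin{itemize}
\item For odd $d'$ it follows from $f(e)\ge f(e+2)$.
\item For even $d'=d+1+2k$, your two facts only compare $f(d')$ with $f(d+1)$, and your construction does not reach $f(d+1)$. Computing directly, $f(d+3)-f(d)=3\lfloor m/2\rfloor\lceil m/2\rceil-m(m-1)$. This is $\le 0$ for $m\ge 3$, so there you only need to add this computation.
\item For $m=2$ the difference equals $1$. Then $f(d')=2t$ for every even $d'$ while $f(d)=2t-1$, and you have only certified $D(n,2,t,d+1)\ge 2t-1$. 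For instance, $m=2$, $d=1$, $t=3$, $n=5$, $d'=4$ gives $f(4)=6>5=f(1)$, so the inequality $f(d')\le f(d)$ you rely on is false there.
\end{itemize}
The repair for $m=2$ is easy. In the boundary case $n=2t-1$, so every pairwise distance, and hence every $D(n,2,t,d')$, is at most $2t-1$, which your construction attains. Alternatively, for $m=2$ the hypothesis of the theorem can never be satisfied.

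Separately, your remark that for $t=c$ \emph{the boundary case cannot occur} is inaccurate. The case $n=d$ is allowed, but then no pair at distance $d+1$ exists in $\Sigma_2^d$, so $D(n,m,t,2c)$ is a maximum over an empty set. The statement needs a convention there; reading $D(n,m,t,2c)$ as $f(2c)$ works, since then $D(n,m,t,d)=f(d)<f(d+1)$.
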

\begin{IEEEproof}
    By Lemma \ref{lem:monotone}, the maximum value among $\{D(n,m,t,d'):d'\geq d\}$ is $D\left(n, m, t, 2\left\lceil \frac{d}{2} \right\rceil\right)$. Therefore, as long as $\sum_{i<j}d_H(\z_i,\z_j) \geq D\left(n, m, t, 2\left\lceil \frac{d}{2} \right\rceil\right) + 1$, then it holds that $\sum_{i<j}d_H(\z_i,\z_j) \geq D(n, m, t, d') + 1$ for all $d'\geq d$. Following Theorem \ref{thm:sufficient}, unique reconstruction is guaranteed.
\end{IEEEproof}

\begin{remark}
Consider the case with $\mathcal{C}=\Sigma^n$ (then $d=1$). For the substitution channel with $t\geq 1$ errors, Leveshtein's reconstruction threshold would require more than
$N\left(n, t, 1\right)=2\sum_{i=0}^{t-1}\binom{n-1}{i}$
distinct reads for unique reconstruction. In our framework, for any $m$ reads where $m\geq 3$ and $n\geq m(t-1)+1$, if the $m$ reads $\z_1,\ldots, \z_m$ satisfy that $$\sum_{i < j}d_H(\z_i, \z_j) \geq 2\left\lfloor\frac{m}{2}\right\rfloor\left\lceil\frac{m}{2}\right\rceil + m\left(m-1\right)\left(t - 1\right) + 1,$$ then unique reconstruction is also guaranteed.
\end{remark}

Similar as Lemma \ref{lem:prob}, in fact the unique reconstruction condition holds with high probability. The proof of the next lemma is also left in the appendix.
\begin{lemma}\label{lem:prob_m}
    Let $\C\subseteq\Sigma_2^n$ be a code with minimum Hamming distance $d$ and $\x\in\C$ be a codeword. When $m, t, d$ are given constants, $n$ is large enough, and the $m$ reads are uniformly chosen from $\binom{B_t(\x)}{m}$, with high probability we can trigger the unique reconstruction condition in Theorem \ref{thm:main} since:
    \begin{small}
    \begin{align*}
        &\mathrm{Pr}\left[\sum_{1\leq i<j\leq m}d_H\left(\z_i, \z_j\right)> D\left(n, m, t, 2\left\lceil \frac{d}{2} \right\rceil\right)\right] \\
        \geq &\mathrm{Pr}\left[\sum_{1\leq i<j\leq m}d_H\left(\z_i, \z_j\right)=m\left(m-1\right)t\right]
        \hspace{-0.5ex}=\hspace{-0.5ex} 1-\Theta\left(n^{-1}\right)\hspace{-0.5ex}.
    \end{align*}
    \end{small}
\end{lemma}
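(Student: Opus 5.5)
The proof splits into a deterministic inequality, which gives the first "$\geq$", and a counting estimate for the event that all $m$ reads have pairwise disjoint error supports of size exactly $t$.

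\emph{The deterministic part.} Since each $\z_i\in B_t(\x)$, we have $|S_i|\leq t$ and $d_H(\z_i,\z_j)=|S_i|+|S_j|-2|S_i\cap S_j|\leq 2t$. Hence $\sum_{i<j}d_H(\z_i,\z_j)\leq m(m-1)t$, with equality if and only if every $|S_i|=t$ and the $S_i$ are pairwise disjoint. I then check that $m(m-1)t>D(n,m,t,2\lceil d/2\rceil)$. By Theorem~\ref{thm:D(n, m, t, d)}, which applies because $n$ is large, writing $d''=2\lceil d/2\rceil$ and $\lceil d''/2\rceil=\lceil d/2\rceil$, we get
\[
m(m-1)t-D(n,m,t,d'')=\left\lceil \tfrac{d}{2}\right\rceil\Bigl(m(m-1)-2\left\lfloor \tfrac{m}{2}\right\rfloor\left\lceil \tfrac{m}{2}\right\rceil\Bigr).
\]
The proof of Lemma~\ref{lem:monotone} shows the bracket is positive when $m\geq 3$. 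For $m=2$ it vanishes, so the statement is meant for $m\geq 3$, the same range as in the Remark. The event "sum $=m(m-1)t$" therefore lies inside the event "sum $>D$", which gives the first inequality.

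\emph{The probability estimate.} I would count ordered $m$-tuples of distinct elements of $B_t(\x)$, since both events are symmetric. The total number is $(V)_m$, where $V=|B_t(\x)|=\binom{n}{t}+O(n^{t-1})$. The number of ordered tuples with pairwise disjoint supports of size exactly $t$ is $\binom{n}{t}\binom{n-t}{t}\cdots\binom{n-(m-1)t}{t}$. Expanding each factor shows that $\binom{n-jt}{t}/\binom{n}{t}=1-\Theta(1/n)$ for constant $j,t$. Also $\binom{n}{t}/V=1-\Theta(1/n)$, because the ratio $\binom{n}{t-1}/\binom{n}{t}$ is $\Theta(1/n)$. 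Finally $(V)_m/V^m=1-O(n^{-t})$. Multiplying these constantly many factors gives a ratio of $1-\Theta(n^{-1})$.

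For the $\Theta$ rather than just $O$, I note that each correction term is negative and of order exactly $1/n$. The complement includes, for instance, the event that $\z_1$ has weight $t-1$, whose probability is already $\Theta(1/n)$ when $t\geq 1$. So the complement probability is $\Theta(n^{-1})$.

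The main obstacle is only bookkeeping: confirming that the strict inequality requires $m\geq 3$, and that the $\Theta(1/n)$ lower bound comes from the weight-deficient reads (and from intersecting supports). Both are handled by the explicit ratio computation above.
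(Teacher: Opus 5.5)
Your proposal is correct and follows essentially the same route as the paper. It embeds the event $\sum_{i<j}d_H(\z_i,\z_j)=m(m-1)t$ in the triggering event via $m(m-1)t>D(n,m,t,2\lceil d/2\rceil)$, then counts configurations with pairwise disjoint weight-$t$ supports against all $m$-subsets of $B_t(\x)$.

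Your version is in fact more careful than the paper's own proof on three points:
\begin{itemize}
\item The paper divides by $\binom{\binom{n}{t}}{m}$, which counts subsets of the sphere rather than the ball. Your factor $\binom{n}{t}/V=1-\Theta(n^{-1})$ corrects this.
\item The paper asserts $D+1\le m(m-1)t$, although this fails at $m=2$. Your restriction to $m\geq 3$ is needed.
\item The paper only establishes the lower bound $1-O(n^{-1})$. Your weight-$(t-1)$ argument supplies the matching $\Omega(n^{-1})$ bound on the complement.
\end{itemize}

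One small inaccuracy: for $t=1$ the factor $V^m/(V)_m$ is a \emph{positive} $\Theta(n^{-1})$ correction, so ``each correction term is negative'' is wrong there. Your weight-$(t-1)$ argument already secures the $\Theta$, so nothing breaks.
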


\subsection{A general upper bound of $D(n, m, t, d)$}

In the previous subsection, we determined the value of $D(n, m, t, d)$ when $n\geq m\left(t-\left\lceil\frac{d}{2}\right\rceil\right)+d$, or equivalently, when $m\leq \frac{n-d}{t-\left\lceil \frac{d}{2} \right\rceil}$. As $m$ grows larger, the upper bound in Inequality (\ref{eqn:D}) cannot be achieved anymore, since the sets $\{S''_i:1\leq i\leq m\}$ cannot be pairwise disjoint. Therefore, the upper bound in Inequality (\ref{eqn:D}) could be further reduced. In this subsection, we present a general upper bound of $D(n, m, t, d)$ for arbitrary $n$ and $m$. While the upper bound follows a similar approach as the previous subsection, analyzing whether the upper bound can be achieved turns out to be more complicated.

\begin{theorem}\label{thm:D(n, m, t, d) for smaller n}
For any positive integers $n,m,t,d$ with $t\geq \left\lceil \frac{d}{2} \right\rceil$,  it holds that
\begin{align*}
    &D\left(n, m, t, d\right) 
    \leq d\left\lfloor\frac{m}{2}\right\rfloor\left\lceil\frac{m}{2}\right\rceil \\
    &+\left(n-d\right)\left(\left\lfloor \frac{m\left(t-\left\lceil \frac{d}{2} \right\rceil\right)}{n-d} \right\rfloor^2+\left\lfloor \frac{m\left(t-\left\lceil \frac{d}{2} \right\rceil\right)}{n-d} \right\rfloor\right) \\
    &+m\left(t-\left\lceil\frac{d}{2}\right\rceil\right)\left(m-2\left\lfloor \frac{m\left(t-\left\lceil \frac{d}{2} \right\rceil\right)}{n-d} \right\rfloor-1\right).
\end{align*}

In particular, when $(n-d)\mid m\left(t-\left\lceil\frac{d}{2}\right\rceil\right)$, it holds that 
\begin{small}
\begin{equation*}
    D\left(n, m, t, d\right)
    \leq d\!\left\lfloor\!\frac{m}{2}\!\right\rfloor\!\left\lceil\!\frac{m}{2}\!\right\rceil\!+\! m\left(t\!-\!\left\lceil\frac{d}{2}\right\rceil\right)\left(m\!-\!\frac{m\left(t\!-\!\left\lceil \frac{d}{2} \right\rceil\right)}{n-d}\right).
\end{equation*}
\end{small}
\end{theorem}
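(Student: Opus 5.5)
Set $\x=0^n$ and $\y=1^d0^{n-d}$, as in the proof of Theorem~\ref{thm:D(n, m, t, d)}. Split each support $S_i$ into $S'_i=S_i\cap[d]$ and $S''_i=S_i\cap[d+1,n]$. By Lemma~\ref{lem:calculate distance}, the pairwise-distance sum separates coordinatewise:
$$\sum_{i<j}d_H(\z_i,\z_j)=\sum_{k\in[d]}c_k(m-c_k)+\sum_{k\in[d+1,n]}c_k(m-c_k).$$
We bound the two parts separately instead of discarding the intersections $|S''_i\cap S''_j|$ as in Inequality (3).

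For the first part, each term satisfies $c_k(m-c_k)\le\lfloor m/2\rfloor\lceil m/2\rceil$, exactly as in the proof of Theorem~\ref{thm:D(n, m, t, d)}. This gives the contribution $d\lfloor m/2\rfloor\lceil m/2\rceil$.

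For the second part, let $\lambda=\sum_{k>d}c_k=\sum_i|S''_i|$. Since $|S''_i|\le t-\max\{|S'_i|,d-|S'_i|\}\le t-\lceil d/2\rceil$, we get $\lambda\le m(t-\lceil d/2\rceil)\triangleq L$. Apply Lemma~\ref{lem:maxF} to the $n-d$ variables $c_{d+1},\dots,c_n$. For fixed $\lambda$, the maximum of $F$ is attained when all $c_k$ lie in $\{\lfloor\lambda/(n-d)\rfloor,\lceil\lambda/(n-d)\rceil\}$. We then want to replace $\lambda$ by $L$ using the monotonicity part of Lemma~\ref{lem:maxF}, which requires $\lambda\le m(n-d)/2$.

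That requirement is the main obstacle. One way around it is to view $c(m-c)$ as a concave function of $c$ and bound it by the piecewise-linear interpolation $g$ through the integer points. Then $\sum_k c_k(m-c_k)\le (n-d)\,g(\lambda/(n-d))$ by concavity. Each linear piece of $g$ has slope $m-2q-1$, which is nonnegative for $q<m/2$, so $g$ is nondecreasing there. If $L/(n-d)$ exceeds $m/2$, I would observe that the claimed bound then dominates the trivial bound $(n-d)\lfloor m/2\rfloor\lceil m/2\rceil$. Alternatively, I would check that the stated expression is an upper bound by the same interpolation, since $g$ restricted to the line through the points $q$ and $q+1$ lies above $g$ everywhere by concavity.

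Evaluating at $\lambda=L$, write $q=\lfloor L/(n-d)\rfloor$ and $r=L-q(n-d)$. Then $r$ coordinates take value $q+1$ and $n-d-r$ take value $q$, so the sum is
$$(n-d-r)q(m-q)+r(q+1)(m-q-1).$$
Substituting $r=L-q(n-d)$ and simplifying should give
$$(n-d)(q^2+q)+L(m-2q-1),$$
which is exactly the stated bound. It is a routine algebra check: the coefficient of $r$ is $m-2q-1$, and the leftover is $(n-d)q(m-q)-(n-d)q(m-2q-1)=(n-d)(q^2+q)$.

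The divisible case then follows by setting $q=L/(n-d)$. The expression becomes $(n-d)(q^2+q)+L(m-2q-1)=L(q+1)+L(m-2q-1)=L(m-q)$, which matches the second display.

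Beyond the monotonicity issue, the only subtlety is that the concave-interpolation line through $q$ and $q+1$ gives a valid upper bound for all $\lambda\le L$ only where its slope $m-2q-1$ is nonnegative. For $q\ge m/2$, I would verify that the final expression still exceeds the true maximum by comparing it directly with $(n-d)\lfloor m/2\rfloor\lceil m/2\rceil$.
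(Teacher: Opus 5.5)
\textbf{There is a genuine gap, in the regime $L:=m(t-\lceil d/2\rceil)>(n-d)\lceil m/2\rceil$, and it cannot be repaired there because the statement itself fails.}

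What you get right: the decomposition, the bound $d\lfloor m/2\rfloor\lceil m/2\rceil$ on the first $d$ coordinates, and your algebra for the balanced configuration are all correct. You have also found exactly the step the paper glosses over. The paper's proof goes from $\sum_{k>d}c_k\le L$ to ``the maximum is attained by the balanced vector with sum $L$'' using the monotonicity part of Lemma~\ref{lem:maxF}, but never checks that $L\le m(n-d)/2$.

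Where your workaround fails: when $L>(n-d)\lceil m/2\rceil$, every linear piece of $g$ to the right of $\lceil m/2\rceil$ has slope $m-2q-1\le -1$. The stated second term is exactly $(n-d)\,g(L/(n-d))$, so it is strictly \emph{smaller} than the trivial bound $(n-d)\lfloor m/2\rfloor\lceil m/2\rceil$. Your claim that it dominates the trivial bound has the inequality reversed, so the ``direct comparison'' you propose for $q\ge m/2$ cannot succeed.

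No other argument can rescue this case, because the statement is false there.
\begin{itemize}
\item Take $n=3$, $m=2$, $t=2$, $d=2$, $\x=000$, $\y=110$, $\z_1=101$, $\z_2=010$. Both reads lie in $B_2(\x)\cap B_2(\y)$ and $d_H(\z_1,\z_2)=3$. But $L=2$, $n-d=1$, $q=2$, and both displayed formulas evaluate to $2$.
\item A less degenerate instance is $n=10$, $m=2$, $t=6$, $d=2$, $\y=1^{2}0^{8}$, with reads supported on $\{1,3,4,5,6\}$ and $\{2,7,8,9,10\}$. Both reads are within distance $5$ of $\x$ and of $\y$, and they are at distance $10$ from each other. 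The bound evaluates to $2+16-10=8$.
\end{itemize}

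The correct conclusion is that the theorem needs the extra hypothesis $m(t-\lceil d/2\rceil)\le (n-d)\lceil m/2\rceil$. Without it, $L$ must be replaced by $\min\{L,(n-d)\lceil m/2\rceil\}$, which caps the second term at $(n-d)\lfloor m/2\rfloor\lceil m/2\rceil$.

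Under that hypothesis your first route is a complete proof. By Jensen, $\sum_{k>d}c_k(m-c_k)\le (n-d)\,g(\lambda/(n-d))$, and since $g$ is nondecreasing on $[0,\lceil m/2\rceil]$ this is at most $(n-d)\,g(L/(n-d))$. Your algebra then correctly turns this into the stated expression. This is the paper's argument (balance first, then monotonicity in the total) compressed into one concavity step. It even covers odd $m$ up to $L\le (n-d)(m+1)/2$, slightly beyond the range $\lambda\le m(n-d)/2$ that Lemma~\ref{lem:maxF} states.

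Use $g$ itself rather than the single chord through $q$ and $q+1$. For even $m$ and $L=(n-d)m/2$ that chord has slope $-1$, so the chord argument breaks even though the bound still holds.
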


\begin{IEEEproof}
    We follow the same notations as before. Without loss of generality consider $\x = 0^n$, $\y = 1^d0^{n-d}$, and $\{\z_1,\z_2,\ldots,\z_m\}\subseteq B_t(\x)\cap B_t(\y)$. Then each $S_i$ is the support set of $\z_i$. Let $S'_i=S_i\cap[d]$ and $S''_i=S_i\cap[d+1,n]$, for $i\in[m]$.
    Note that the difference with Theorem \ref{thm:D(n, m, t, d)} is that $\{S''_i:i\in [m]\}$ may not be disjoint. By Lemma \ref{lem:calculate distance} we have
    {\small\begin{align*}
    &\sum_{i<j}d_H\left(\z_i, \z_j\right) \\
    = &\sum_{i<j}\left(\left|S'_i\right|\!+\!\left|S'_j\right|\!-\!2\left|S'_i\cap S'_j\right|\right) \!+ \!\sum_{i<j}\left(\left|S''_i\right|\!+\!\left|S''_j\right|\!-\!2\left|S''_i\cap S''_j\right|\right)\\
    =  &m\sum_{k\in \left[d\right]}c_k - \sum_{k\in \left[d\right]}c_k^2 + m\sum_{k\in \left[d+1,n\right]}c_k - \sum_{k\in \left[d+1,n\right]}c_k^2.
    \end{align*}}

    On one hand, $m\sum_{k\in [d]}c_k - \sum_{k\in [d]}c_k^2$ is upper bounded by $d\left\lfloor\frac{m}{2}\right\rfloor\left\lceil\frac{m}{2}\right\rceil$, where the maximum value can be achieved if $c_k\in\{\left\lfloor\frac{m}{2}\right\rfloor,\left\lceil\frac{m}{2}\right\rceil\}$ for every $k\in[d]$.

    On the other hand, since $|S'_i|+|S''_i|=d(\z_i,\x)\leq t$ and $d-|S'_i|+|S''_i|=d(\z_i,\y)\leq t$, then it holds that $|S''_i|\leq t-\max\{|S'_i|, d - |S'_i|\} \leq t-\left\lceil \frac{d}{2} \right\rceil$, and thus $\sum_{k\in[d+1,n]}c_k \leq m(t-\left\lceil \frac{d}{2} \right\rceil)$. Now let $\lambda\triangleq \lfloor \frac{m(t-\left\lceil \frac{d}{2} \right\rceil)}{n-d}\rfloor$ and let $m(t-\left\lceil \frac{d}{2} \right\rceil)= \lambda(n-d) + R$ with $0\leq R< \lambda$. According to Lemma \ref{lem:maxF}, $m\sum_{k\in [d+1,n]}c_k - \sum_{k\in [d+1,n]}c_k^2$ will be maximized when exactly $R$ values out of $\{c_k: k\in[d+1,n]\}$ are $\lambda+1$, and the rest $n-d-R$ values are $\lambda$. Then it holds that
    \begin{small}
    \begin{align*}
     &\sum_{i<j} d_H\left(\z_i,\z_j\right) \\
    \leq &d\left\lfloor\frac{m}{2}\right\rfloor\left\lceil\frac{m}{2}\right\rceil\!+\! R\left(m\!-\!\lambda\!-\!1\right)\left(\lambda\!+\!1\right)
    \!+\!\left(n-\!d\!-\!R\right)\left(m\!-\!\lambda\right)\lambda \\
    =  &d\left\lfloor\frac{m}{2}\right\rfloor\left\lceil\frac{m}{2}\right\rceil + \left(n-d\right)\left(m-\lambda\right)\lambda + R\left(m-2\lambda-1\right) \\
    = &d\left\lfloor\frac{m}{2}\right\rfloor\left\lceil\frac{m}{2}\right\rceil + \left(n-d\right)\left(m-\lambda\right)\lambda \\
     &+\left(m\left(t-\left\lceil\frac{d}{2}\right\rceil\right)-\lambda\left(n-d\right)\right)\left(m-2\lambda-1\right) \\
    = &d\left\lfloor\frac{m}{2}\right\rfloor\left\lceil\frac{m}{2}\right\rceil\!+\! \left(n\!-\!d\right)\lambda\left(\lambda\!+\!1\right)\!+\! m\left(t\!-\!\left\lceil\frac{d}{2}\right\rceil\right)\left(m\!-\!2\lambda\!-\!1\right) \\
    = & d\left\lfloor\frac{m}{2}\right\rfloor\left\lceil\frac{m}{2}\right\rceil \!+\!\left(n\!-\!d\right)\left(\left\lfloor \frac{m\left(t-\left\lceil \frac{d}{2} \right\rceil\right)}{n-d} \right\rfloor^2\!+\!\left\lfloor \frac{m\left(t-\left\lceil \frac{d}{2} \right\rceil\right)}{n-d} \right\rfloor\right) \\
    &+m\left(t-\left\lceil\frac{d}{2}\right\rceil\right)\left(m-2\left\lfloor \frac{m\left(t-\left\lceil \frac{d}{2} \right\rceil\right)}{n-d} \right\rfloor-1\right).
    \end{align*} 
    \end{small}
    In particular, when $(n-d)|m\left(t-\left\lceil\frac{d}{2}\right\rceil\right)$, plugging $\lambda=\frac{m(t-\left\lceil \frac{d}{2} \right\rceil)}{n-d}$ and $R=0$ into above, it holds that
    \begin{align*}
        &\sum_{i<j} d_H\left(\z_i,\z_j\right) 
        \leq d\left\lfloor\frac{m}{2}\right\rfloor\left\lceil\frac{m}{2}\right\rceil + \left(n-d\right)\left(m-\lambda\right)\lambda\\
        = & d\left\lfloor\frac{m}{2}\right\rfloor\left\lceil\frac{m}{2}\right\rceil + m\left(t\!-\!\left\lceil\frac{d}{2}\right\rceil\right)\left(m\!-\!\frac{m\left(t-\left\lceil \frac{d}{2} \right\rceil\right)}{n-d}\right). 
    \end{align*}
\end{IEEEproof}

Establishing the tightness of the general upper bound in Theorem \ref{thm:D(n, m, t, d) for smaller n} requires us to, similarly as in the proof of Theorem \ref{thm:D(n, m, t, d)}, find an example consisting of $m$ distinct reads, such that \begin{itemize}
        \item For each $k\in [d]$, $c_k = \left\lceil\frac{m}{2}\right\rceil$ or $c_k=\left\lfloor \frac{m}{2}\right\rfloor$.
        \item For each $k\in [d+1, n]$, $c_k$ are almost equal and $\sum_{k=d + 1}^{n} c_k = m\left(t-\left\lceil\frac{d}{2}\right\rceil\right).$
    \end{itemize}
This construction turns out to be nontrivial in general, where the distinctness of the reads is the main difficulty. Nevertheless, we can show the tightness under an additional divisibility condition $(t-\lceil\frac{d}{2}\rceil)\mid (n-d)$. We need the famous Baranyai’s Theorem~\cite{baranyai1974factrization} from combinatorial design theory. 
In design theory, a {\it parallel class} refers to a set of $N/K$ $K$-subsets which form a partition of an $N$-set, when $K$ divides $N$. The Baranyai’s Theorem is as follows.

\begin{theorem}[Baranyai’s Theorem~\cite{baranyai1974factrization}]
    If $K$ divides $N$, the set of all $\binom{N}{K}$ $K$-subsets of an $N$-set may be partitioned into disjoint parallel classes $\mathcal{A}_i$, $i=1, 2,\ldots, \binom{N-1}{K-1}$.
\end{theorem}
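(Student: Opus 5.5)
Since the paper only invokes Baranyai's Theorem as a tool, one option is simply to cite \cite{baranyai1974factrization}. To prove it, I would use the classical network-flow argument, which proves a stronger statement by induction on the number of ground elements processed. Set $M=\binom{N-1}{K-1}$ and note that $M\cdot\frac{N}{K}=\binom{N}{K}$. The claim is as follows. For every $n\in[0,N]$ there exist $M$ families $\mathcal{A}_1^{(n)},\ldots,\mathcal{A}_M^{(n)}$ with these properties:
\begin{itemize}
\item each family is a multiset of exactly $\frac{N}{K}$ subsets of $[n]$, empty sets allowed;
\item the sets in each family partition $[n]$;
\item every $k$-subset $S\subseteq[n]$ occurs, counted with multiplicity over all families, exactly $\binom{N-n}{K-k}$ times.
\end{itemize}
For $n=N$ the multiplicity $\binom{0}{K-k}$ equals $1$ if $k=K$ and $0$ otherwise. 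So every family consists of $\frac{N}{K}$ $K$-subsets partitioning $[N]$, and every $K$-subset occurs exactly once, which is the theorem. For $n=0$, take each family to be $\frac{N}{K}$ copies of $\emptyset$. The empty set then occurs $M\frac{N}{K}=\binom{N}{K}$ times, as required.

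For the inductive step from $n$ to $n+1$, we must add the element $n+1$ to exactly one member of each family so that the multiplicity condition is preserved. I would encode this choice as an integral flow in a network:
\begin{itemize}
\item a source with an arc of capacity $1$ to each family node $i$;
\item an arc from family node $i$ to the node of each distinct subset $S$ of $[n]$, with capacity equal to the multiplicity of $S$ in $\mathcal{A}_i^{(n)}$;
\item an arc from each subset node $S$ with $|S|=k$ to the sink, with capacity $\binom{N-n-1}{K-k-1}$.
\end{itemize}
A fractional flow saturating all source arcs exists. Send $\frac{K-|S|}{N-n}$ along each occurrence of $S$ in family $i$. The amount leaving family $i$ is $\frac{N-n}{N-n}=1$, because the $\frac{N}{K}$ members of the family have total size $n$. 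The amount entering subset node $S$ is $\binom{N-n}{K-k}\frac{K-k}{N-n}=\binom{N-n-1}{K-k-1}$, which is exactly its sink capacity. Hence the maximum flow has value $M$, and by the integrality theorem for network flows there is an integral maximum flow. This integral flow picks one member $S$ in each family, and each $k$-set $S$ is picked exactly $\binom{N-n-1}{K-k-1}$ times.

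Adding $n+1$ to the picked member of each family, I then verify the multiplicities by Pascal's rule.
\begin{itemize}
\item A $k$-subset $T\subseteq[n+1]$ not containing $n+1$ occurs $\binom{N-n}{K-k}-\binom{N-n-1}{K-k-1}=\binom{N-n-1}{K-k}$ times.
\item A $k$-subset $T$ containing $n+1$ arises from $T\setminus\{n+1\}$, which has size $k-1$ and was picked $\binom{N-n-1}{K-(k-1)-1}=\binom{N-n-1}{K-k}$ times.
\end{itemize}
Both counts equal $\binom{N-(n+1)}{K-k}$, and each family still partitions $[n+1]$. The main obstacle is finding the right strengthened induction hypothesis and the fractional flow that makes all the capacities line up. Once those are in place, the integrality theorem does the rest.
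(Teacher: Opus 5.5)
The paper gives no proof of this statement. It imports Baranyai's theorem as a black box from \cite{baranyai1974factrization} and uses it only to build the reads $A_{p,q}$ in the tightness construction for $D(n,m,t,d)$. Your argument is the classical integral-flow proof (essentially Baranyai's original argument, in the form given by van Lint and Wilson). It is correct: the strengthened induction hypothesis, the base and terminal cases, the fractional flow computation, and the Pascal-rule bookkeeping all check out.

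Two verifications are left implicit and are worth stating.
\begin{enumerate}
\item The flow sent along each occurrence of $S$, namely $\frac{K-|S|}{N-n}$, must lie in $[0,1]$. This is needed for the flow to be nonnegative and to respect the family-to-subset capacities. It holds because any $S$ that actually occurs has $\binom{N-n}{K-|S|}>0$, that is, $0\le K-|S|\le N-n$.
\item The word ``exactly'' in ``each $k$-set is picked exactly $\binom{N-n-1}{K-k-1}$ times'' needs one more observation. The sink capacities sum to $M$: your fractional flow saturates all of them, or directly $\sum_k\binom{n}{k}\binom{N-n-1}{K-k-1}=\binom{N-1}{K-1}$ by Vandermonde. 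Hence an integral flow of value $M$ must saturate every sink arc.
\end{enumerate}

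The two routes trade off as follows. Citing keeps the paper focused on reconstruction. Your proof makes the paper self-contained and is constructive: each of the $N$ inductive steps is a max-flow computation with integer capacities. So the parallel classes needed in the paper's tightness construction can actually be computed, in time polynomial in $\binom{N}{K}$, rather than merely shown to exist.
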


By Baranyai’s Theorem we have the following result.

\begin{theorem}
    For any positive integers $n, m, t, d$ with $t\geq \left\lceil \frac{d}{2} \right\rceil$, $m\leq \binom{n-d}{t-\left\lceil\frac{d}{2}\right\rceil}$, and $(t-\left\lceil\frac{d}{2}\right\rceil)\mid (n-d)$, it holds that
    \begin{align*}
        &D\left(n, m, t, d\right) = d\left\lfloor\frac{m}{2}\right\rfloor\left\lceil\frac{m}{2}\right\rceil \\
        &+ \left(n-d\right)\left(\left\lfloor \frac{m\left(t-\left\lceil \frac{d}{2} \right\rceil\right)}{n-d} \right\rfloor^2+\left\lfloor \frac{m\left(t-\left\lceil \frac{d}{2} \right\rceil\right)}{n-d} \right\rfloor\right) \\
        &+ m\left(t-\left\lceil\frac{d}{2}\right\rceil\right)\left(m-2\left\lfloor \frac{m\left(t-\left\lceil \frac{d}{2} \right\rceil\right)}{n-d} \right\rfloor-1\right).
    \end{align*}
\end{theorem}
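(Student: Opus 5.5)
The upper bound is exactly Theorem~\ref{thm:D(n, m, t, d) for smaller n}, so only achievability remains. Set $\tau=t-\left\lceil\frac{d}{2}\right\rceil$ and $N=n-d$, with $\tau\mid N$. Take $\x=0^n$ and $\y=1^d0^{n-d}$. As in the proof of Theorem~\ref{thm:D(n, m, t, d)}, each read will have the form $\z_i=\bm{u}_i\bm{v}_i$. The prefix is $\bm{u}_i\in\{1^{\lceil d/2\rceil}0^{\lfloor d/2\rfloor},\,0^{\lceil d/2\rceil}1^{\lfloor d/2\rfloor}\}$, using the first pattern for $i\leq\lfloor m/2\rfloor$ and the second otherwise. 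The suffix $\bm{v}_i\in\Sigma_2^{N}$ has weight exactly $\tau$.

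Membership in the ball intersection is then checked exactly as before: $d_H(\z_i,\x)\leq \lceil d/2\rceil+\tau=t$ and $d_H(\z_i,\y)\leq\lceil d/2\rceil+\tau= t$. The first $d$ coordinates each have $c_k\in\{\lfloor m/2\rfloor,\lceil m/2\rceil\}$, so they contribute $d\lfloor m/2\rfloor\lceil m/2\rceil$. By Lemma~\ref{lem:calculate distance} and the computation in Theorem~\ref{thm:D(n, m, t, d) for smaller n}, it therefore suffices to choose $m$ \emph{distinct} $\tau$-subsets of $[d+1,n]$ (the supports of the $\bm{v}_i$) whose coordinate degrees $c_k$, $k\in[d+1,n]$, take only the values $\lambda$ and $\lambda+1$, where $\lambda=\lfloor m\tau/N\rfloor$. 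The total degree is automatically $m\tau$. Distinct suffixes give distinct reads regardless of the prefixes, so distinctness of the $\z_i$ reduces to distinctness of the chosen subsets.

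To choose the subsets, apply Baranyai's Theorem with $K=\tau$ dividing $N$. The $\tau$-subsets of $[d+1,n]$ split into $\binom{N-1}{\tau-1}$ disjoint parallel classes, each consisting of $N/\tau$ blocks that partition $[d+1,n]$. Write $m=q\cdot\frac{N}{\tau}+r$ with $0\leq r<N/\tau$. Since $m\leq\binom{N}{\tau}=\binom{N-1}{\tau-1}\frac{N}{\tau}$, we have $q\leq\binom{N-1}{\tau-1}$, and if $q=\binom{N-1}{\tau-1}$ then $r=0$. So we can take all blocks of $q$ full parallel classes plus $r$ blocks of one further class. All chosen subsets are distinct because the classes are disjoint. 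Each full class gives every coordinate degree $1$, and the $r$ extra blocks from one class are pairwise disjoint, so every coordinate has $c_k\in\{q,q+1\}$. Exactly $r\tau$ coordinates have degree $q+1$.

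It remains to check that $q=\lambda$. We have $m\tau/N=q+r\tau/N$ with $0\leq r\tau/N<1$, so $\lambda=q$, and the remainder is $R=r\tau$. Every inequality in the proof of Theorem~\ref{thm:D(n, m, t, d) for smaller n} is then an equality: $|S''_i|=\tau$, the prefix coordinates are balanced, and the suffix degrees are almost equal as in Lemma~\ref{lem:maxF}. So the constructed set attains the stated value. The only nontrivial ingredient is the distinctness combined with near-equal degrees, and Baranyai's Theorem supplies both. The arithmetic check that the achieved sum equals the displayed formula is the same computation as in the upper-bound proof (its $R<\lambda$ should read $R<n-d$), so I expect no further obstacle.
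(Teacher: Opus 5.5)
Your proposal is correct and follows essentially the same route as the paper. Both take the upper bound from Theorem~\ref{thm:D(n, m, t, d) for smaller n} and prove achievability with the split prefixes $1^{\lceil d/2\rceil}0^{\lfloor d/2\rfloor}$ and $0^{\lceil d/2\rceil}1^{\lfloor d/2\rfloor}$, combined with the first $m$ blocks of a Baranyai ordering of the $(t-\lceil d/2\rceil)$-subsets of $[d+1,n]$; your explicit identification of $\lambda$ with $q$ and of $R$ with $r(t-\lceil d/2\rceil)$, together with the correction $0\leq R<n-d$, makes precise the computation the paper leaves as ``routine.''
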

\begin{IEEEproof}
    Consider two sequences $\x = 0^n$, $\y = 1^d0^{n-d}$. Continuing with Theorem \ref{thm:D(n, m, t, d) for smaller n}, it suffices to find $m$ reads $\{\z_1,\z_2,\ldots,\z_m\}\subseteq B_t(\x)\cap B_t(\y)$ such that the summation of their pairwise distance achieves the upper bound.

    By Baranyai’s Theorem, since $(t-\left\lceil\frac{d}{2}\right\rceil)\mid (n-d)$, we can partition the set of all the  $(t-\left\lceil\frac{d}{2}\right\rceil)$-subsets of the set $[d+1,n]$ into disjoint parallel classes. Arbitrarily order the parallel classes and arbitrarily order the $\frac{n-d}{t-\left\lceil \frac{d}{2} \right\rceil}$ subsets within each parallel class. Let $A_{p,q}$ be the $q$-th subset in the $p$-th parallel class, where $p\in [\binom{n-d}{t-\left\lceil\frac{d}{2}\right\rceil}(t-\left\lceil\frac{d}{2}\right\rceil)/(n-d)]$ and $q\in[\frac{n-d}{t-\lceil \frac{d}{2} \rceil}]$.


    Now we construct the support sets of the reads. For each read $\z_i$, $i\in[m]$, let $i=(p_i-1)\frac{n-d}{t-\left\lceil \frac{d}{2} \right\rceil}+q_i$ and set its support set $S_i$ as
    \begin{align*}
        S_i=\begin{cases}
            \left[\left\lceil\frac{d}{2}\right\rceil\right] \cup A_{p_i,q_i},& \text{if }  i\leq \left\lfloor\frac{m}{2}\right\rfloor,\\
            \left[\left\lceil\frac{d}{2}\right\rceil+1, d\right] \cup A_{p_i,q_i}, & \text{otherwise}.
        \end{cases}
    \end{align*}

    It is possible to verify that
    \begin{itemize}
        \item $\z_i\in B_t(0^n)\cap B_t(1^d0^{n-d})$ for any $i\in [m]$,
        \item $c_k = \left\lfloor\frac{m}{2}\right\rfloor$ for $k\in \left[\left\lceil\frac{d}{2}\right\rceil\right]$ and $c_k = \left\lceil\frac{m}{2}\right\rceil$ for $k\in \left[\left\lceil\frac{d}{2}\right\rceil+1, d\right]$,
        \item $c_k = \frac{m\left(t-\left\lceil\frac{d}{2}\right\rceil\right)}{n-d}$ for $k\in [d + 1, n]$.
    \end{itemize}

    Furthermore, Baranyai’s Theorem guarantees that the values  $\{c_k:k\in[d+1,n]\}$ are almost equal, since each parallel class covers all coordinates exactly once. Thus, $c_k \in \left\{\left\lfloor \frac{m\left(t-\left\lceil\frac{d}{2}\right\rceil\right)}{n-d} \right\rfloor, \left\lceil \frac{m\left(t-\left\lceil\frac{d}{2}\right\rceil\right)}{n-d} \right\rceil\right\}$ for $k\in [d + 1, n]$.

    Given these reads, it is routine to check that the summation of their pairwise distance achieves the upper bound in Theorem \ref{thm:D(n, m, t, d) for smaller n} and thus showing its tightness.
\end{IEEEproof}

\begin{example}
    Let $m = 8$, $t = 4$, and $d = 2$. We consider the sequences $\x = (0, 0,
 \ldots, 0)\in\Sigma_2^{11}$ and $\y = (1, 1, 0, \ldots, 0)\in\Sigma_2^{11}$. Consider $8$ reads
 \begin{align*}
     \z_1 &= \left(1, 0, 1, 1, 1, 0, 0, 0, 0, 0, 0\right),\\
     \z_2 &= \left(1, 0, 0, 0, 0, 1, 1, 1, 0, 0, 0\right),\\
     \z_3 &= \left(1, 0, 0, 0, 0, 0, 0, 0, 1, 1, 1\right),\\
     \z_4 &= \left(1, 0, 1, 0, 0, 1, 0, 0, 1, 0, 0\right),\\
     \z_5 &= \left(0, 1, 0, 1, 0, 0, 1, 0, 0, 1, 0\right),\\
     \z_6 &= \left(0, 1, 0, 0, 1, 0, 0, 1, 0, 0, 1\right),\\
     \z_7 &= \left(0, 1, 1, 0, 0, 0, 1, 0, 0, 0, 1\right),\\
     \z_8 &= \left(0, 1, 0, 1, 0, 0, 0, 1, 1, 0, 0\right).
 \end{align*}
 It can be verified that $\sum_{1\leq i< j\leq 4}d_H(\z_i, \z_j) = 158 = D(11, 8, 4, 2)$. Here the parallel classes we use arise from a $(9,3,1)$-BIBD\footnote{BIBD is short for {\it balanced incomplete block designs}. A $(v,k,\lambda)$-BIBD is a family of $k$-subsets (called blocks) of a $v$-set, such that every two elements appear together in exactly $\lambda$ blocks.} with underlying set $[3,11]$: the parallel class $\{\{3,4,5\},\{6,7,8\},\{9,10,11\}\}$ corresponds to $\{\z_1,\z_2,\z_3\}$, the parallel class $\{\{3,6,9\},\{4,7,10\},\{5,8,11\}\}$ corresponds to $\{\z_4,\z_5,\z_6\}$, and finally the first two sets in the parallel class $\{\{3,7,11\},\{4,8,9\},\{5,6,10\}\}$ corresponds to $\{\z_7,\z_8\}$.
\end{example}

\section{Explicit Reconstruction Algorithms}\label{sec:alg}

In this section, we introduce the explicit reconstruction algorithms corresponding to our reconstruction condition. This section is further divided into three subsections. The first subsection introduces the reconstruction algorithm when a set of reads triggering our unique reconstruction condition, abbreviated as {\it a triggering set}, is given. The second subsection discusses how to find such a triggering set among a large number of reads. The third subsection considers the case when repeated reads are allowed and thus a triggering multi-set can also be used for reconstruction.

\subsection{The reconstruction algorithm with a triggering set}

For a code $\C\subseteq\Sigma_2^n$ with minimum Hamming distance $d$, let $\mathcal{D}_{\mathcal{C}}$ be the minimum distance decoder corresponding to $\mathcal{C}$, i.e., for any $\x\in \mathcal{C}$ and $\y\in B_{\left\lfloor\frac{d-1}{2}\right\rfloor}(\x)$, $\mathcal{D}_{\mathcal{C}}(\y) = \x$. Algorithm \ref{alg:reconstruct} works when we have a set of reads which have triggered the unique reconstruction condition presented in Theorem \ref{thm:main}.

Algorithm \ref{alg:reconstruct} first applies a majority voting method, with a pre-set parameter $\tau$, to determine each coordinate as $0$, $1$, or an undetermined value denoted by a question-mark. Here, a coordinate will be determined as a question-mark if the number of appearances of 0 and 1 at this coordinate among the reads do not differ too much, with respect to the parameter $\tau$. After Step $20$, a sequence $\z\in\{0,1,?\}^n$ is generated. Then a brute-force search is run over all possible values on the undetermined coordinates of $\z$, and the decoder of $\mathcal{C}$ is applied to get many candidates. For each candidate $\widehat{\x}$, in Step $24$ Algorithm \ref{alg:reconstruct} checks if all the reads belong to the error ball centered at $\widehat{\x}$. When the unique reconstruction condition is satisfied, we can prove that by selecting the appropriate parameter $\tau$, exactly one candidate $\widehat{\x}$ reveals and it is exactly the desired $\x$. It should be noted that the majority voting method with a pre-set parameter $\tau$ is not new. It first appeared in~\cite{yaakobi2019uncertainty} and later in other papers about sequence reconstruction such as~\cite{abu2021levenshtein}.

\vspace{-2mm}

\begin{algorithm}
    \SetKwInOut{Input}{Input}\SetKwInOut{Output}{Output}
    \caption{Reconstruction algorithm}\label{alg:reconstruct}

    \Input{$m$ sequences $Z=\left\{\boldsymbol{z}_1, \ldots, \boldsymbol{z}_{m}\right\} \subseteq B_{t}(\boldsymbol{x})$, $\tau$}
    \Output{$\widehat{\x}$, the estimation of $\x$}

    $\z = 0^n$, $F = \varnothing$\;
    \For{$k = 1$ \KwTo $n$}{
        $m_{k, 0} = m_{k, 1} = 0$\;
        \For{$i = 1$ \KwTo $m$}{
            \eIf{$z_{i, k} = 0$}{
                $m_{k, 0} = m_{k, 0} + 1$\;
            }{
                $m_{k, 1} = m_{k, 1} + 1$\;
            }
        }
        \eIf{$|m_{k, 0} - m_{k, 1}| < \tau$}{
            $z_k = ?$, $F = F\cup \{k\}$\;
        }{
            \eIf{$m_{k, 0} > m_{k, 1}$}{
                $z_k = 0$\;
            }{
                $z_k = 1$\;
            }
        }
    }
    $U = \{\boldsymbol{u}\in\{0, 1\}^n:u_k = z_k$ for all $k\notin F\}$\;
    \For{$\boldsymbol{u}\in U$}{
        $\widehat{\x} = \mathcal{D_C}(\boldsymbol{u})$\;
        \If{$Z\subseteq B_t(\widehat{\x}$)}{
            \Output{$\widehat{\x}$}
        }
    }
\end{algorithm}

\vspace{-5mm}

\begin{lemma}\label{lem:tau}
    Given $n, m, t, d$ with  $t\geq \lceil \frac{d}{2} \rceil$, $n\geq m\left(t\!-\!\left\lceil\frac{d}{2}\right\rceil\right)\!+\!d$, and a code $\C$ with minimum Hamming distance $d$, let
    \begin{align*}
       \tau = &\tau_{n, m, t, d} \\
       = &\sqrt{\frac{4}{\left\lceil\frac{d}{2}\right\rceil}\left(m\left(m\!-\!1\right)t\!-\!D\left(n, m, t, 2\left\lceil\frac{d}{2}\right\rceil\right)\right)\! +\! 1}\!-\!m\!+\!1.
    \end{align*}
    Then, for any input $\left\{\boldsymbol{z}_1, \ldots, \boldsymbol{z}_{m}\right\} \subseteq B_{t}(\boldsymbol{x})$ such that $\sum_{i<j}d_H(\z_i, \z_j) \geq D(n, m, t,2\left\lceil\frac{d}{2}\right\rceil) + 1$, at most $\lfloor \frac{d-1}{2}\rfloor$ errors exist in the determined coordinates of $\z$.
\end{lemma}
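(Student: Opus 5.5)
The plan is to work with the counts $c_k$ (the number of reads differing from $\x$ at coordinate $k$) and to show that an erroneous determined coordinate forces $c_k$ to be large. Too many such coordinates would then make $\sum_{i<j}d_H(\z_i,\z_j)$ too small to satisfy the triggering condition. Write $h=\left\lceil\frac{d}{2}\right\rceil$ and $\Delta = m(m-1)t - D\left(n,m,t,2h\right)$.

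\emph{Step 1 (characterize errors).} Fix a determined coordinate $k\notin F$. The value of $z_k$ is wrong exactly when the majority disagrees with $x_k$ by a margin of at least $\tau$. This means $c_k-(m-c_k)\geq \tau$, i.e., $c_k\geq \frac{m+\tau}{2}$. So an error in a determined coordinate forces $c_k\geq \frac{m+\tau}{2}$.

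\emph{Step 2 (deficit identity).} By Lemma~\ref{lem:calculate distance},
$$m(m-1)t-\sum_{i<j}d_H(\z_i,\z_j) = \sum_{k\in[n]} c_k(c_k-1) + (m-1)\Bigl(mt-\sum_{k\in[n]}c_k\Bigr).$$
This is checked by expanding $\sum_k c_k(m-c_k)$. Since each $\z_i\in B_t(\x)$, we have $\sum_k c_k=\sum_i|S_i|\leq mt$, so the last term is nonnegative. Combining with the hypothesis $\sum_{i<j}d_H(\z_i,\z_j)\geq D(n,m,t,2h)+1$ gives
$$\sum_{k\in[n]}c_k(c_k-1)\leq \Delta-1<\Delta.$$

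\emph{Step 3 (counting).} Suppose, for contradiction, that at least $h$ determined coordinates are erroneous. The function $c\mapsto c(c-1)$ is increasing for $c\geq \frac12$, and every term in the sum is nonnegative. By Step 1 we then get
$$\sum_k c_k(c_k-1)\geq h\cdot \frac{m+\tau}{2}\cdot\frac{m+\tau-2}{2} = \frac{h}{4}\bigl((\tau+m-1)^2-1\bigr).$$
The choice of $\tau$ is exactly $(\tau+m-1)^2-1=\frac{4\Delta}{h}$. Hence the right-hand side equals $\Delta$, which contradicts Step 2. Therefore at most $h-1=\left\lfloor\frac{d-1}{2}\right\rfloor$ determined coordinates are erroneous.

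\emph{Main obstacle.} The only delicate point is Step 2: finding the right rearrangement of $m(m-1)t-\sum_k c_k(m-c_k)$, so that the slack $mt-\sum_k c_k$ enters with a nonnegative sign and the remaining term is the convex quantity $\sum_k c_k(c_k-1)$. We also need to make sure the monotonicity used in Step 3 applies. It does, because $\frac{m+\tau}{2}\geq \frac12$: one has $\Delta\geq 0$, since $D(n,m,t,2h)\leq m(m-1)t$, and therefore $\tau\geq 1-m+1$, which gives $m+\tau\geq 1$ when $m\geq 2$. Once the identity is in place, the value of $\tau$ is visibly designed to make the bound in Step 3 match $\Delta$ exactly.
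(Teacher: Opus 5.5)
Your proof is correct and is essentially the paper's argument. The paper likewise expands $\sum_{i<j}d_H(\z_i,\z_j)$ via Lemma~\ref{lem:calculate distance}, uses $c_k\geq\frac{m+\tau}{2}$ on the erroneous coordinates, $c_k^2\geq c_k$ on the others and $\sum_k c_k\leq mt$, and reaches the same bound $m(m-1)t-\frac{1}{4}\left((m+\tau-1)^2-1\right)\left\lceil\frac{d}{2}\right\rceil = D\left(n,m,t,2\left\lceil\frac{d}{2}\right\rceil\right)$; your exact identity packages that chain more cleanly, and bounding $c_k(c_k-1)$ by monotonicity avoids the paper's tacit assumption $\tau\geq 0$ when it replaces $(c_k-\frac{m}{2})^2$ by $\frac{\tau^2}{4}$.
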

\begin{IEEEproof}
    Without loss of generality, let $\x=0^n \in \C$ be the transmitted codeword. Algorithm \ref{alg:reconstruct} generates $\z\in\{0,1,?\}^n$ and let $E_{\z} = \{k: z_k = 1\}$ be the erroneous coordinates.

    Now we prove by contradiction. Suppose there are at least $\left\lceil\frac{d}{2}\right\rceil$ errors in $\z$, i.e., $e\triangleq|E_{\z}|\geq \left\lceil\frac{d}{2}\right\rceil$. Recall that $c_k$ is the number of sequences among $\{\z_1,\dots,\z_m\}$ which are different from $\x$ at the coordinate $k$. According to the majority decoding method of the algorithm, for any $k\in E_{\z}$, we have $c_k\geq \frac{m + \tau_{n, m, t, d}}{2}$. Thus,
    \begin{equation} \label{Eqn:5}
     \sum_{k\notin E_{\z}}c_k = \sum_{k\in\left[n\right]}c_k - \sum_{k\in E_{\z}}c_k\leq mt - \frac{m + \tau_{n, m, t, d}}{2}e.
    \end{equation}

    Then the summation of the pairwise distance among the reads satisfy
	\begin{align}
		&\sum_{i<j}d_H(\z_i, \z_j) 
        = m\sum_{k\in [n]}c_k - \sum_{k\in [n]} c_k^2 \nonumber \\
		= &\left(m\sum_{k\in E_{\z}}c_k - \sum_{k\in E_{\z}} c_k^2\right) + \left(m\sum_{k\notin E_{\z}}c_k - \sum_{k\notin E_{\z}} c_k^2\right) \nonumber \\
		= &-\sum_{k\in E_{\z}}\left(c_k-\frac{m}{2}\right)^2 + \frac{m^2}{4}e + m\sum_{k\notin E_{\z}}c_k - \sum_{k\notin E_{\z}} c_k^2 \nonumber \\
		\leq &-\sum_{k\in E_{\z}}\left(c_k-\frac{m}{2}\right)^2 + \frac{m^2}{4}e + \left(m - 1\right)\sum_{k\notin E_{\z}}c_k \nonumber \\
		\leq &\frac{m^2 - \tau_{n, m, t, d}^2}{4}e + \left(m-1\right)\left(mt - \frac{m + \tau_{n, m, t, d}}{2}e\right) \label{Eqn:6} \\
		=&m\left(m-1\right)t - \frac{\left(m+\tau_{n, m, t, d}\right)\left(m+\tau_{n, m, t, d}-2\right)}{4}e \nonumber \\
		\leq &m\left(m-1\right)t - \frac{\left(m+\tau_{n, m, t, d}-1\right)^2 - 1}{4}\left\lceil\frac{d}{2}\right\rceil \nonumber \\
		= &D\left(n,m, t, 2\left\lceil\frac{d}{2}\right\rceil\right), \nonumber
	\end{align} where Inequality (\ref{Eqn:6}) arises from plugging in $c_k\geq \frac{m + \tau_{n, m, t, d}}{2}$ and Inequality (\ref{Eqn:5}), and the last step follows from plugging in the parameter $\tau$. We arrive at a contradiction to the premise that the set of reads is a triggering set.
\end{IEEEproof}

\begin{remark}\label{rem:tau}
    When $m$ is large enough, it follows that 
    \begin{align*}
       &\tau_{n, m, t, d} \\
       = &\sqrt{\hspace{-0.2ex}\frac{4}{\left\lceil\frac{d}{2}\right\rceil}\hspace{-0.2ex}\left(\hspace{-0.2ex}m\left(m\hspace{-0.2ex}-\hspace{-0.2ex}1\right)t\hspace{-0.2ex}-\hspace{-0.2ex}D\left(n, m, t, 2\hspace{-0.2ex}\left\lceil\frac{d}{2}\right\rceil\right)\right) \hspace{-0.2ex}+\hspace{-0.2ex} 1}\hspace{-0.2ex}-\hspace{-0.2ex}m\hspace{-0.2ex}+\hspace{-0.2ex}1\\
       =&\sqrt{\hspace{-0.2ex}\frac{4}{\left\lceil\frac{d}{2}\right\rceil}\hspace{-0.2ex}\left(\hspace{-0.2ex}m\left(m\hspace{-0.2ex}-\hspace{-0.2ex}1\right)\hspace{-0.2ex}\left\lceil\frac{d}{2}\right\rceil\hspace{-0.2ex}-2\hspace{-0.2ex}\left\lceil\frac{d}{2}\right\rceil\left\lceil\frac{m}{2}\right\rceil\left\lfloor\frac{m}{2}\right\rfloor\right) \hspace{-0.2ex}+ \hspace{-0.2ex}1}\hspace{-0.2ex}-\hspace{-0.2ex}m\hspace{-0.2ex}+\hspace{-0.2ex}1\\
       =&\sqrt{\frac{4}{\left\lceil\frac{d}{2}\right\rceil}\left\lceil\frac{d}{2}\right\rceil\left(m\left(m-1\right)-2\left\lceil\frac{m}{2}\right\rceil\left\lfloor\frac{m}{2}\right\rfloor\right) + 1}-m+1\\
       =&\sqrt{2m^2 + o\left(m^2\right)}-m+1\\
       =&\left(\sqrt{2} - 1\right)m+o\left(m\right).
    \end{align*}
\end{remark}

Since there is a brute-force search module in the algorithm, its complexity highly depends on the size of $F$, which is upper bounded as follows.

\begin{lemma}\label{lem:S}
    The size of $F$ is at most $\frac{2mt}{m-\tau_{n, m, t, d}}$.
\end{lemma}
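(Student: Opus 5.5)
We bound $|F|$ by double counting the total number of errors $\sum_{k\in[n]} c_k$, exactly as in the proof of Lemma \ref{lem:tau}. As there, we take $\x=0^n$ without loss of generality, so that $c_k$ is simply the number of reads having a $1$ in coordinate $k$.

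First, since every read lies in $B_t(\x)$, each $\z_i$ differs from $\x$ in at most $t$ coordinates. Summing over the reads gives
\[
\sum_{k\in[n]} c_k=\sum_{i=1}^m |S_i|\leq mt .
\]

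Next, we obtain a lower bound on $c_k$ for every $k\in F$. By Algorithm \ref{alg:reconstruct}, a coordinate $k$ is placed into $F$ exactly when $|m_{k,0}-m_{k,1}|<\tau$. Here $m_{k,1}=c_k$ and $m_{k,0}=m-c_k$, so the condition reads $|m-2c_k|<\tau$. In particular $m-2c_k<\tau$, which gives
\[
c_k>\frac{m-\tau_{n,m,t,d}}{2} .
\]

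Combining the two bounds, we get
\[
|F|\cdot\frac{m-\tau_{n,m,t,d}}{2}<\sum_{k\in F}c_k\leq\sum_{k\in[n]}c_k\leq mt .
\]
Therefore $|F|<\frac{2mt}{m-\tau_{n,m,t,d}}$, which is the claimed bound (with strict inequality).

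The only step that needs care is dividing by $m-\tau_{n,m,t,d}$, which requires $\tau_{n,m,t,d}<m$. Set $K=m(m-1)t-D(n,m,t,2\lceil d/2\rceil)$. We will verify $\tau<m$ using the exact value of $D$ from Theorem \ref{thm:D(n, m, t, d)}, which applies under the hypothesis $n\geq m(t-\lceil d/2\rceil)+d$ (note $\lceil 2\lceil d/2\rceil/2\rceil=\lceil d/2\rceil$, and this value of $D$ was already used in Remark \ref{rem:tau}).

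Substituting that value,
\[
K=2\left\lceil\frac{d}{2}\right\rceil\left(\frac{m(m-1)}{2}-\left\lfloor\frac{m}{2}\right\rfloor\left\lceil\frac{m}{2}\right\rceil\right).
\]
The inequality $\tau<m$ is equivalent to $\frac{4}{\lceil d/2\rceil}K+1<(2m-1)^2$. This in turn reduces to
\[
4m(m-1)-8\left\lfloor\frac{m}{2}\right\rfloor\left\lceil\frac{m}{2}\right\rceil+1<4m^2-4m+1,
\]
which holds because $\lfloor m/2\rfloor\lceil m/2\rceil>0$ for $m\geq 2$.

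Hence $m-\tau_{n,m,t,d}>0$, the division is valid, and the bound on $|F|$ follows.
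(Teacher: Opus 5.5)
Your proof is correct and is exactly the paper's double-counting argument: every $k\in F$ has $c_k>\frac{m-\tau_{n,m,t,d}}{2}$, while $\sum_{k}c_k\le mt$. Your extra check that $\tau_{n,m,t,d}<m$, which the bound needs to make sense, is left implicit in the paper and is a worthwhile addition. That check only uses the lower bound $D(n,m,t,2\lceil d/2\rceil)>m(m-1)(t-\lceil d/2\rceil)$. Also, for odd $d$, invoking Theorem~\ref{thm:D(n, m, t, d)} with $d'=d+1$ formally requires $n\ge m(t-\lceil d/2\rceil)+d+1$, which is the same tacit step the paper takes in Remark~\ref{rem:tau}.
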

\begin{IEEEproof}
    For each coordinate $k\in [n]$, $k\in F$ means there are more than  $\frac{m-\tau_{n, m, t, d}}{2}$ sequences which are different from $\x$ at coordinate $k$. As there are at most $mt$ substitution errors in total, we have $|F|\leq \frac{2mt}{m-\tau_{n, m, t, d}}$.
\end{IEEEproof}

With the two lemmas above, the following theorem holds for our reconstruction algorithm.

\begin{theorem}\label{th:opt}
    Let $n, m, t, d$ be positive integers with $t\geq \left\lceil \frac{d}{2} \right\rceil$ and $n\!\geq\! m\left(t\!-\!\left\lceil\frac{d}{2}\right\rceil\right)\!+\!d$. Let $\C$ be a code with minimum Hamming distance $d$. For $\x\in \C$ and $m$ reads $\{\z_1, \ldots, \z_m\} \subseteq B_t(\x)$ such that $\sum_{i < j}d(\z_i, \z_j) \geq D(n, m, t, 2\left\lceil\frac{d}{2}\right\rceil) + 1$, the output of Algorithm \ref{alg:reconstruct} $\widehat{\boldsymbol{x}}$ is exactly the codeword $\x$. Furthermore, when $t, d$ are constants and $n$ is sufficiently large, the time complexity of the algorithm is $\Theta(mn+T_{\C})$, where $T_{\C}$ is the time complexity of the decoder $\mathcal{D}_{\mathcal{C}}$.
\end{theorem}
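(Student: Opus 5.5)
The argument splits into correctness and running time, and both rest on Lemmas~\ref{lem:tau} and~\ref{lem:S} together with Theorem~\ref{thm:main}.

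\emph{Correctness.} Assume without loss of generality that $\x$ is the transmitted codeword.
\begin{enumerate}
\item By Lemma~\ref{lem:tau}, the determined coordinates of $\z$ (those outside $F$) disagree with $\x$ in at most $\lfloor\frac{d-1}{2}\rfloor$ positions.
\item Let $\boldsymbol{u}^*\in U$ be the word obtained by filling each coordinate $k\in F$ with $x_k$. Then $d_H(\boldsymbol{u}^*,\x)\le\lfloor\frac{d-1}{2}\rfloor$, so $\mathcal{D}_{\C}(\boldsymbol{u}^*)=\x$.
\item Since $Z\subseteq B_t(\x)$, the test in Step~24 passes for $\x$. Hence $\x$ is among the outputs.
\item Conversely, suppose a candidate $\widehat{\x}=\mathcal{D}_{\C}(\boldsymbol{u})\in\C$ passes the test, i.e.\ $Z\subseteq B_t(\widehat{\x})$. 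Theorem~\ref{thm:main} says the triggering condition prevents $Z$ from lying in $B_t(\x)\cap B_t(\widehat{\x})$ for any codeword $\widehat{\x}\neq\x$. So $\widehat{\x}=\x$.
\end{enumerate}
Thus every output equals $\x$. I will phrase this as: ``the algorithm outputs $\widehat{\x}$, and $\widehat{\x}=\x$.'' One subtlety is that $\mathcal{D}_{\C}$ is only guaranteed on words within distance $\lfloor\frac{d-1}{2}\rfloor$ of a codeword. On other inputs I will assume it returns some codeword or fails, and in either case the membership test filters the result.

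\emph{Complexity.} The voting loop in Steps~1--20 costs $\Theta(mn)$. The search costs $|U|=2^{|F|}$ times the cost of one decoding plus one membership check of $Z$ in $B_t(\widehat{\x})$, and each membership check costs $O(mn)$. The crux is to show $|F|=O(1)$ when $t,d$ are constants.
\begin{itemize}
\item By Lemma~\ref{lem:S}, $|F|\le \frac{2mt}{m-\tau}$.
\item I will show that $m-\tau\ge cm$ for a constant $c>0$, uniformly in $m\ge2$. By Theorem~\ref{thm:D(n, m, t, d)},
\[
m(m-1)t - D\!\left(n,m,t,2\left\lceil\tfrac{d}{2}\right\rceil\right) = \left\lceil\tfrac{d}{2}\right\rceil\left(m(m-1)-2\left\lfloor\tfrac{m}{2}\right\rfloor\left\lceil\tfrac{m}{2}\right\rceil\right),
\]
as in Remark~\ref{rem:tau}. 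Substituting this into the definition of $\tau$ gives
\[
\tau=\sqrt{4\left(m(m-1)-2\left\lfloor\tfrac{m}{2}\right\rfloor\left\lceil\tfrac{m}{2}\right\rceil\right)+1}-m+1.
\]
\item For $m$ even the radicand is $(m-2)\cdot 2m+1=2m^2-4m+1$. For $m$ odd it is $2m^2-6m+5$. In both cases the radicand is below $2m^2$, so $\tau<(\sqrt2-1)m+1$ and hence $m-\tau>(2-\sqrt2)m-1$.
\item For small $m$, check directly that $\tau<m$. The case $m=2$ gives $\tau=0$, and $m=3$ gives $\tau=\sqrt{13}-2<3$.
\end{itemize}
Together these give $|F|\le \frac{2mt}{m-\tau}=O(t)=O(1)$, so $|U|$ is a constant. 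The search then costs $O(T_{\C}+mn)$, and the total running time is $\Theta(mn+T_{\C})$, where the lower bound comes from reading the input.

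The main obstacle is obtaining a bound on $m-\tau$ that is uniform in $m$, including small $m$. The asymptotic estimate in Remark~\ref{rem:tau} is not sufficient, so the explicit parity computation above is what the proof will actually rely on.
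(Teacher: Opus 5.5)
Your proof follows the paper's. Correctness goes through Lemma~\ref{lem:tau}, the word of $U$ that agrees with $\x$ on $F$, and the Step~24 filter; you also make the uniqueness step explicit via Theorem~\ref{thm:main}, which the paper leaves implicit. Complexity goes through Lemma~\ref{lem:S} and the value of $\tau$.

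Your non-asymptotic bound on $m-\tau$ improves on the paper, which only appeals to the $o(m)$ estimate of Remark~\ref{rem:tau}. Your bound holds uniformly in $m$, and in particular excludes $\tau\ge m$, which would force $F=[n]$.

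Two of your computations are wrong:
\begin{itemize}
\item For odd $m=2k+1$ the radicand is $4\cdot 2k^2+1=2m^2-4m+3$, not $2m^2-6m+5$.
\item For $m=3$ the radicand is $9$, so $\tau=1$, not $\sqrt{13}-2$.
\end{itemize}
Neither slip affects the conclusion. The correct radicand still satisfies $2m^2-4m+3<2m^2$, so $m-\tau>(2-\sqrt2)m-1\ge(\tfrac32-\sqrt2)m>0$ for every $m\ge2$. This means the separate small-$m$ check is unnecessary.

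One small point on the lower bound: reading the input only gives $\Omega(mn)$. The $\Omega(T_{\C})$ part comes from $U\neq\varnothing$, which guarantees the decoder is invoked at least once.
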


\begin{IEEEproof}
   According to Lemma~\ref{lem:tau}, Algorithm \ref{alg:reconstruct} produces a sequence $\z$, and the number of errors in the determined coordinates of $\z$ compared with $\x$ is at most $\frac{d - 1}{2}$. Therefore, during the brute-force search, we have $\boldsymbol{u} \in U$ which agrees with $\x$ on all the undetermined coordinates of $\z$ and thus $d_H(\x,\boldsymbol{u})\leq \frac{d - 1}{2}$. The decoder of the code with Hamming distance $d$ will produce $\hat{\x}=\mathcal{D}_\mathcal{C}(\boldsymbol{u})$, which is exactly $\x$. Moreover, since the unique reconstruction condition is triggered, then Algorithm \ref{alg:reconstruct} screens out the candidates which do not pass the check in Step $24$, and only produces a single output $\x$.

   It is evident that the complexity of the majority voting module is $\Theta(mn)$. According to Remark~\ref{rem:tau}, we have $\tau_{n,m,t,d} = (\sqrt{2}-1)m + o(m)$. By Lemma~\ref{lem:S}, the size of the set $F$ is bounded by $|F|\leq \frac{2mt}{m-\tau_{n, m, t, d}}$. Given that $\tau_{n,m,t,d}\approx (\sqrt{2}-1)m$ and $t$ is fixed, $|F|$ remains bounded by a constant. Consequently, the sizes of both $F$ and $U$ are constant. To iterate through the sequences in $U$, one can construct an initial sequence and subsequently toggle the values at positions specified by $F$. Since $|F|$ and $|U|$ are constants, the time complexity for constructing $U$ is also $O(1)$. For each $u \in U$, the algorithm invokes the decoder $\mathcal{D}_{\mathcal{C}}$ once and computes its Hamming distance relative to all $m$ reads. Therefore, the overall time complexity of the algorithm is $\Theta(mn + T_{\mathcal{C}})$.
\end{IEEEproof}

\begin{remark}
Note that our reconstruction algorithm is valid only if $n\geq m\left(t-\left\lceil\frac{d}{2}\right\rceil\right)+d$. Thus $m$ is at most a linear function of $n$ and the time complexity is then $\Theta(n^2+T_{\mathcal{C}})$.    
\end{remark}

\begin{example}\label{ex:reconstruction algorithm}
        Let $n = 15, m = 6, t = 3$, and $d = 5$. In this case $D(n, m, t, 2\left\lceil\frac{d}{2}\right\rceil) = D(15,6,3,6) = 54$ and $\tau_{n, m, t, d} = \tau_{15,6,3,5} = 2$. Consider the code $\mathrm{BCH}(15, 7)$ with minimum Hamming distance $d = 5$, with parity-check matrix matrix $${H} \hspace{-0.5ex}=\hspace{-0.5ex} \begin{bmatrix}
        1 & 0 & 0 & 0 & 1 & 0 & 1 & 1 & 0 & 0 & 0 & 0 & 0 & 0 & 0 \\
        1 & 1 & 0 & 0 & 1 & 1 & 1 & 0 & 1 & 0 & 0 & 0 & 0 & 0 & 0 \\
        1 & 1 & 1 & 0 & 1 & 1 & 0 & 0 & 0 & 1 & 0 & 0 & 0 & 0 & 0 \\
        0 & 1 & 1 & 1 & 0 & 1 & 1 & 0 & 0 & 0 & 1 & 0 & 0 & 0 & 0 \\
        1 & 0 & 1 & 1 & 0 & 0 & 0 & 0 & 0 & 0 & 0 & 1 & 0 & 0 & 0 \\
        0 & 1 & 0 & 1 & 1 & 0 & 0 & 0 & 0 & 0 & 0 & 0 & 1 & 0 & 0 \\
        0 & 0 & 1 & 0 & 1 & 1 & 0 & 0 & 0 & 0 & 0 & 0 & 0 & 1 & 0 \\
        0 & 0 & 0 & 1 & 0 & 1 & 1 & 0 & 0 & 0 & 0 & 0 & 0 & 0 & 1 \\
        \end{bmatrix}\hspace{-1ex}.$$

        Consider the codeword $$\x=(0, 0, 0, 0, 0, 0, 0, 0, 0, 0, 0, 0, 0, 0, 0)$$ and six reads \begin{align*}
            \z_1 &= \left(1, 0, 0, 0, 0, 0, 0, 0, 1, 1, 0, 0, 0, 0, 0\right),\\
            \z_2 &= \left(1, 0, 0, 0, 0, 0, 0, 0, 0, 0, 1, 1, 0, 0, 0\right),\\
            \z_3 &= \left(0, 0, 0, 0, 0, 0, 0, 1, 1, 0, 0, 0, 1, 0, 0\right),\\
            \z_4 &= \left(0, 0, 0, 0, 0, 0, 0, 1, 1, 0, 0, 0, 0, 1, 0\right),\\
            \z_5 &= \left(0, 1, 1, 0, 0, 0, 0, 0, 1, 0, 0, 0, 0, 0, 0\right),\\
            \z_6 &= \left(1, 0, 0, 0, 0, 1, 0, 1, 0, 0, 0, 0, 0, 0, 0\right),
        \end{align*} in $B_t(\x)$, It follows that $\sum_{i < j}d_H(\z_i, \z_j) = 66 > D(15,6,3,6)$. Applying Algorithm \ref{alg:reconstruct} to the reads $\{\z_1, \ldots, \z_6\}$. Since $\tau = 2$, during the majority-vote we leave a coordinate as question-mark when the difference of the appearances of the two symbols is less than 2. Thus we produce $\z = (?, 0, 0, 0, 0, 0, 0, ?, 1, 0, 0, 0, 0, 0, 0)$ and then we move on to the brute-force module with four sequences in $U$.
        \begin{itemize}
            \item For $\boldsymbol{u} = (1, 0, 0, 0, 0, 0, 0, 1, 1, 0, 0, 0, 0, 0, 0)$, it holds that $H\boldsymbol{u} = (0, 0, 1, 0, 1, 0, 0, 0)$, which implies that the error-pattern with Hamming wight at most $\frac{d - 1}{2}=2$ is $\boldsymbol{e} = (0, 0, 0, 0, 0, 0, 0, 0, 0, 1, 0, 1, 0, 0, 0)$ (One can check that $H\boldsymbol{u} = H\boldsymbol{e}$). In this case, we have a candidate $\hat{\x} = (1, 0, 0, 0, 0, 0, 0, 1, 1, 1, 0, 1, 0, 0, 0)$. However, this candidate cannot pass the check since the last five reads are not in $B_t(\hat{\x})$. Hence $\hat{\x}$ is rejected.
            \item For the other sequences, the decoder of $\mathrm{BCH}(15, 7)$ produces 
            $\hat{\x} = (0, 0, 0, 0, 0, 0, 0, 0, 0, 0, 0, 0, 0, 0, 0).$
        \end{itemize}

         Finally the algorithm output $\hat{\x} = \x$ as desired.
\end{example}

Furthermore, if the reads have better properties, in the sense that $\sum_{i<j}d_H(\z_i, \z_j)$ is even larger, then we may discard the parameter $\tau$ and simply run a majority-vote (in the event of a tie on any coordinate, set it as $0$ or $1$ arbitrarily) to get a sequence $\z\in\{0,1\}^n$. Apply the decoder $\mathcal{D}_\mathcal{C}$ on $\z$ will directly lead to the desired $\x$. This process is formalized in the next theorem.

\begin{theorem}
For any positive integers $n, m, t, d$ with  $t\geq \left\lceil \frac{d}{2} \right\rceil$, $n\geq m\left(t-\left\lceil\frac{d}{2}\right\rceil\right)+d$, and a code $\C$ with minimum Hamming distance $d$, consider $m$ reads $\{\z_1,\z_2,\ldots,\z_m\}\subseteq B_t(\x)$ for a codeword $\x$ with
$$\sum_{i<j}d_H\left(\z_i,\z_j\right) > m\left(m-1\right)t  - \left\lceil\frac{m}{2}\right\rceil\left(\left\lceil\frac{m}{2}\right\rceil - 1\right)\left\lceil\frac{d}{2}\right\rceil.$$
Then, the majority-vote generates a sequence $\z$ which lies in the ball $B_{\lfloor\frac{d - 1}{2}\rfloor}(\x)$, and thus $\mathcal{D}_{\mathcal{C}}(\z)=\x$.
\end{theorem}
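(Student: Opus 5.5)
We argue by contradiction, in the same way as the proof of Lemma~\ref{lem:tau}, but with $\tau$ effectively replaced by the plain majority threshold. Without loss of generality let $\x=0^n$. Let $E_{\z}$ be the set of coordinates where the majority vote outputs a wrong symbol, and write $e=|E_{\z}|$. Suppose $e\geq\left\lceil\frac{d}{2}\right\rceil$, so that $\z\notin B_{\lfloor\frac{d-1}{2}\rfloor}(\x)$. At each $k\in E_{\z}$ the wrong symbol wins or ties, so $c_k\geq\frac{m}{2}$; since $c_k$ is an integer, this means $c_k\geq\left\lfloor\frac{m}{2}\right\rfloor$. Ties are the delicate case, and to be safe we should use the weakest bound consistent with an arbitrary tie-break.

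The key steps run in the following order. First, use Lemma~\ref{lem:calculate distance} to write the distance sum as $\sum_k c_k(m-c_k)$, and split it over $E_{\z}$ and its complement. Second, for $k\notin E_{\z}$ use $c_k(m-c_k)\leq (m-1)c_k$, which holds since $c_k\geq 1$ whenever the term is nonzero. Together with $\sum_{k\in[n]}c_k\leq mt$, this gives
\begin{align*}
\sum_{i<j}d_H(\z_i,\z_j)&\leq \sum_{k\in E_{\z}}c_k(m-c_k)+(m-1)\Big(mt-\sum_{k\in E_{\z}}c_k\Big)\\
&= m(m-1)t-\sum_{k\in E_{\z}}c_k(c_k-1).
\end{align*}
Third, bound each subtracted term from below. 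The function $c(c-1)$ is increasing for $c\geq 1$, so each term is at least $\left\lceil\frac{m}{2}\right\rceil\left(\left\lceil\frac{m}{2}\right\rceil-1\right)$ whenever $c_k\geq\left\lceil\frac{m}{2}\right\rceil$.

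The main obstacle is the tie case: $m$ even and $c_k=\frac{m}{2}$. Here $\left\lceil\frac{m}{2}\right\rceil=\frac{m}{2}$, so the bound $c_k\geq\left\lceil\frac{m}{2}\right\rceil$ still holds. When $m$ is odd there are no ties, and a wrong majority forces $c_k\geq\frac{m+1}{2}=\left\lceil\frac{m}{2}\right\rceil$. So in every case $c_k\geq\left\lceil\frac{m}{2}\right\rceil$ for all $k\in E_{\z}$, and this is exactly why the statement uses the ceiling.

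Combining these bounds with $e\geq\left\lceil\frac{d}{2}\right\rceil$ gives
$$\sum_{i<j}d_H(\z_i,\z_j)\leq m(m-1)t-\left\lceil\frac{m}{2}\right\rceil\left(\left\lceil\frac{m}{2}\right\rceil-1\right)\left\lceil\frac{d}{2}\right\rceil,$$
which contradicts the hypothesis. Hence $e\leq\left\lfloor\frac{d-1}{2}\right\rfloor$, so $\z\in B_{\lfloor\frac{d-1}{2}\rfloor}(\x)$, and the defining property of the minimum distance decoder $\mathcal{D}_{\mathcal{C}}$ yields $\mathcal{D}_{\mathcal{C}}(\z)=\x$.

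The condition $n\geq m\left(t-\left\lceil\frac{d}{2}\right\rceil\right)+d$ is not actually needed in this argument. It only ensures consistency with the earlier framework, namely that the threshold in the statement exceeds $D\left(n,m,t,2\left\lceil\frac{d}{2}\right\rceil\right)$ via Theorem~\ref{thm:D(n, m, t, d)}.
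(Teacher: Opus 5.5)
Your proof is correct and is essentially the paper's argument: assume at least $\lceil d/2\rceil$ wrong coordinates, use $c_k\geq\lceil m/2\rceil$ there, $c_k(m-c_k)\leq(m-1)c_k$ elsewhere, and $\sum_k c_k\leq mt$; the only difference is cosmetic, since you merge the two parts into $m(m-1)t-\sum_{k\in E_{\z}}c_k(c_k-1)$ before bounding, whereas the paper bounds $\sum_{k\in I}c_k(m-c_k)$ and $(m-1)\sum_{k\notin I}c_k$ separately. Two small slips do not affect the proof: in your first paragraph, $c_k\geq m/2$ with $c_k$ an integer gives $c_k\geq\lceil m/2\rceil$ and not merely $\lfloor m/2\rfloor$, as you correctly conclude later; and in your closing remark the threshold is only at least $D(n,m,t,2\lceil d/2\rceil)$, exceeding it by $\lceil d/2\rceil\lfloor m/2\rfloor(\lfloor m/2\rfloor-1)$, which vanishes for $m\leq 3$.
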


\begin{IEEEproof}
    Without loss of generality, let $\x =0^n \in \C$ be the codeword. If $\z\notin B_{\lfloor\frac{d - 1}{2}\rfloor}(\x)$, then there exists a set of coordinates $I\subseteq[n]$, such that $|I|\geq \left\lceil\frac{d}{2}\right\rceil$ and $z_{k} \neq x_k$ for any $k \in I$. According to the majority-vote method, it holds that $c_k \geq \left\lceil\frac{m}{2}\right\rceil$ for any $k\in I$, which implies that $\sum_{k\notin I}c_k\leq mt - \left\lceil\frac{m}{2}\right\rceil |I|$. It follows that
	\begin{align*}
		&\sum_{i<j}d_H\left(\z_i, \z_j\right)\\
        = &m\sum_k c_k  - \sum_k c_k^2\\
		= &m\sum_{k\in I} c_k  - \sum_{k\in I} c_k^2 + m\sum_{k\notin I} c_k  - \sum_{k\notin I} c_k^2\\
		= &-\sum_{k\in I}\left(c_k - \frac{m}{2}\right)^2 + \frac{m^2}{4}|I| + m\sum_{k\notin I} c_k  - \sum_{k\notin I} c_k^2\\
		\leq &-|I|\left(\left\lceil\frac{m}{2}\right\rceil - \frac{m}{2}\right)^2 + \frac{m^2}{4}|I| + \left(m-1\right)\sum_{k\notin I} c_k \\
		\leq & -|I|\left(\left\lceil\frac{m}{2}\right\rceil - \frac{m}{2}\right)^2 + \frac{m^2}{4}|I| + \left(m-1\right)\left(mt \hspace{-2pt}- \hspace{-2pt}\left\lceil\frac{m}{2}\right\rceil |I|\right)\\
		= &\left\lceil\frac{m}{2}\right\rceil\left\lfloor\frac{m}{2}\right\rfloor |I| + \left(m-1\right)\left(mt - \left\lceil\frac{m}{2}\right\rceil |I|\right)\\
		= &m\left(m-1\right)t  - \left\lceil\frac{m}{2}\right\rceil\left(\left\lceil\frac{m}{2}\right\rceil - 1\right) |I|\\
		\leq &m\left(m-1\right)t  - \left\lceil\frac{m}{2}\right\rceil\left(\left\lceil\frac{m}{2}\right\rceil - 1\right)\left\lceil\frac{d}{2}\right\rceil,
	\end{align*}
	which contradicts to the premise of the lemma.
\end{IEEEproof}

\begin{example}
        Let $n = 15, m = 6, t = 3$, and $d = 5$, which implies that $$m\left(m-1\right)t  - \left\lceil\frac{m}{2}\right\rceil\left(\left\lceil\frac{m}{2}\right\rceil - 1\right)\left\lceil\frac{d}{2}\right\rceil = 72.$$

        Continuing Example \ref{ex:reconstruction algorithm}, consider the codeword $\x=(0, 0, 0, 0, 0, 0, 0, 0, 0, 0, 0, 0, 0, 0, 0)$, and six reads \begin{align*}
            \z_1 &= \left(1, 0, 0, 0, 0, 0, 0, 0, 1, 1, 0, 0, 0, 0, 0\right),\\
            \z_2 &= \left(1, 0, 0, 0, 0, 0, 0, 0, 0, 0, 1, 1, 0, 0, 0\right),\\
            \z_3 &= \left(0, 0, 0, 0, 0, 1, 0, 1, 0, 0, 0, 0, 1, 0, 0\right),\\
            \z_4 &= \left(0, 0, 0, 0, 0, 0, 0, 1, 1, 0, 0, 0, 0, 1, 0\right),\\
            \z_5 &= \left(0, 1, 1, 0, 0, 0, 0, 0, 1, 0, 0, 0, 0, 0, 0\right),\\
            \z_6 &= \left(0, 0, 0, 1, 1, 0, 0, 0, 1, 0, 0, 0, 0, 0, 0\right),
        \end{align*} in $B_t(\x)$. It follows that $\sum_{i < j}d_H(\z_i, \z_j) = 74 > 72$. Applying Algorithm \ref{alg:reconstruct} to the reads $\{\z_1, \ldots, \z_6\}$ with $\tau = 0$, we produce $\z = (0, 0, 0, 0, 0, 0, 0, 0, 1, 0, 0, 0, 0, 0, 0)$ and decode $\x = (0, 0, 0, 0, 0, 0, 0, 0, 0, 0, 0, 0, 0, 0, 0)$ as desired.
\end{example}

\subsection{Finding a triggering set from a larger number of reads}
\label{subsec:more reads}
In the previous subsection we assume that a triggering set is given. When we obtain a set of $M$ reads $\{\z_1, \ldots, \z_M\} \subseteq B_t(\x)$, how do we find such a triggering set among these reads? One may suggest to simply check the summation of the pairwise distance of all the $M$ reads. However, this is not always the best strategy. We offer an example as follows.

\begin{example}\label{emp:trigger}
    Let $n = 15, m = 8, t = 4$, and $d = 5$. In this case $D(n, m, t, 2\left\lceil\frac{d}{2}\right\rceil) = D(15,8,4,6) = 152$ and $\tau_{n, m, t, d} = \tau_{15,8,4,5} = \sqrt{97}-7 \approx 2.85$. Consider the code $\C$ in Example \ref{ex:reconstruction algorithm}, the codeword $\x=(0, 0, 0, 0, 0, 0, 0, 0, 0, 0, 0, 0, 0, 0, 0)$, and eight reads 
    \begin{align*}
    \z_1 &= \left(1, 1, 1, 1, 0, 0, 0, 0, 0, 0, 0, 0, 0, 0, 0\right),\\
    \z_2 &= \left(1, 0, 0, 0, 1, 1, 1, 0, 0, 0, 0, 0, 0, 0, 0\right),\\
    \z_3 &= \left(0, 0, 0, 0, 0, 0, 0, 1, 1, 1, 1, 0, 0, 0, 0\right),\\
    \z_4 &= \left(0, 0, 0, 0, 0, 0, 0, 1, 1, 0, 0, 1, 1, 0, 0\right),\\
    \z_5 &= \left(1, 0, 0, 0, 0, 0, 0, 1, 1, 1, 0, 0, 0, 0, 0\right),\\
    \z_6 &= \left(1, 0, 0, 0, 0, 0, 0, 1, 1, 0, 1, 0, 0, 0, 0\right),\\
    \z_7 &= \left(1, 0, 0, 0, 0, 0, 0, 1, 1, 0, 0, 1, 0, 0, 0\right),\\
    \z_8 &= \left(1, 0, 0, 0, 0, 0, 0, 1, 1, 0, 0, 0, 1, 0, 0\right),
    \end{align*} in $B_t(\x)$, It follows that $\sum_{i < j}d_H(\z_i, \z_j) = 126 \leq D(15,8,4,6)$, and thus the eight reads together do not form a triggering set. In fact, if we do apply Algorithm \ref{alg:reconstruct} to the reads $\{\z_1, \ldots, \z_8\}$ with respect to the parameter $\tau\approx 2.85$, then we will produce $\z = (1, 0, 0, 0, 0, 0, 0, 1, 1, 0, 0, 0, 0, 0, 0)$ by the majority vote and decode $\hat{\x} = (1, 0, 0, 0, 0, 0, 0, 1, 1, 1, 0, 1, 0, 0, 0)$. It will then be screened out since the first read does not belongs to $B_t(\hat{\x})$. Note that the correct sequence $\x$ does not appear as a candidate. The algorithm outputs nothing and thus fails.

    On the other hand, the first four reads already constitute a triggering set since $\sum_{1\leq i < j \leq 4} d_H(\z_i, \z_j) = 42$ is larger than $D(15, 4, 4, 6) = 36$. Applying Algorithm \ref{alg:reconstruct} on the first four reads will output the correct $\x$ as desired.
\end{example}

\begin{remark}
Example~\ref{emp:trigger} shows that it is possible that while the entire set of $M$ reads is {\it not} a triggering set, a subset of these reads might be. 
\end{remark}

Motivated by this example, in this subsection we discuss how to find a subset of $m$ reads among the $M$ available reads which can be used to uniquely reconstruct $\x$. We describe the general problem in a {\it weighted graph} model as follows.

\begin{itemize}
    \item {\it The weighted clique problem:} Consider a complete graph $K_M$ where each edge is assigned a nonnegative weight. Given $D$ and $m$, with $m\leq M$, find a clique of size $m$ in which the summation of all edge-weights is larger than $D$.
    \item {\it The Hamming weighted clique problem:} Consider a set of $M$ sequences from $\Sigma_2^n$ and let them be the vertex set of a complete graph $K_M$. Assign each edge $\{\x,\y\}$ with weight $d_H(\x,\y)$. Given $m\leq M$, find a clique of size $m$ in which the summation of all edge-weights is larger than $D(n,m,t,d)$.
\end{itemize}

By the following simple reduction to the \textbf{Clique} problem of graphs, it can be shown that the  weighted clique problem is an \textbf{NP}-hard problem.

\begin{lemma}\label{lem:np-hard}
Consider a complete graph $K_M$ where each edge $\{\x, \y\}$ has weight $\omega(\x, \y)$. For any input $m$ and $D$, it is \textbf{NP}-hard to determine whether there is a clique of size $m$ in which the summation of all edge-weights is larger than $D$.
\end{lemma}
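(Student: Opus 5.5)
We will reduce the classical \textbf{Clique} problem to the weighted clique problem. \textbf{Clique} asks, for a graph $G=(V,E)$ with $|V|=M$ and an integer $m$, whether $G$ contains a clique of size $m$, and it is \textbf{NP}-complete.

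\emph{Construction.} Given an instance $(G,m)$, we build the complete graph $K_M$ on the vertex set $V$. Each edge $\{\x,\y\}$ receives weight
\[\omega(\x,\y)=1 \text{ if } \{\x,\y\}\in E, \qquad \omega(\x,\y)=0 \text{ otherwise}.\]
We set $D=\binom{m}{2}-1$. The weights are nonnegative, as the weighted clique problem requires, and the construction takes time polynomial in $M$.

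\emph{Correctness.} Take any $m$-subset $W\subseteq V$. Its total edge weight equals the number of edges of $G$ lying inside $W$, and this number is at most $\binom{m}{2}$. The total weight therefore exceeds $D=\binom{m}{2}-1$ exactly when it equals $\binom{m}{2}$. That happens exactly when every pair of vertices in $W$ is adjacent in $G$, that is, when $W$ is a clique of $G$.

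Hence $K_M$ has an $m$-clique of total weight larger than $D$ if and only if $G$ has a clique of size $m$. A polynomial-time algorithm for the weighted clique problem would thus decide \textbf{Clique}, so the weighted clique problem is \textbf{NP}-hard.

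There is no real obstacle here. The only care needed is to match the statement's strict inequality "larger than $D$" by choosing $D=\binom{m}{2}-1$ rather than $\binom{m}{2}$. Any $D$ in the interval $[\binom{m}{2}-1,\binom{m}{2})$ would also work, but using an integer keeps the instance integral. This reduction does not by itself show that the Hamming-weighted version is hard, and the lemma does not claim that it is.
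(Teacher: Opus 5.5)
Your proposal is correct and is essentially the paper's own proof. Both reduce from \textbf{Clique} by putting the $0/1$ indicator weights of $E_G$ on the edges of $K_M$ and taking the threshold $D=\binom{m}{2}-1$, so an $m$-subset exceeds $D$ exactly when it is a clique of $G$.
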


\begin{IEEEproof}
The decision version of the \textbf{Clique} problem, which is well-known to be \textbf{NP}-hard, is of the following form: Given integers $m\leq M$ and an arbitrary undirected graph $G$ with $M$ vertices, does $G$ contain a clique of size $m$?

Given an instance of the \textbf{Clique} decision problem, we can construct an instance of the weighted clique problem as follows. Build a complete graph $K_M$ with the same vertex set as $G$, and assign  the weight of each edge as:
        $$\omega\left(i,j\right) = \begin{cases}
                1, & \text{if } \{i,j\} \in E_G, \\
                0, & \text{otherwise}.
            \end{cases}$$

Now, consider the problem of finding a clique of size $m$ in $K_M$, in which the summation of all edge-weights is larger than $D\triangleq \binom{m}{2} - 1$. Note that the weight sum for any $m$-clique in $K_M$ is at most $\binom{m}{2}$, where equality holds if and only if all $\binom{m}{2}$ edges have weight 1, or equivalently, the corresponding vertices in $G$ form a clique. Therefore, $G$ has a clique of size $m$ if and only if there is an $m$-clique in $K_M$ with total edge-weights $\binom{m}{2}$. This reduction runs in polynomial time and reduces the \textbf{Clique} decision problem to the weighted clique problem. Hence, the weighted clique problem is \textbf{NP}-hard.
\end{IEEEproof}

\begin{remark}
While the reduction establishes \textbf{NP}-hardness for the weighted clique problem, we are yet not sure if there is a similar reduction from the \textbf{Clique} problem to the Hamming weighted clique problem, which is indeed what is needed to find a triggering set. Compared with the weighted clique problem, in the Hamming weighted clique problem the weighted complete graph induced by binary sequences impose additional geometric and combinatorial constraints. In particular, the following constraints hold.
\begin{itemize}
        \item The edge weights, interpreted as Hamming distances, must be positive integers.
        \item The edge weights must satisfy the distance triangle inequality. For any three vertices $\x,\y,\z$, it must hold that $d_H(\x,\y) + d_H(\x,\z) \geq d_H(\y, \z)$.
        \item The sum of edge weights in any odd-sized clique must be even, since each coordinate will contribute an even value to the summation.
        \item Each vertex is incident with at most $\binom{n}{d}$ edges with weight $d$, for any $1\leq d \leq n$.
    \end{itemize}
Could these constraints dramatically change the difficulty of the problem? We feel negative and still conjecture that the  Hamming weighted clique problem is also \textbf{NP}-hard. However we fail to find a proof by reduction yet. We leave this problem for future research.
\end{remark}

Given the analysis above, to find a triggering set among a large set of reads is a nontrivial task. We close this subsection by introducing a pruning strategy to find triggering sets.

\begin{lemma}\label{lem:average}
For any $\{\z_1, \ldots, \z_M\}$ such that $\sum_{1\leq i<j\leq M}d_H(\z_i, \z_j) > D(n,M, t, 2\left\lceil\frac{d}{2}\right\rceil)$, and for any $m\in [M]$, define $\mathrm{Sum}(m)\triangleq \sum_{1\leq i<j\leq M, i,j\neq m} d_H(\z_i, \z_j)$. Then there exists some $m$ such that $$ \mathrm{Sum}\left(m\right) > \frac{\left(M-2\right)D\left(n,M, t, 2\left\lceil\frac{d}{2}\right\rceil\right)}{M}.$$
\end{lemma}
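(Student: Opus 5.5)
The plan is a double-counting (averaging) argument over the $M$ leave-one-out sums, in the same spirit as Lemma~\ref{lem:calculate distance}. Write $W \triangleq \sum_{1\leq i<j\leq M} d_H(\z_i,\z_j)$ for the total pairwise distance of all $M$ reads.

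First we compute $\sum_{m=1}^{M}\mathrm{Sum}(m)$. Fix a pair $\{i,j\}$ with $i<j$. The term $d_H(\z_i,\z_j)$ appears in $\mathrm{Sum}(m)$ exactly when $m\notin\{i,j\}$, which happens for $M-2$ values of $m$. Hence
\begin{equation*}
\sum_{m=1}^{M}\mathrm{Sum}(m) = (M-2)\,W .
\end{equation*}

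Next we average. Some index $m$ satisfies
\begin{equation*}
\mathrm{Sum}(m) \geq \frac{1}{M}\sum_{m'=1}^{M}\mathrm{Sum}(m') = \frac{(M-2)W}{M}.
\end{equation*}
By hypothesis $W > D\left(n,M,t,2\left\lceil\frac{d}{2}\right\rceil\right)$. Since $M-2>0$, this gives the strict inequality
\begin{equation*}
\mathrm{Sum}(m) > \frac{(M-2)\,D\left(n,M,t,2\left\lceil\frac{d}{2}\right\rceil\right)}{M},
\end{equation*}
which is the claim.

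The only delicate point is the degenerate case $M\le 2$. When $M=2$, every $\mathrm{Sum}(m)$ is an empty sum equal to $0$, and the right-hand side is also $0$, so the strict inequality fails. The lemma is meant for $M\geq 3$, in line with the paper's standing assumption $m\geq 2$ for the subsets being sought, and I would state that restriction explicitly. With $M\geq 3$ the strictness carries over because $M-2>0$, and nothing else in the argument needs care.
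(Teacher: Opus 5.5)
Your proof is correct and matches the paper's argument: each pair is counted $M-2$ times, so $\sum_{m}\mathrm{Sum}(m)=(M-2)\sum_{i<j}d_H(\z_i,\z_j)>(M-2)D(n,M,t,2\lceil d/2\rceil)$, and pigeonhole gives the claim. Your remark that the strict inequality needs $M\geq 3$ is a valid caveat the paper leaves implicit: for $M=2$ both sides are $0$.
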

\begin{IEEEproof}
In the summation $\sum_{m\in[M]} \mathrm{Sum}(m)$, for each pair of reads their distance $ d_H(\z_i, \z_j)$ is calculated exactly $M-2$ times. Thus
\begin{align*}
\sum_{m\in\left[M\right]} \mathrm{Sum}\left(m\right) 
= &\left(M-2\right) \sum_{1\leq i<j\leq M}d_H\left(\z_i, \z_j\right) \\
> &\left(M-2\right) D\left(n,M, t, 2\left\lceil\frac{d}{2}\right\rceil\right),
\end{align*}
and thus by the pigeonhole principle there must exist some $m$ such that $\mathrm{Sum}(m) > \frac{(M-2)D(n,M, t, 2\left\lceil\frac{d}{2}\right\rceil)}{M}$.
\end{IEEEproof}

Following the similar idea, it is routine to show that for any $m$ sequences with the summation of their pairwise distance being at least $\frac{m(m-1)}{M(M-1)}D(n,M, t, 2\left\lceil\frac{d}{2}\right\rceil)$, there exists a subset of $m-1$ sequences with the summation of their pairwise distance being at least $\frac{(m-1)(m-2)}{M(M-1)}D(n,M, t, 2\left\lceil\frac{d}{2}\right\rceil)$. By Lemma \ref{lem:average}, given a set of $M$ reads, we  proceed as follows.
\begin{itemize}
    \item If $\sum_{1\leq i < j\leq M} d_H(\mathbf{z}_i, \mathbf{z}_j) > D\left(n, M, t, 2\left\lceil \tfrac{d}{2} \right\rceil\right)$, Algorithm~\ref{alg:reconstruct} is applied to reconstruct $\x$. Otherwise, $M$ is decremented by one and the procedure continues to the subsequent steps to seek for a triggering subset.
    \item Let $S_3 = \varnothing$. Check all triples of reads $\{\z_i,\z_j,z_k\}$. If the summation of their pairwise distance is strictly larger than $D(n,3,t,2\left\lceil\frac{d}{2}\right\rceil)$, then a triggering set of size 3 has been found. Otherwise, if the summation of their pairwise distance is strictly larger than $\frac{6 D\left(n,M, t, 2\left\lceil\frac{d}{2}\right\rceil\right)}{M(M-1)}$, then let $S_3=S_3\cup \{\{i,j,k\}\}$.
    \item Recursively perform the following steps. For every $4\leq m\leq M$, set $S_m=\varnothing$. For all sets $A\in S_{m-1}$ and $i\notin A$, consider the union $A\cup\{i\}$. If the summation of the pairwise distance among the reads indexed by $A\cup\{i\}$ is strictly larger than $D(n,m,t,2\left\lceil\frac{d}{2}\right\rceil)$, then a triggering set of size $m$ has been found. Otherwise, if the summation of their pairwise distance is strictly larger than $\frac{m\left(m - 1\right)D\left(n,M, t, 2\left\lceil\frac{d}{2}\right\rceil\right)}{M(M-1)}$, then let $S_m=S_m\cup \{A\cup \{i\}\}$.
\end{itemize}

While many branches have been pruned during the search for triggering sets, Lemma \ref{lem:average} assures that if there is indeed a triggering set $A$ of size $m$, then for any $3\leq m' <m$, there is a at least one subset of $A$ in $S_{m'}$. Thus, the search process will output the triggering set as desired. Moreover, if there is no such triggering set, then at some step the set $S_{m'}$ becomes empty and the search process stops. Once a triggering set is found, Algorithm \ref{alg:reconstruct} in the previous subsection will be applied for unique reconstruction.

\begin{example}
    Let $n = 12$, $M = 6$, $t = 4$ and $d = 6$, it follows that 
    $D(n, m = 3, t, 2\lceil\frac{d}{2}\rceil) = 18$, $D(n, m = 4, t, 2\lceil\frac{d}{2}\rceil) = 36$, $D(n, m = 5, t, 2\lceil\frac{d}{2}\rceil) = 56$, and $D(n, m = 6, t, 2\lceil\frac{d}{2}\rceil) = 84$. Without loss of generality, we assume that $\x = (0, 0, 0, 0, 0, 0, 0, 0, 0, 0, 0, 0)\in\C$ is transmitted. Consider the following six reads in $B_t(\x)$: 
    \begin{align*}
        \z_1 &= \left(1, 1, 1, 1, 0, 0, 0, 0, 0, 0, 0, 0\right),\\
        \z_2 &= \left(1, 0, 0, 0, 1, 1, 1, 0, 0, 0, 0, 0\right),\\
        \z_3 &= \left(0, 1, 0, 0, 1, 0, 0, 1, 1, 0, 0, 0\right)\\
        \z_4 &= \left(0, 0, 1, 0, 0, 1, 0, 1, 0, 1, 0, 0\right),\\
        \z_5 &= \left(0, 1, 0, 0, 0, 1, 0, 0, 0, 1, 1, 0\right),\\
        \z_6 &= \left(1, 1, 0, 0, 0, 1, 0, 1, 0, 0, 0, 0\right).
    \end{align*}
    Since $\sum_{i < j} d_H(\z_i, \z_j) = 78$, Algorithm~\ref{alg:reconstruct} cannot be applied directly to the set $\{\z_1, \ldots, \z_6\}$. 
    Setting $M = 5$, we compute 
    \begin{align*}
    \frac{3(3 - 1) D\!\left(n, M, t, 2\left\lceil \frac{d}{2} \right\rceil\right)}{M(M-1)} &= 16.8,\\
    \frac{4(4 - 1) D\!\left(n, M, t, 2\left\lceil \frac{d}{2} \right\rceil\right)}{M(M-1)} &= 33.6.
    \end{align*}

    We begin with $m = 3$ and examine all $3$-element subsets of the reads. No triple of reads has total pairwise Hamming distance exceeding $18$ (thus no triggering sets of size $3$ exist), but those with total pairwise distance greater than $16.8$ are retained (since each of them could be a subset of a triggering set of size $4$ or $5$), yielding 
    \begin{align*}
        S_3 = \{ &\{\z_1, \z_2, \z_3\}, \{\z_1, \z_2, \z_4\}, \{\z_1, \z_2, \z_5\}, \{\z_1, \z_3, \z_4\},\\ 
        &\{\z_1, \z_3, \z_5\}, \{\z_2, \z_3, \z_4\}, \{\z_2, \z_3, \z_5\}\}.
    \end{align*}

    Next, we consider $m = 4$. By construction, it suffices to examine only those $4$-subsets that contain at least one member of $S_3$. 
    No $4$-subset has total pairwise distance exceeding $36$, but those with total pairwise distance greater than $33.6$ are kept, resulting in 
    $$S_4 = \left\{ \{\z_1, \z_2, \z_3, \z_4\},\ \{\z_1, \z_2, \z_3, \z_5\}\right\}.$$

    Finally, for $m = 5$, we restrict attention to $5$-subsets containing a member of $S_4$. 
    The total pairwise Hamming distance of $\{\z_1, \z_2, \z_3, \z_4, \z_5\}$ is $58 > 56$, satisfying the required threshold. After finding this triggering set, Algorithm~\ref{alg:reconstruct} can be applied to this set.
\end{example}

\subsection{Reconstruction with repeated reads}

In Levenshtein's seminal work~\cite{levenshtein2001efficient}, it is assumed that the reads from all channels are distinct. This assumption makes sense for the study of the unique reconstruction threshold. Up to this point in the paper we have also adhered to this assumption. However, looking back on all the proofs in the previous sections, we do not need this assumption at all! In other words, our general framework and algorithms also work when we are given a set of reads with repetitions. In fact, by allowing repeated reads, the only thing that could be affected (in a good way) is the upper bound in Theorem \ref{thm:D(n, m, t, d) for smaller n}, in the sense that the upper bound can be easier to achieve.

What is more unexpected is that we can artificially make repeated reads, to change a non-triggering set of reads into a triggering multi-set. Here is an example.

\begin{example}\label{emp:rep}
Let $n = 9$, $t = 2$, and $d = 2$. Consider a code $\C$ with minimum Hamming distance $d$. In this case, for $m\in\{6,7\}$ we have $D(9, 6, 2, 2) = 48$ and $D(9, 7, 2, 2) = 66$.
Without loss of generality, we assume that $\x = (0, 0, 0, 0, 0, 0, 0, 0, 0)\in\C$ is transmitted. Consider the following six reads in $B_t(\x)$: 
\begin{align*}
\z_1 &= \left(1, 1, 0, 0, 0, 0, 0, 0, 0\right),
\z_2 = \left(0, 0, 1, 1, 0, 0, 0, 0, 0\right),\\
\z_3 &= \left(0, 0, 0, 0, 0, 0, 0, 1, 1\right), \z_4 = \left(0, 0, 1, 0, 0, 1, 0, 0, 0\right),\\
\z_5 &= \left(0, 0, 1, 0, 0, 0, 1, 0, 0\right), \z_6 = \left(0, 0, 1, 0, 1, 0, 0, 0, 0\right).
\end{align*} Note that $\sum_{1\leq i < j\leq 6}d_H(\z_i, \z_j) = 48 = D(9, 6, 2, 2)$ and thus the six reads do not form a triggering set. However, by repeating $\z_1$ one more time (and thus there are 7 reads), it holds that $$2\sum_{2\leq j\leq 6}\hspace{-0.3pt}d_H\left(\z_1, \z_j\right) + \hspace{-2pt}\sum_{2\leq i < j\leq 6}\hspace{-0.3pt}d_H\left(\z_i, \z_j\right) \hspace{-2pt}=\hspace{-2pt} 68 \hspace{-2pt}>\hspace{-2pt} D\left(9, 7, 2, 2\right).$$
    Thus we can apply Algorithm \ref{alg:reconstruct} to the multiset $\{\{\z_1, \z_1, \z_2, \ldots, \z_6\}\}$ to reconstruct $\x$.
\end{example}

As the example suggests, in our unique reconstruction framework, repeated reads are allowed and we can even artificially make repeated reads to trigger unique reconstruction conditions. Our general framework can be adapted to the following version.

\begin{theorem}
    Let $\C\subseteq\Sigma_2^n$ be a code with minimum Hamming distance $d$ and $\x\in\C$ be a codeword. For any fixed $m\geq2$, and $m$ distinct reads $\{\z_1, \ldots, \z_m\}\subseteq B_t(\x)$, if there exist positive integers $\ell_1,\dots,\ell_m$ such that
  $$ \sum_{1\leq i < j \leq m} \ell_i\ell_jd_H\left(\z_i,\z_j\right) \geq D\left(n,\sum_{1\leq i \leq m} \ell_i,t,2\left\lceil\frac{d}{2}\right\rceil\right)+1,$$ then $\x$ can be uniquely reconstructed.
\end{theorem}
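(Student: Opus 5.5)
The plan is to reduce the statement to the multiset version of Theorem~\ref{thm:sufficient}, combined with the monotonicity of Lemma~\ref{lem:monotone}. The key observation is that the definition of $D(n,L,t,d')$, where $L=\sum_i \ell_i$, only ever enters the argument through the upper bounds of Theorem~\ref{thm:D(n, m, t, d)} and Theorem~\ref{thm:D(n, m, t, d) for smaller n}. Those bounds are derived purely from Lemma~\ref{lem:calculate distance} and the counting of the $c_k$'s, and never use distinctness of the reads. So the first step is to record this explicitly: for any multiset of $L$ sequences (not necessarily distinct) contained in $B_t(\x')\cap B_t(\y')$ with $d_H(\x',\y')=d'$, the sum of pairwise distances over all $\binom{L}{2}$ pairs of multiset elements is at most $D(n,L,t,d')$. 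This holds either because the paper's upper bound on $D$ is proved verbatim for multisets, or because we interpret $D$ as defined over multisets. In the range $n\geq L(t-\lceil d/2\rceil)+d$, the value is unchanged, since the extremal construction already uses distinct reads.

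Next, form the multiset $Z'$ consisting of $\ell_i$ copies of $\z_i$ for each $i$. Pairs of copies of the same read contribute $0$, and each of the $\ell_i\ell_j$ pairs formed from a copy of $\z_i$ and a copy of $\z_j$ contributes $d_H(\z_i,\z_j)$. Hence the pairwise-distance sum of $Z'$ is exactly $\sum_{i<j}\ell_i\ell_j d_H(\z_i,\z_j)$.

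Now argue by contradiction, as in Theorem~\ref{thm:sufficient}. Suppose some codeword $\y\neq \x$ satisfies $\{\z_1,\dots,\z_m\}\subseteq B_t(\x)\cap B_t(\y)$. Then $Z'$ also lies in this intersection, and by the multiset bound the weighted sum is at most $D(n,L,t,d_H(\x,\y))$. By Lemma~\ref{lem:monotone}, applied with $L$ in place of $m$, we have $D(n,L,t,d')\leq D(n,L,t,2\lceil d/2\rceil)$ for all $d'\geq d$. This contradicts the hypothesis, so $\x$ is the unique codeword whose ball contains all the reads.

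The main obstacle is justifying the multiset extension of $D$ together with the monotonicity over $d'$. Lemma~\ref{lem:monotone} is stated only for $n\geq L(t-\lceil d/2\rceil)+d$, so I would either assume this range implicitly (as in Theorem~\ref{thm:main}) or remark that it is needed. For reconstructing $\x$ efficiently, one applies Algorithm~\ref{alg:reconstruct} to $Z'$; the proof of Lemma~\ref{lem:tau} again uses only the $c_k$ counts, so it goes through for multisets unchanged.
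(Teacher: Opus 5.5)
Your proposal is correct and is essentially the paper's proof. You replace the reads by the multiset with $\ell_i$ copies of $\z_i$, observe that its pairwise-distance sum is $\sum_{i<j}\ell_i\ell_j d_H(\z_i,\z_j)$, and invoke Theorem~\ref{thm:main} and Algorithm~\ref{alg:reconstruct} with $\sum_i\ell_i$ reads, whose proofs never use distinctness.

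You are more explicit than the paper about the implicit requirement $n\geq \left(\sum_i\ell_i\right)\left(t-\lceil d/2\rceil\right)+d$ and about reading $D$ over multisets. That reading, equivalently the closed-form value of Theorem~\ref{thm:D(n, m, t, d)}, is the one to rely on: your side remark that the extremal construction uses distinct reads fails in the degenerate case $t=\lceil d/2\rceil$, where the construction repeats reads.
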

\begin{IEEEproof}
     It can be observed that the proofs of Theorem~\ref{thm:main} and Algorithm~\ref{alg:reconstruct} do not require the reads to be distinct. Consequently, Algorithm~\ref{alg:reconstruct} can also reconstruct the codeword from reads with repetitions, provided that the total pairwise distance among the reads exceeds the threshold specified in Theorem~\ref{thm:main}.
     Consider the reads $\{\z_1, \ldots, \z_m\}$ as a multiset
   $$\{\{\underbrace{\z_1, \ldots , \z_1}_{\ell_1}, \underbrace{\z_2, \ldots , \z_2}_{\ell_2}, \ldots, \underbrace{\z_m, \ldots , \z_m}_{\ell_m}\}\}.$$

   The total summation of the pairwise distance is now $$ \sum_{1\leq i < j \leq m} \ell_i\ell_jd_H\left(\z_i,\z_j\right) \geq D\left(n,\sum_{1\leq i \leq m} \ell_i,t,2\left\lceil\frac{d}{2}\right\rceil\right)+1,$$
   and thus it is a triggering multiset and we can apply Algorithm~\ref{alg:reconstruct} to reconstruct $\x$.
\end{IEEEproof}

Look back on Example \ref{emp:rep}. While we have interpreted it as an example that repeated reads could constitute a triggering multi-set, one may have observed that another explanation is that the first three reads already constitute a triggering set of size three. Given a set of $M$ distinct reads which is not a triggering set itself, either we may try to find a smaller triggering set as in Subsection \ref{subsec:more reads}, or we may try to artificially repeat some of the reads to find a triggering multi-set. It is not obvious which way performs better, and we conjecture that if one way works then the other way will also work.

\section{Further discussions on equivalent conditions of unique reconstruction}\label{Sec:Dis}

Levenshtein's reconstruction threshold and our framework in this paper are two sufficient conditions for unique reconstruction. It is natural to ask if we can characterize a both sufficient and necessary condition, i.e., an equivalent condition, for unique reconstruction.

Consider the case when $\mathcal{C}=\Sigma_2^n$. Suppose there are only
two reads $\{\z_1,\z_2\}\subseteq B_t(\x)$. One can find many candidates $\y$ as follows. For all the coordinates that $\z_1$ and $\z_2$ agree, let $\y$ also have the same symbol. For all the coordinates that $\z_1$ and $\z_2$ differ, let $\y$ have the same symbols as $\z_1$ on a random subset of half of the coordinates and let $\y$ have the same symbols as $\z_2$ on the other half. Then it is routine to check that $\{\z_1,\z_2\}\subseteq B_t(\y)$. Therefore,
with only two reads $\{\z_1,\z_2\}$, one can never guarantee unique reconstruction.

For $m=3$ and $\mathcal{C}=\Sigma_2^n$, recall that we have discussed this case earlier in Lemma \ref{lem:3reads} as a toy example. Next we present an equivalent condition for unique reconstruction for $m=3$ and $\mathcal{C}=\Sigma_2^n$.

\begin{theorem}\label{thm:3equiv}
For any $\x\in\Sigma_2^n$ and $\{\z_1,\z_2,\z_3\}\subseteq B_t(\x)$, $\x$ can be uniquely reconstructed from $\{\z_1,\z_2,\z_3\}$ if and only if the multiset of the pairwise distance among the three reads is $\{\{2t,2t,2t\}\}$ or $\{\{2t,2t-1,2t-1\}\}$.
\end{theorem}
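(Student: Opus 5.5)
The plan is to normalize $\x=0^n$, so each $\z_i$ is identified with its support $S_i$, where $|S_i|\le t$ and $d_H(\z_i,\z_j)=|S_i|+|S_j|-2|S_i\cap S_j|$. Since $\mathcal{C}=\Sigma_2^n$, unique reconstruction means there is no $\y\neq\x$ with $\{\z_1,\z_2,\z_3\}\subseteq B_t(\y)$. Both directions reduce to a small case analysis on which $S_i$ are \emph{full}, i.e.\ have $|S_i|=t$.

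\emph{Sufficiency.} The key observation is that the distance hypothesis does not refer to $\x$. It therefore holds relative to any center $\y$ whose ball contains the three reads. I will first show that, for such a center, the hypothesis forces every coordinate to have at least two reads agreeing with the center.
\begin{itemize}
\item For $\{\{2t,2t,2t\}\}$ this is exactly the argument of Lemma~\ref{lem:3reads}.
\item For $\{\{2t,2t-1,2t-1\}\}$, say $d_H(\z_1,\z_2)=2t$. Then $S_1,S_2$ are disjoint of size $t$. From $t+|S_3|-2|S_1\cap S_3|=2t-1$ and $|S_3|\le t$, a parity and size argument gives $|S_3|=t-1$ and $S_3\cap S_1=\varnothing$. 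Likewise $S_3\cap S_2=\varnothing$.
\end{itemize}
Consequently every center $\y$ equals the coordinatewise majority of the three reads. This majority is unambiguous and independent of $\y$, so $\y=\x$.

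\emph{Necessity.} Suppose the distance multiset is neither of the two listed ones. I will exhibit $\y\neq\x$ obtained from $\x$ by flipping one or two coordinates. Flipping a set $K$ changes $d_H(\z_i,\y)$ by $|K\setminus S_i|-|K\cap S_i|$, and this must be at most $t-|S_i|$. The cases are as follows.
\begin{itemize}
\item If no $S_i$ is full, I will flip any single coordinate.
\item If exactly one, say $S_1$, is full, I will flip any $k\in S_1$; the other reads have slack $1$.
\item If $S_1,S_2$ are full and $S_3$ is not, and $S_1\cap S_2\neq\varnothing$, I will flip a common coordinate.
\item If instead $S_1,S_2$ are disjoint, I will flip $k\in S_1$ and $l\in S_2$. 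Reads $\z_1,\z_2$ are unchanged. For $\z_3$, if $S_3$ meets $S_1\cup S_2$, I will choose $k$ or $l$ inside $S_3$, so its distance does not grow. Otherwise the change is $+2$, which is fine unless $|S_3|=t-1$ with $S_3$ disjoint from $S_1\cup S_2$; that is precisely the excluded $\{\{2t,2t-1,2t-1\}\}$ configuration.
\item If all three are full and some pair intersects, say $k\in S_1\cap S_2$, I will flip $k$ alone if $k\in S_3$. Otherwise I will flip $k$ together with some $l\in S_3$; each read then sees a net change of at most $0$.
\item The only remaining case is that all three are full and pairwise disjoint, which is the excluded $\{\{2t,2t,2t\}\}$ case.
\end{itemize}
Since $t\ge1$ each full $S_i$ is nonempty, so these choices are available. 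In every case $\y\ne\x$, so reconstruction is not unique.

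The main obstacle is making sure the necessity case split is exhaustive. In particular, the "two full, disjoint" subcase must be checked carefully, since it is exactly where the excluded $\{\{2t,2t-1,2t-1\}\}$ pattern sits. I also expect to double-check the parity bookkeeping, namely that $|S_3|=t$ is impossible when the distance is $2t-1$, so that the only surviving configurations are the two in the statement.
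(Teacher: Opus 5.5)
Your proposal is correct and follows essentially the paper's proof. Both normalize $\x=0^n$, exhibit an alternative center $\y$ at distance one or two from $\x$ in every non-listed configuration, and use majority voting in the two listed ones. Your split by the number of full supports is a reorganization of the paper's split (some pair of supports intersects / some support has size at most $t-2$ / two supports have size $t-1$). It is slightly more careful than the paper in two places: it handles possibly empty supports, and in the sufficiency direction it justifies uniqueness by applying the distance argument to an arbitrary center rather than only to $\x$.
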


\begin{IEEEproof}
Without loss of generality let $\x=0^n$. Let $S_1,S_2,S_3$ be the support sets of the three reads $\z_1,\z_2,\z_3$.

First, suppose two of the support sets have non-empty intersection, say $S_1$ and $S_2$. In this case, we can pick any coordinate $i\in S_1\cap S_2$ and $j\in S_3$. Let $\y$ be the sequence with support set $\{i,j\}$. Then it holds that $\{\z_1,\z_2,\z_3\}\subseteq B_t(\y)$ and thus unique reconstruction is not possible.

Therefore, to guarantee unique reconstruction, the three support sets must be pairwise disjoint. Next, suppose that one of the support sets, say $S_3$, is of size $|S_3|\leq t-2$. In this case, we can pick any coordinate $i\in S_1$ and $j\in S_2$. Let $\y$ be the sequence with support set $\{i,j\}$. Again it holds that $\{\z_1,\z_2,\z_3\}\subseteq B_t(\y)$ and thus unique reconstruction is not possible. Thus each support set is of size at least $t-1$.

Finally, suppose that two of the support sets, say $S_2$ and $S_3$, have size exactly $t-1$. In this case, pick any coordinate $i\in S_1$. Let $\y$ be the sequence with support set $\{i\}$. Again it holds that $\{\z_1,\z_2,\z_3\}\subseteq B_t(\y)$ and thus unique reconstruction is not possible.

To sum up, unique reconstruction is impossible unless $S_1,S_2,S_3$ are disjoint and the sizes of these three sets are either $\{\{t,t,t\}\}$ or $\{\{t,t,t-1\}\}$. From the perspective of a decoder, it means that the multiset of the pairwise distance among the three reads is $\{\{2t,2t,2t\}\}$ or $\{\{2t,2t-1,2t-1\}\}$. For these two cases a decoder can simply use a majority-vote for unique reconstruction.
\end{IEEEproof}

\begin{remark}\label{rmk:diff}
Consider $\x=0^n$ and three reads $\{\z_1,\z_2,\z_3\}\subseteq B_t(\x)$ with support sets $S_1,S_2,S_3$. Let all three sets have size exactly $t$. Suppose $|S_1\cap S_2|=1$ and $S_3$ is disjoint from the other two. The pairwise distance of three reads are $\{\{2t,2t,2t-2\}\}$ and the summation is $6t-2$. In the proof of Theorem \ref{thm:3equiv}, we have mentioned that this case cannot guarantee unique reconstruction due to the non-empty intersection of $S_1$ and $S_2$. However, the uniquely reconstructible case with pairwise distance $\{\{2t,2t-1,2t-1\}\}$ also has summation $6t-2$. This phenomenon indicates that we cannot simply have a unique reconstruction condition in the form of ``the summation of the pairwise distance of the reads exceeds a certain threshold".
\end{remark}

What about the next case $m=4$ for $\mathcal{C}=\Sigma_2^n$? We have the following facts:
\begin{itemize}
    \item Fact 1: Our reconstruction condition in Theorem \ref{thm:main} states that for four reads, a sufficient condition for unique reconstruction is that the summation of their pairwise distance is at least $D(n,4,t,2)+1=12t-3$.
    \item Fact 2: When the summation is $12t-4$, we can construct a case which does not guarantee unique reconstruction. Consider $\x=0^n$ and four reads $\{\z_1,\z_2,\z_3,\z_4\}\subseteq B_t(\x)$ with support sets $S_1,S_2,S_3,S_4$, all of size exactly $t$. Let $|S_1\cap S_2|=1$, $|S_3\cap S_4|=1$, and $S_1\cup S_2$ be disjoint from $S_3\cup S_4$. In this case the summation of the pairwise distance is $12t-4$. Pick the coordinate $i\in S_1\cap S_2$ and $j\in S_3\cap S_4$, and let $\y$ be the sequence with support set $\{i,j\}$. Then it holds that $\{\z_1,\z_2,\z_3,\z_4\}\subseteq B_t(\y)$ and thus unique reconstruction is not possible.
    \item Fact 3: A uniquely reconstructible case could have a distance summation as small as $10t$. $S_1,S_2,S_3$ are of size $t$ and mutually disjoint (and thus these three reads already guarantees unique reconstruction), and $S_4$ has a $t/3$-intersection with all the other three.
\end{itemize}

Considering these facts, it seems very difficult to characterize the equivalent condition for unique reconstruction, even for the case $m=4$. In particular, when the summation of pairwise distance among the four reads is in the range $[10t,12t-4]$, determining whether unique reconstruction is possible may require a complicated case-by-case analysis.

\section{Conclusion}\label{sec:concl}
In this paper, we prose a new framework for unique sequence reconstruction for the substitution channel. Our new sufficient condition takes both the number of reads and the summation of pairwise distance among the reads into consideration. We discuss how to find a set of reads from all available reads which can trigger our sufficient condition, and offer  an efficient corresponding reconstruction algorithm when a triggering set is given. The following problems are considered for future research:
\begin{itemize}
  \item Determine the exact value of $D(n,m,t,d)$ when $n$ is less than
  $m\left(t-\left\lceil\frac{d}{2}\right\rceil\right)+d$.
  \item Find a more efficient way to identify a triggering set from a large set of reads.
  \item Analyze more sufficient conditions for unique reconstruction. In particular, analyze equivalent conditions for unique reconstruction starting from the case $m=4$ and $\C=\Sigma_2^n$.
  \item Analyze sufficient conditions for unique reconstruction in other channels such as deletion and insertion channels.
\end{itemize}

\appendices
\section{Probability for triggering the unique reconstruction conditions}

In the appendices we present the proofs of the probabilistic arguments throughout the paper. 

\begin{IEEEproof}[Proof of Lemma \ref{lem:prob}]
    Without loss of generality, let $\x = 0^n$ and $S_i$ be the support of $\z_i$. Note that $\z_i\in B_t(\x)$,
    $$d_H(\z_1, \z_2) = d_H(\z_2, \z_3) = d_H(\z_3, \z_1) = 2t$$
    if and only if $|S_1|=|S_2|=|S_3|=t$ and $S_1, S_2, S_3$ are pairwise disjoint.

    Now we consider the possible choices of $\{\z_1, \z_2, \z_3\}$ such that $d_H(\z_1, \z_2) = d_H(\z_2, \z_3) = d_H(\z_3, \z_1) = 2t$. It can be seen that the choices of $S_1$ is $\binom{n}{t}$. As $S_2\cap S_1 = \varnothing$, the choices of $S_2$ is $\binom{n - t}{t}$. Similarly, the choices of $S_3$ is $\binom{n - 2t}{t}$. As the order of $\z_1, \z_2$, and $\z_3$ does not influence the set $\{\z_1, \z_2, \z_3\}$, the total choices of set $\{\z_1, \z_2, \z_3\}$ such that $d_H(\z_1, \z_2) = d_H(\z_2, \z_3) = d_H(\z_3, \z_1) = 2t$ is $\frac{\binom{n}{t}\binom{n-t}{t}\binom{n-2t}{t}}{3!}$. As $|B_t(\x)| = \sum_{i = 0}^t\binom{n}{i}$, it follows that
    {\small\begin{align*}
        \mathrm{Pr}\left[d_H\left(\z_1, \!\z_2\right)\! =\! d_H\!\left(\z_2,\! \z_3\right)\! =\! d_H\left(\z_3, \!\z_1\right) \!=\! 2t\right] \!= \!\frac{\binom{n}{t}\!\binom{n-t}{t}\!\binom{n-2t}{t}}{3! \binom{\sum_{i = 0}^t\binom{n}{i}}{3}}.
    \end{align*}}

    Now we estimate the probability. It can be seen that \begin{align*}
        \binom{n}{t} &= \frac{n(n - 1)\ldots(n - t + 1)}{t!} \\
        &= \frac{n^t - \sum_{i=0}^{t - 1}in^{t - 1} + \Theta(n^{t - 2})}{t!} \\
        &= \frac{n^t - \frac{t(t - 1)}{2}n^{t - 1} + \Theta(n^{t - 2})}{t!} \\
        &= \frac{n^t}{t!} - \frac{\left(t-1\right)n^{t-1}}{2\left(t-1\right)!}+\Theta\left(n^{t-2}\right).
    \end{align*}

    Similarly, we have:
    \begin{align*}
        \binom{n-t}{t} &= \frac{n^t}{t!} - \frac{\left(3t-1\right)n^{t-1}}{2\left(t-1\right)!}+\Theta\left(n^{t-2}\right),\\
        \binom{n-2t}{t} &= \frac{n^t}{t!} - \frac{\left(5t-1\right)n^{t-1}}{2\left(t-1\right)!}+\Theta\left(n^{t-2}\right),
    \end{align*}
    and
    $$\sum_{i = 0}^t\binom{n}{i} = \frac{n^t}{t!} - \frac{\left(t-3\right)n^{t-1}}{2\left(t-1\right)!}+\Theta\left(n^{t-2}\right).$$

    Therefore, it follows that 
    \begin{align*}
        &\binom{n}{t}\binom{n-t}{t}\binom{n-2t}{t} \\= &\left(\frac{n^t}{t!} - \frac{\left(t-1\right)n^{t-1}}{2\left(t-1\right)!}+\Theta\left(n^{t-2}\right)\right)\\
        &\cdot\left(\frac{n^t}{t!} - \frac{\left(3t-1\right)n^{t-1}}{2\left(t-1\right)!}+\Theta\left(n^{t-2}\right)\right)\\
        &\cdot\left(\frac{n^t}{t!} - \frac{\left(5t-1\right)n^{t-1}}{2\left(t-1\right)!}+\Theta\left(n^{t-2}\right)\right)\\
        =&\frac{n^{3t}}{\left(t!\right)^3} - \frac{\left(9t-3\right)n^{3t-1}}{2\left(t!\right)^2\left(t-1\right)!} + \Theta\left(n^{3t-2}\right),
    \end{align*}
    and 
    \begin{align*}
        3!\binom{\sum_{i = 0}^t\binom{n}{i}}{3} &= 3!\binom{\frac{n^t}{t!} - \frac{\left(t-3\right)n^{t-1}}{2\left(t-1\right)!}+\Theta\left(n^{t-2}\right)}{3} \\
        &= \left(\frac{n^t}{t!} - \frac{\left(t-3\right)n^{t-1}}{2\left(t-1\right)!}+\Theta\left(n^{t-2}\right)\right)^3 \\
        &= \frac{n^{3t}}{\left(t!\right)^3} - \frac{3\left(t-3\right)n^{3t-1}}{2\left(t!\right)^2\left(t-1\right)!} + \Theta\left(n^{3t-2}\right).
    \end{align*}

    In total, it holds that
    \begin{align*}
        &\mathrm{Pr}\left[d_H\left(\z_1, \z_2\right) = d_H\left(\z_2, \z_3\right) = d_H\left(\z_3, \z_1\right) = 2t\right]\\
        =&\frac{\frac{n^{3t}}{\left(t!\right)^3} - \frac{\left(9t-3\right)n^{3t-1}}{2\left(t!\right)^2\left(t-1\right)!} + \Theta\left(n^{3t-2}\right)}{\frac{n^{3t}}{\left(t!\right)^3} - \frac{3\left(t-3\right)n^{3t-1}}{2\left(t!\right)^2\left(t-1\right)!} + \Theta\left(n^{3t-2}\right)}\\
        =&1-\frac{\frac{\left(6t+6\right)n^{3t-1}}{2\left(t!\right)^2\left(t-1\right)!} + \Theta\left(n^{3t-2}\right)}{\frac{n^{3t}}{\left(t!\right)^3} - \frac{3\left(t-3\right)n^{3t-1}}{2\left(t!\right)^2\left(t-1\right)!} + \Theta\left(n^{3t-2}\right)}\\
        =& 1-\frac{6t\left(t+1\right)+o\left(1\right)}{2n-3t\left(t-3\right)+o\left(1\right)},
    \end{align*}
    and thus the probability to trigger the the unique reconstruction condition of Lemma \ref{lem:3reads} is $1 -\Theta(n^{-1})$.
\end{IEEEproof}

\begin{IEEEproof}[Proof of Lemma \ref{lem:prob_m}]
    It is easy to verify that $D(n, m, t, 2\left\lceil \frac{d}{2} \right\rceil) + 1\leq m(m - 1)t$. Without loss of generality, we assume that $\x = (0, 0, \ldots, 0)$. It can be calculated that
        \begin{align*}
		&\mathrm{Pr}\left[\sum_{i<j}d_H\left(\z_i, \z_j\right) = m\left(m-1\right)t\right] \\
        = &\frac{\frac{1}{m!}\prod_{i = 0}^{m - 1}\binom{n - it}{t}}{\binom{\binom{n}{t}}{m}} \\
		\geq &  \frac{\frac{1}{m!}\prod_{i = 0}^{m - 1}\binom{n - it}{t}}{\frac{1}{m!}\binom{n}{t}^m}\\
		= & \frac{\prod_{i = 0}^{m - 1}\binom{n - it}{t}}{\binom{n}{t}^m}\\
		\geq & \prod_{i = 0}^{m - 1}\prod_{j = 0}^{t - 1}\frac{n - it - j}{n}\\
        = & \frac{n^{m} - \frac{mt(mt-1)}{2}n^{m-1} + o(n^{m - 1})}{n^m}\\
        = & 1 - \Theta\left(n^{-1}\right).
	\end{align*} 
    where the first inequality follows from $\frac{n^t}{t!}\geq \binom{n}{t}$.
\end{IEEEproof}

\bibliographystyle{IEEEtran}
\bibliography{ref1}

\begin{IEEEbiographynophoto}{Chen Wang} received the B.A. and M.S. degrees in Mathematics from the University of Science and Technology of China, Hefei, Anhui, China, in 2019 and 2021, respectively, and the Ph.D. degree from Shandong University, Qingdao, Shandong, China, in 2025. He is currently with the Department of Computer Science, Technion --- Israel Institute of Technology. His research interests include combinatorics, DNA storage, and private information retrieval.
\end{IEEEbiographynophoto} 

\begin{IEEEbiographynophoto}{Eitan Yaakobi} (S'07--M'12--SM'17) is a Professor at the Computer Science Department at the Technion --- Israel Institute of Technology. He also holds a courtesy appointment in the Technion's Electrical and Computer Engineering (ECE) Department. He received the B.A. degrees in computer science and mathematics, and the M.Sc. degree in computer science from the Technion --- Israel Institute of Technology, Haifa, Israel, in 2005 and 2007, respectively, and the Ph.D. degree in electrical engineering from the University of California, San Diego, in 2011. Between 2011-2013, he was a postdoctoral researcher in the department of Electrical Engineering at the California Institute of Technology and at the Center for Memory and Recording Research at the University of California, San Diego. His research interests include information and coding theory with applications to non-volatile memories, associative memories, DNA storage, data storage and retrieval, and private information retrieval. He received the Marconi Society Young Scholar in 2009 and the Intel Ph.D. Fellowship in 2010-2011. Between 2020 and 2023, he served as an Associate Editor for Coding and Decoding for the \textsc{IEEE Transactions on Information Theory} and since October 2024 he serves as an Associate Editor for the \textsc{IEEE Transactions on Molecular, Biological, and Multi-Scale Communications}. Since 2016, he is affiliated with the Center for Memory and Recording Research at the University of California, San Diego, and between 2018--2022, he was affiliated with the Institute of Advanced Studies, Technical University of Munich, where he held a four-year Hans Fischer Fellowship, funded by the German Excellence Initiative and the EU 7th Framework Program. Between August 2023 and January 2024, he was a Visiting Associate Professor at the School of Physical and Mathematical Sciences at Nanyang Technological University. Since 2024 he is a member of the Israel Young Academy. He is a recipient of several grants, including the ERC Consolidator Grant and the EIC Pathfinder Challenge. 
\end{IEEEbiographynophoto}

\begin{IEEEbiographynophoto}{Yiwei Zhang} received the B.A. and Ph.D. degrees in mathematics from Zhejiang University, Hangzhou, Zhejiang, China, in 2011 and 2016, respectively. From 2016 to 2017, he was a Post-Doctoral Researcher with the School of Mathematical Sciences, Capital Normal University, Beijing, China. From 2017 to 2019, he was a Post-Doctoral Researcher with the Department of Computer Science, Technion-Israel Institute of Technology, Haifa, Israel. He is currently a Professor with the School of Cyber Science and Technology, Shandong University, Qingdao, Shandong, China. He is also with State Key Laboratory of Cryptography and Digital Economy Security, the Key Laboratory of Cryptologic Technology and Information Security, Ministry of Education, Shandong University. His current research interests include coding theory and information security.
\end{IEEEbiographynophoto}

\end{document}